\title{A Dynamic Piecewise-linear Geometric Index with Worst-case Guarantees} %
\author{Emil Toftegaard Gæde}{Technical University of Denmark, Denmark}{etoga@dtu.dk}{https://orcid.org/0009-0001-9462-6359}{}
\author{Ivor van der Hoog}{IT University of Copenhagen, Denmark}{ivva@itu.dk}{https://orcid.org/0009-0006-2624-0231}{}
\author{Eva Rotenberg}{IT University of Copenhagen, Denmark}{erot@itu.dk}{https://orcid.org/0000-0001-5853-7909}{}
\author{Tord Stordalen}{Technical University of Denmark}{}{}{}
\authorrunning{E. T. Gæde, I. van der Hoog, E. Rotenberg, and T. Stordalen} %TODO mandatory. First: Use abbreviated first/middle names. Second (only in severe cases): Use first author plus 'et al.'
\keywords{Algorithms Engineering, Data Structures, Indexing, Convex Hulls}  %TODO mandatory; please add comma-separated list of keywords
\newcommand{\CH}{\textnormal{CH}}
\newcommand{\slope}{\textnormal{slope}}
\newcommand{\eps
}{\varepsilon}
\begin{document}

\maketitle

\begin{abstract}
Indexing data is a fundamental problem in computer science. 
The input is a set $S$ of $n$ distinct integers from a universe $\mathcal{U}$.
Indexing queries take a value $q \in \mathcal{U}$ and return the \texttt{membership}, \texttt{predecessor} or \texttt{rank} of $q$ in $S$. 
A \texttt{range} query takes two values $q, r \in \mathcal{U}$ and returns the set $S \cap [q, r]$.

Recently, various papers study a special case where the the input data behaves in an approximately piece-wise linear way. 
Given the sorted (rank,value) pairs, 
and given some constant $\varepsilon$, one wants to maintain a small number of axis-disjoint line-segments such that, for each rank, the value is within $\pm \varepsilon$ of the corresponding line-segment. 
Ferragina and Vinciguerra (VLDB 2020) observe that this geometric problem is useful for solving indexing problems, particularly when the number of line-segments is small compared to the size of the  dataset. 

We study the dynamic version of this geometric problem. In the dynamic setting, inserting or deleting just one data point may cause up to three line-segments to be merged, or one line-segment to be split at most three-way. To determine and compute this, we use techniques from dynamic maintenance of convex hulls, and provide new algorithms with worst-case guarantees, including an $O(\log n)$ algorithm to compute a separating line between two non-intersecting convex hulls -- an operation previously missing from the literature.

We then use our fully-dynamic geometry-based subroutine in an indexing data structure, combining it with a natural hashing technique. 
The resulting indexing data structure has theoretically efficient worst-case guarantees in expectation.
We compare its practical performance to the solution of Ferragina and Vinciguerra, which was shown to perform better in certain structured settings [Sun, Zhou, Li VLDB 2023]. 
%
%We combine our geometric data structure with a natural hashing technique to convert learned index into an indexing data structure, with worst-case (expected) guarantees. 
Our empirical analysis shows that our solution supports more efficient range queries in the special case where the update sequence contains many deletions.

\end{abstract}

\newpage

\section{Introduction}
We investigate the use of \emph{learned indices} for the design of \emph{dynamic indexing data structures}.

\subparagraph{Indexing data structures.}
An indexing data structure maintains a set $S$ of $n$ distinct integers from a universe $\mathcal{U}$.
Let $\textsc{RANK} : S \rightarrow [n]$ be the function mapping each $s \in S$ to its index in the sorted order of $S$.
The objective is to support the following indexing queries:

\begin{itemize}
    \item \texttt{member($q$)} returns \texttt{true} if $q \in S$.
    \item \texttt{predecessor($q$)} returns $\max \{ t \in S \mid t < q \}$. \hfill \textnormal{(We allow $q \notin S$.)}
    \item \texttt{rank($q$)} returns $\textsc{RANK}(\texttt{predecessor}(q)) + 1$. \hfill \textnormal{(We allow $q \notin S$.)}
\end{itemize}

\noindent
Additionally, we consider \emph{range queries}, where $k$ denotes the output size:
\begin{itemize}
    \item \texttt{range($q$, $t$)} returns $S \cap [q, t]$. \hfill \textnormal{(We allow $q, t \notin S$.)}
\end{itemize}

\noindent
Static indexing data structures fall into three broad categories: \emph{Tree-based solutions} store $S$ in a sorted array $A$, requiring no additional space but incurring logarithmic query costs. Tree traversals enable \texttt{predecessor}, \texttt{rank}, and \texttt{member} queries in $O(\log n)$ time and range queries in $O(\log n + k)$ time~\cite{athanassoulis2014bf, bender2000cache, puatracscu2006time, wang2018building}.
\emph{Map-based solutions} store $S$ in sorted order and maintain a hash map $H : S \rightarrow [n]$ mapping each element to its rank~\cite{chan1998bitmap, koudas2000space, pagh2004cuckoo}. This enables constant-time support for \texttt{member}, \texttt{predecessor}, and \texttt{rank} queries if queries are restricted to elements of $S$, and $O(k)$ time for \texttt{range} queries when both endpoints lie in $S$. The additional space is $O(n)$.
A third category use what are called \emph{learned indices}, a recently introduced term~\cite{kraska2018case, galakatos2019fiting, ferragina2020pgm, kipf2020radixspline, ding2020alex}. Given an integer parameter $\varepsilon$, a learned index is a function $h_\varepsilon : \mathcal{U} \rightarrow [0, n]$ such that
\[
    h_\varepsilon(q) \in [\texttt{rank}(q) - \varepsilon, \texttt{rank}(q) + \varepsilon].
\]
The function $h_\varepsilon$ is learned from $S$ and used to guide search in a sorted array $A$ storing $S$. Ferragina and Vinciguerra~\cite{ferragina2020pgm} interpret $h_\varepsilon$ geometrically: each $s \in S$ maps to a point $(\textsc{RANK}(s), s)$ in the plane, and $h_\varepsilon$ is learned as a piecewise-linear approximation to~$F_S$.

A notable instance of learned indices is the \emph{PGM index}~\cite{ferragina2020pgm}, where $h_\varepsilon$ is a $y$-monotone piecewise-linear function made of segments, with the property that each point in $F_S$ lies within an $\varepsilon$-wide horizontal strip around some segment. Let $|h_\varepsilon|$ denote the number of segments. The data structure supports indexing queries in $O(\varepsilon + \log |h_\varepsilon|)$ time and range queries in $O(\varepsilon + k + \log |h_\varepsilon|)$ time. They also show how to construct a PGM index in linear time, such that there exists no PGM index $h_\varepsilon'$ with $|h_\varepsilon|> \frac{3}{2}|h_\varepsilon'|$.
Ferragina and Vinciguerra argue that learned indices are the `best of both worlds' since:

\begin{itemize}
    \item the supported queries are as general as those supported by tree-based solutions, 
    \item the solution uses only $O(|h_\varepsilon|)$ additional space, and
    \item  $O(\eps + \log |h_\varepsilon|)$ is, for an appropriate choice of $\eps$, efficient in practice. 
\end{itemize}

\noindent
Their performance has been empirically benchmarked in several studies~\cite{ferragina2020learned, kraska2018case, wongkham2022updatable, wang2018building}.

\begin{figure}[h]
    \centering
    \includegraphics[width = \linewidth]{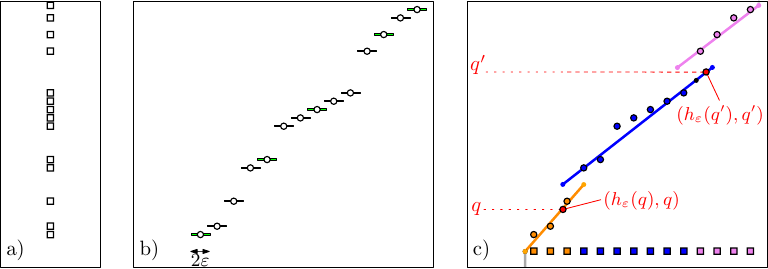}
    \caption{(a) a set of $n$ values $S$. (b)  $S$ corresponds to an $xy$-monotone point set $F_S$. (c) The PGM index computes a  $y$-monotone set of segments that starts and ends with a vertical halfline.      }
    \label{fig:PGM_example}
\end{figure}

\subparagraph{Dynamic indexing data structures.} 
Due to their fundamental role, dynamic indexing structures have received extensive theoretical and practical attention. 
When $S$ is dynamic, maintaining a sorted array becomes inefficient. Tree-based structures can be updated in $O(\log n)$ time with tree rotations. Map-based approaches allow constant-time \texttt{member} updates but are typically not extended to support other indexing queries.
Learned indices offer a promising direction by exploiting structural properties of $S$, akin to parametrised algorithms. However, just as parametrised algorithms, data structures based on learned indices are not always efficient: if $S$ lacks exploitable structure or access patterns are skewed, traditional indexing data structures are preferred~\cite{Sun23survey, marcus2020benchmark}. Experimental studies have examined properties of learned indices~\cite{ferragina2021performance,ferragina2020effective,marcus2020benchmark, Sun23survey}, in an effort to classify when they are appropriate to use. As a brief summary, learned indices pay a price in updates~\cite{marcus2020benchmark}, and traditional indices are preferred if $S$ or the access pattern is complex or skewed, or if concurrency is possible~\cite{Sun23survey}. If there is sufficient structure, both space usage and access times can benefit from learned indices~\cite{ferragina2020effective,ferragina2021performance}, subject to the strategy employed by the learned index. 

\subparagraph{Dynamic learned indices through the logarithmic method.}
The logarithmic method of Overmars~\cite{overmars1983design} provides an amortised way to maintain learned indices. $S$ is partitioned into $\lceil \log n \rceil$ buckets $B_i$, each of size $2^i$. Each bucket is either full or empty, and stores its contents in an array $A_i$ in sorted order and maintains a learned index $h_{\varepsilon}^i$ over $A_i$. 

Let the learned index $h_\varepsilon$ have a construction time of  $T(n)$. 
This data structure can be maintained insertion-only in amortised $O(T(n) \log n)$ time. An insertion inserts a new value into $B_0$. 
Let $j$ be the maximum integer such that all $B_i$ for $i \in [0, j - 1]$ are full.
This approach empties these buckets, fills $B_{j}$ in sorted order, and constructs $(A_j, h_{\varepsilon}^j)$ in $O(T(2^{j}))$ time. 
Whenever we delete some $s \in S$, this approach instead inserts a \emph{tombstone} $s^*$, which is a special copy of $s$. 
If an insertion fills a new bucket $B_{j}$, it first iterates over all elements. 
If $B_{j}$ contains both $s$ and $s^*$, it removes both elements. 
It then constructs $(A_{j}, h_{\varepsilon}^j)$ twice. Once on all `normal' values, and once on all tombstones in $B_{j}$.  
This way, deletions take the same time as insertions do. 
In this paper, we consider the following open question:

\begin{quote}
    ``Can a learned index be dynamically be maintained with worst-case guarantees?''
\end{quote}

\subparagraph{Intermezzo: computing a line cover.}
The interpretation of learned indices by Ferragina and Vinciguerra~\cite{ferragina2020pgm} translates to a geometric problem where the goal is to (approximately) cover a monotone set of two-dimensional points by a set of line segments. Using the logarithmic method, and the static  algorithm of O'Rourke~\cite{o1981line} to approximately points, they dynamically maintain an \emph{$\varepsilon$-cover}: a set of lines that are guaranteed to be within an $\varepsilon$ horizontal distance from each point.
We consider the problem of maintaining dynamic $\varepsilon$-covers to be an interesting geometric problem in its own right. 

\subparagraph{From learned indices to indexing data structures.}
Under the logarithmic method, indexing queries decompose naturally across the buckets:

\begin{itemize}
    \item For $\textnormal{\texttt{member}}(q)$, $q \in S$ if and only if there exists an $i \in [ \lceil \log n \rceil]$ with $q \in B_i$.
    \item For $\textnormal{\texttt{predecessor}}(q)$, the output is the maximum predecessor across $B_i$ for $i \in [ \lceil \log n \rceil]$.
    \item For $\textnormal{\texttt{rank}}(q)$, the rank is the sum of all ranks of $q$ in $B_i$ for $i \in [ \lceil \log n \rceil]$.
    \item For $\textnormal{\texttt{range}}(q, t)$, the reported range is the union of all ranges in $B_i$ for $i \in [ \lceil \log n \rceil]$.
\end{itemize}

\noindent
This way, indexing queries require only an additional factor $O(\log n)$ time. 
Indexing queries can be answered by combining queries to both the normal and tombstone structures. E.g., $\textnormal{\texttt{rank}}(q)$ is the rank of $q$ in the `normal' data structure minus the rank of $q$ in the tombstone structure.  This approach has two downsides:

\begin{itemize}
    \item  First, approach has an amortised update time.
    \item Second, this approach does not support output-sensitive range queries -- as there may be $O(n)$ values $s$ that (together with their tombstones $s^*$) lie in between a query pair $(q, t)$. 
\end{itemize}

\noindent
This leads to the following open question:

\begin{quote}
    ``Can a dynamic learned index be converted into an output-sensitive dynamic indexing data structure with worst-case guarantees?''
\end{quote}

\subparagraph{Contribution and organization.}
We propose maintaining a dynamic $\eps$-cover (and thereby a dynamic learned index) via dynamic convex hull techniques. 
Section~\ref{sec:covered} shows that deciding whether $S$ admits an $\varepsilon$-cover of complexity 1 is equivalent to convex hull intersection testing (Figure~\ref{fig:intersection_test}).
Section~\ref{sec:robust} shows a robust algorithm to compute the intersection between two convex hulls in $O(\log n)$ time. We adapt our algorithm to output a separating line (which is a learned index of complexity $1$) in the negative case. 
Section~\ref{sec:learned} combines these with dynamic convex hull data structures to yield a dynamic learned index worst-case $O(\log^2 n)$ update time.
We empirically compare our learned index to the PGM index from~\cite{ferragina2020pgm}.

Section~\ref{sec:indexing} introduces a novel hashing-based approach to convert learned indices into dynamic indexing data structures, using $O(\varepsilon^{-1})$ additional expected overhead.
We compare our dynamic indexing structure to the amortised PGM index of~\cite{ferragina2020pgm} in terms of update time and index complexity. We do not benchmark against traditional indexing data structures 
-- since the relation between learned and traditional indices is previously studied~\cite{Sun23survey}. Instead, our goal is to push the theoretical limits and worst-case guarantees of learned indices.

\begin{figure}[h]
    \centering
    \includegraphics[width = \linewidth]{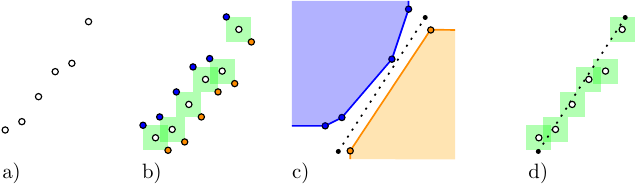}
    \caption{For any set $F_S$, we construct two convex hulls. We prove that there exists a segment $\ell$ within $L_\infty$-distance $\eps$ of all points in $F_S$ if and only if these hulls do not intersect. We adapt the convex hull intersection testing algorithm to find $\ell$ whenever these hulls are disjoint. }
    \label{fig:intersection_test}
\end{figure}

\newpage
\section{Preliminaries}

The input is a dynamic set $S$ of $n$ distinct positive integers from some universe $\mathcal{U}$.
For $a, b \in \mathbb{Z}$ with $a \leq b$, we define $S[a, b]$ as the set $S \cap [a, b]$.
We denote by $F_S$ the two-dimensional point set obtained by mapping each $s \in S$
to $(\textsc{RANK}(S), s)$. Throughout this paper, we distinguish between positions and strict positions. E.g., lying above or strictly above a line.

\begin{definition}[\cite{ferragina2020pgm}]
Let $\eps$ be a positive integer.
A \emph{PGM index} $h_\varepsilon$ of $S$ is defined as a $y$-monotone set of segments that together cover the $y$-axis.
We regard $h_\varepsilon$ as a map from $y$-coordinates to $x$-coordinates and require that for all $q \in \mathcal{U}$, $h_\varepsilon(q) \in [\textnormal{\texttt{rank}}(q) - \eps, \textnormal{\texttt{rank}}(q) + \eps]$.
\end{definition}

\noindent
Ferragina and Vinciguerra~\cite[Lemma~1]{ferragina2020pgm} wrongfully claim an $O(n)$-time algorithm to compute a minimum complexity PGM index $h_\varepsilon$. They invoke a streaming algorithm by O'Rourke~\cite{o1981line} for fitting straight lines through data ranges.  We show that this algorithm outputs a PGM index $h_\varepsilon$ such that there exists no PGM index $h_\varepsilon'$ with $|h_\varepsilon| > \frac{3}{2}h_\varepsilon'$ (see Appendix~\ref{app:orourke}). 
Their algorithm restricts $S$ to contain no duplicates. 
We assume the same setting and compute something slightly different as we define an $\eps$-cover instead:

 \begin{definition}
     Let $\eps$ be a positive integer. We define an $\eps$-cover $f$ of $S$ as a set of vertically separated segments with slope at least $1$ where all $(r, s) \in F_{S}$ are within $L_\infty$-distance $\eps$ of $f$. 
 \end{definition}

\noindent
 An $\eps$-cover has a functionality and complexity similar to a learned index:

 \begin{observation}
    \label{obs:cover_to_index}
     Let $f$ be an $\eps$-cover and $Q$ be a horizontal line with height $q \in [ \min S, \max S]$. Let $(s, t)$ be the segment in $f$ closest to $q$. Then $(line(s, t) \cap Q).x \in [\textnormal{\texttt{rank}}(q) - 2\eps, \textnormal{\texttt{rank}}(q) + 2\eps]$.
 \end{observation}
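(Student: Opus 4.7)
The plan is to translate the $\eps$-cover condition, stated in terms of $L_\infty$-distance to segments, into the desired horizontal bound on the line through the closest segment. Let $\ell$ denote the segment of $f$ closest to $q$, let $L$ be the line extending $\ell$, and let $x^* = (L \cap Q).x$; we want $|x^* - \textnormal{\texttt{rank}}(q)| \leq \eps$.

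Since $q \in [\min S, \max S]$, the predecessor $s_-$ and successor $s_+$ of $q$ both exist and yield two consecutive points of $F_S$, namely $p_- = (\textnormal{\texttt{rank}}(q) - 1, s_-)$ and $p_+ = (\textnormal{\texttt{rank}}(q), s_+)$, that bracket $q$ vertically (with the obvious degeneration if $q \in S$). The $\eps$-cover property places each of $p_-, p_+$ within $L_\infty$-distance $\eps$ of some segment of $f$. The first step I would carry out is to show that these segments can be taken to be $\ell$ itself, i.e.\ $L$ passes within $L_\infty$-distance $\eps$ of both $p_-$ and $p_+$. This uses the two restrictive ingredients in the definition of an $\eps$-cover: vertical separation says that at every height there is at most one ``responsible'' segment, while slope $\geq 1$ says that the vertical extent of each segment dominates its horizontal extent, so consecutive integer ranks are forced into the $\eps$-neighbourhood of the same segment near $q$.

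Once this is established, I would finish using the slope-$\geq 1$ condition as a Lipschitz bound along $L$: moving along $L$ from $y = s_-$ up to $y = q$ changes $x$ by $(q - s_-)/\sigma$, which is at most $q - s_-$, and symmetrically from $y = s_+$ down to $y = q$. Combining the two $\eps$-bounds at $p_-$ and $p_+$ with the unit spacing $p_+.x - p_-.x = 1$ of consecutive ranks then pinches $x^*$ into $[\textnormal{\texttt{rank}}(q) - \eps, \textnormal{\texttt{rank}}(q) + \eps]$.

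The main obstacle is the first step: formally justifying that the segment closest to $q$ is indeed the segment approximating both anchor points $p_\pm$. The subtle cases are when $q$ lies in a vertical gap between two segments of $f$, or when $p_-$ and $p_+$ are far apart vertically, since then a priori $p_-$ and $p_+$ might be served by different segments than $\ell$. Resolving this requires a short case analysis carefully combining vertical separation with slope $\geq 1$, and may in the gap case shift the bound by an additive constant that is absorbed into $\eps$.
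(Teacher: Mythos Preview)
The paper states this as an observation without proof, so there is nothing to compare your argument against directly; your anchor-and-Lipschitz approach is the natural way to justify it.

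That said, your pinching step does not close to the stated bound, and in fact the observation as literally written is not correct. For a line $L$ of slope $\sigma\geq 1$, having $L_\infty$-distance $\leq\eps$ from a point $p$ only forces the \emph{horizontal} distance to be at most $(1+\tfrac{1}{\sigma})\eps\leq 2\eps$, not at most $\eps$. Concretely, take $S=\{0,100\}$, $\eps=1$, and the single segment on the line $x=y/100$: both points of $F_S$ are within $L_\infty$-distance $100/101<1$ of it, yet for $q=1$ we have $\textnormal{\texttt{rank}}(q)=2$ while $x^*=0.01$, so $|x^*-\textnormal{\texttt{rank}}(q)|\approx 2$. Carrying your argument through carefully (even granting that $\ell$ serves both $p_-$ and $p_+$) yields only $x^*\in[\textnormal{\texttt{rank}}(q)-1-2\eps,\ \textnormal{\texttt{rank}}(q)+2\eps]$. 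You flag a possible additive shift in the gap case, but the slack is already present in the single-segment case and is a factor of two in $\eps$, not merely an absorbed additive constant. For the paper's purposes this is harmless---an $\eps$-cover still functions as an $O(\eps)$-learned index, and the indexing section switches to the stricter vertical $\eps$-cover anyway---but a proof of the exact bound claimed here cannot be completed.
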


 \begin{observation}
\label{obs:count}
For fixed $\eps$, let $k$ denote the minimum complexity of any PGM index of $S$. 
If $f$ is an $\eps$-cover of $S$ of minimum complexity, then $f$ contains at most $k - 2$ edges.
\end{observation}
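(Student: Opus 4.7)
The plan is to start from any minimum PGM index $h^{*}$ of $S$, of complexity $k$, and extract from it an $\eps$-cover of $S$ with at most $k-2$ segments; since the minimum $\eps$-cover can only be smaller, the bound follows. First I would order the segments of $h^{*}$ by $y$-coordinate as $\ell_{1} < \cdots < \ell_{k}$ and argue that the two extreme segments $\ell_{1}$ and $\ell_{k}$ must be vertical half-lines. Because a PGM index covers the whole $y$-axis, $\ell_{1}$ extends to $y \to -\infty$ where $\texttt{rank}(y) = 0$, and symmetrically $\ell_{k}$ extends to $y \to +\infty$ where $\texttt{rank}(y) = n$. The constraint $|h^{*}(y) - \texttt{rank}(y)| \le \eps$ then forces $h^{*}$ to remain bounded on these unbounded $y$-ranges, which for a linear segment is only possible when its slope (viewed as a map $y \mapsto x$) is zero -- i.e.\ the segment is geometrically vertical.

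Next I would show that the $k-2$ middle segments $\ell_{2}, \ldots, \ell_{k-1}$ form an $\eps$-cover of $S$. They are vertically separated because $h^{*}$ is $y$-monotone. For the slope-at-least-$1$ condition I would argue by minimality: any middle segment with slope (as $y \mapsto x$) strictly exceeding $1$ has $y$-length bounded by $O(\eps)$, since along such a segment the error $h^{*}(y) - \texttt{rank}(y)$ drifts linearly and must stay in $[-\eps,\eps]$; a local exchange then replaces such a segment by a slope-$\ge 1$ one without increasing~$k$. For the $L_{\infty}$-coverage condition I would use the elementary observation that any segment of geometric slope $\ge 1$ has $L_{\infty}$-distance to an external point at most its horizontal distance to that point; so the PGM bound $|r - h^{*}(s)| \le \eps$ at each $(r,s) \in F_{S}$ whose $y$-coordinate lies in a middle segment's range yields $L_{\infty}$-distance at most $\eps$.

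The main obstacle is ensuring that every $F_{S}$ point's $y$-coordinate indeed lies in the range of some middle segment rather than in the range of a vertical tail $\ell_{1}$ or $\ell_{k}$. I would resolve this by a normalisation step carried out before extracting the middle segments: modify $h^{*}$, without increasing its complexity, so that $\ell_{1}$ and $\ell_{k}$ cover only $y$-values outside $[\min S, \max S]$. The key observation is that a vertical tail can absorb at most $2\eps + 1$ consecutive $F_{S}$-ranks (those within $\eps$ of its $x$-coordinate), and each such absorbed point can be reassigned to the adjacent middle segment by extending its $y$-range toward the tail while preserving the PGM constraint. After this normalisation, every point of $F_{S}$ lies in the range of one of $\ell_{2}, \ldots, \ell_{k-1}$, and the three conditions established above show that these $k-2$ segments form an $\eps$-cover.
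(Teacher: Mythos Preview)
The paper states this as an observation without proof; the implicit argument is simply that any PGM index, to cover the whole $y$-axis with bounded error, must begin and end with a vertical half-line (the paper's Figure~1 caption already asserts this), and stripping those two leaves $k-2$ segments which constitute an $\varepsilon$-cover. Your overall structure matches this, and you go further by trying to justify the two points the paper glosses over: that the middle segments have slope $\ge 1$, and that every $F_S$-point falls in a middle segment's $y$-range rather than a tail's.

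Both of your justifications have gaps, though. For the slope condition, your drift bound is $2\varepsilon/(c-1)$, not $O(\varepsilon)$, so it blows up as the inverse slope $c \to 1^{+}$; and the ``local exchange'' is asserted rather than proved. The actual reason a slope-$\ge 1$ replacement always exists has nothing to do with shortness: because consecutive $F_S$-points satisfy $s_{i+1}-s_i \ge r_{i+1}-r_i$, the minimum feasible inverse slope for a line within horizontal distance $\varepsilon$ of any block of $F_S$-points is at most $1$ (each pairwise lower bound $(r_i-r_j-2\varepsilon)/(s_i-s_j)$ is $\le 1$), so a line of geometric slope $\ge 1$ exists whenever \emph{any} stabbing line does---and for such a line horizontal distance equals $L_\infty$-distance. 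This replaces your exchange argument entirely. For the normalisation step, extending $\ell_2$'s $y$-range downward---whether by prolonging the same line or by choosing a new one---need not preserve the PGM constraint on the enlarged range; nothing in the minimality of $h^{*}$ forces the points absorbed by $\ell_1$ and those covered by $\ell_2$ to be jointly stabbable by one line. The paper sidesteps this by implicitly taking the convention that the two vertical half-lines exist only to cover $y$-values outside $[\min S,\max S]$, so they carry no $F_S$-points to begin with.
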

 
\begin{definition}
    For any fixed $\eps$-cover $f$ of $S$, we define $\Lambda(f)$ as the set of pairwise interior-disjoint one-dimensional intervals that correspond to the $y$-coordinates of segments in $f$.
\end{definition}

\subparagraph{Dynamic convex hulls.}
We dynamically maintain an $\eps$-cover $f$ of $S$ of approximately minimum complexity. To this end, we use a result by Overmars and van Leeuwen~\cite{overmars1981maintenance} to dynamically maintain for all $[a, b] \in \Lambda(f)$ the convex hull of $F_{S[a, b]}$. 
For any point set $F$, denote by $CH(F)$ their convex hull.
The data structure in~\cite{overmars1981maintenance} is a balanced binary tree over $F$, which at its root maintains a balanced binary tree over the edges $CH(F)$ in their cyclical ordering.  It uses $O(n)$ space and has worst-case $O(\log^2 n)$ update time.

\subparagraph{Rank-based convex hulls.}
For any update in $S$, up to $n$ values in $F_S$ may change their $x$-coordinate. This complicates the maintenance of a dynamic data structure over $F$. 
Gæde, Gørtz, van Der Hoog, Krogh, and Rotenberg~\cite{Gaede2024simple} observe that all algorithmic logic in~\cite{overmars1981maintenance} requires only the \emph{relative} $x$-coordinates between points. 
They adapt~\cite{overmars1981maintenance} to give an efficient and robust implementation of what they call \emph{a rank-based convex hull} data structure $T(S)$ with $O(\log^2 n)$ update time. For ease of exposition, we overly simplify their functionality: 

For each $[a, b] \in \Lambda(f)$, we store $S[a, b]$ in $T(S[a, b])$. $T(S[a, b])$ maintains a balanced binary tree $\gamma(S[a, b])$ storing the edges of $\CH(F_{S[a, b]})$ in their cyclical ordering. 
We use this data structure as a black box, using the following  functions that take at most $O(\log^2 n)$ time: 

\begin{itemize}
    \item $T(S[a, b])$.\texttt{get\_hull()} returns the tree $\gamma(S[a, b])$.
    \item $T(S[a, b])$.\texttt{split($v$)} returns,  for $v \in [a, b]$, $T(S[a, v])$ and $T(S[v, b])$.
    \item $T(S[a, b])$.\texttt{split($T([S[b, c])$)}  returns $T(S[a, c])$. 
    \item $T(S[a, b])$.\texttt{update($v$)}  updates, for $v \in [a, b]$, the set $S$ (deleting or inserting $v$). 
\end{itemize}

\newpage
\section{Testing whether a set can be $\eps$-covered by a single segment}
\label{sec:covered}

We consider the following subproblem: given a parameter $\varepsilon$, a set $S$ of $n$ distinct integers, and the edges of $\CH(F_S)$ stored in a balanced binary tree, can we compute in $O(\log n)$ time whether there exists an $\varepsilon$-cover $f$ of complexity~1? 
Formally, we seek a line $\ell$ of slope at least 1 such that all points in $F_S$ lie within $L_\infty$-distance $\varepsilon$ of $\ell$.
Let $L$ (resp.\ $U$) denote the set obtained by shifting each $p \in F_S$ downwards and rightwards (resp.\ upwards and leftwards) by $\varepsilon$, and adding the point $(\infty, -\infty)$ (resp.\ $(-\infty, \infty)$).

\begin{lemma}
    Let $\ell$ be a line of slope at least 1. Then all points in $F_S$ lie within $L_\infty$-distance $\varepsilon$ of $\ell$ if and only if $\ell$ lies below all points in $U$ and above all points in $L$.
\end{lemma}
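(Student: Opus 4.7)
The plan is to reduce the $L_\infty$-distance condition on a single point $p = (x,y) \in F_S$ to a pair of side-of-line conditions that can be read off directly from the shifted copies of $p$ in $L$ and $U$. The key geometric observation is that $p$ lies within $L_\infty$-distance $\eps$ of $\ell$ if and only if the axis-aligned square $[x-\eps, x+\eps] \times [y-\eps, y+\eps]$ centered at $p$ meets $\ell$. Since $\ell$ has slope at least $1$ and is therefore monotonically increasing, $\ell$ meets this square if and only if the line is not entirely above the square at the left edge and not entirely below it at the right edge, that is, $\ell(x-\eps) \leq y+\eps$ and $\ell(x+\eps) \geq y-\eps$.

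From here I would rewrite these two inequalities in the language of $U$ and $L$. The first says that the point $(x-\eps,\, y+\eps)$, which by construction is the copy of $p$ in $U$, lies on or above $\ell$, equivalently $\ell$ lies below it. The second says that the point $(x+\eps,\, y-\eps)$, the copy of $p$ in $L$, lies on or below $\ell$, equivalently $\ell$ lies above it. Quantifying over all $p \in F_S$ then yields both directions of the biconditional.

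Finally I would verify that the two sentinel points are harmless: any line with positive slope diverges to $+\infty$ on the right and to $-\infty$ on the left, so $\ell$ automatically lies above $(\infty, -\infty) \in L$ and below $(-\infty, \infty) \in U$, and including them imposes no new constraints on $\ell$. The only subtlety is matching the paper's careful distinction between \emph{above} and \emph{strictly above}: since the $L_\infty$ condition is closed ($\leq \eps$), I would state both side-of-line conditions with weak inequalities so that boundary cases (where $\ell$ passes through a shifted copy) are handled consistently. There is no real obstacle beyond this; the argument is a short, direct case analysis once the square-intersection picture is fixed.
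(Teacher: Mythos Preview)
Your proposal is correct and follows essentially the same route as the paper. Both arguments reduce the $L_\infty$ condition to intersection of $\ell$ with the axis-aligned square of radius $\eps$ around $p$, observe that the sentinel points are automatically handled by any line of positive slope, and then use monotonicity of $\ell$ to characterize the square-intersection via the two shifted corners $u$ and $l$; you phrase this analytically as the pair of inequalities $\ell(x-\eps)\le y+\eps$ and $\ell(x+\eps)\ge y-\eps$, while the paper phrases it as a short three-case geometric analysis, but the content is identical.
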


\begin{proof} 
    Any line with positive slope lies above $(\infty, - \infty)$ and below $(-\infty, \infty)$.
    Consider a point $p \in F_S$ and the two corresponding points $l \in L$ and $u \in U$ and denote by $C$ an axis-aligned square of radius $\eps$ centred at $p$.
    If $\ell$ lies below $l$ then all points on $\ell$ left of $l$ lie below $C$. 
    If $\ell$ lies above $u$ then all points on $\ell$ right of $u$ lie above $C$.     If $\ell$ lies above $l$ and below $u$ then because $\ell$ has positive slope, it must intersect $C$. The statement follows. 
\end{proof}

\begin{corollary}
  $\ell$ is an $\eps$-cover of $S$ iff it has a slope $\geq 1$ and separates $CH(L)$ from $CH(U)$. 
\end{corollary}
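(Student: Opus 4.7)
The plan is to derive the corollary almost directly from the preceding lemma, using the standard fact that a halfplane contains a point set if and only if it contains that set's convex hull.

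First, I would unfold the definition of $\eps$-cover: a single line $\ell$ is an $\eps$-cover of $S$ if and only if $\ell$ has slope at least $1$ and every point of $F_S$ lies within $L_\infty$-distance $\eps$ of $\ell$. By the preceding lemma, this second condition (under the slope hypothesis) is equivalent to $\ell$ lying below every point of $U$ and above every point of $L$.

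Next, I would use the following elementary observation: for any finite planar point set $P$ and any non-vertical line $\ell$, $\ell$ lies below (respectively above) every point of $P$ if and only if $\ell$ lies below (respectively above) every point of $CH(P)$. This is immediate because the closed halfplane bounded above (or below) by $\ell$ is convex, hence contains $P$ if and only if it contains $CH(P)$. Applying this to $L$ and $U$, the condition from the lemma becomes: $\ell$ lies below $CH(U)$ and above $CH(L)$. In particular, the two hulls lie in opposite open halfplanes bounded by $\ell$, so $\ell$ separates them.

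For the converse, suppose $\ell$ has slope at least $1$ and separates $CH(L)$ from $CH(U)$. I need to argue that $L$ is necessarily on the lower side and $U$ on the upper side; this is where the sentinel points come in. As noted inside the proof of the lemma, any line with positive slope lies strictly above $(\infty,-\infty) \in L$ and strictly below $(-\infty,\infty) \in U$. Hence the separating line $\ell$ must have $CH(L)$ below and $CH(U)$ above, so $\ell$ is below every point of $U$ and above every point of $L$. Invoking the lemma once more, every point of $F_S$ is within $L_\infty$-distance $\eps$ of $\ell$, which, combined with the slope condition, makes $\ell$ an $\eps$-cover of $S$.

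The only subtle point, and the step I would take most care with, is the orientation issue in the converse: without the sentinels, "separates" would be ambiguous (either hull could be on either side), and a line with $U$ below and $L$ above would fail to be an $\eps$-cover. The sentinels $(\infty,-\infty)$ and $(-\infty,\infty)$ baked into $L$ and $U$ are precisely what pins down the correct orientation for any line of positive slope, so no further case analysis is required.
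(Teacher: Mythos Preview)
Your argument is correct and is exactly the natural elaboration the paper leaves implicit: the corollary is stated without proof, immediately after the lemma, and your derivation---unfold the definition, invoke the lemma, pass to convex hulls via convexity of halfplanes, and use the sentinels to fix orientation in the converse---is precisely the intended reading. There is nothing to add.
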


\noindent
Given $\CH(F_S)$, we can extract $\CH(L)$ and $\CH(U)$ in $O(\log n)$ time. Chazelle and Dobkin~\cite[Section 4.2]{chazelle1980detection} remark that, in the negative case, convex hull intersection testing can be modified to produce a separating line. In our setting, the hulls consist of segments with slope at least 1, and any such separator corresponds to an $\varepsilon$-cover. Thus, our problem reduces to the classical convex hull intersection problem and we are seemingly done.

However, the history of convex hull intersection testing is long and intricate. Both Chazelle and Dobkin~\cite{chazelle1980detection} and Dobkin and Kirkpatrick~\cite{DOBKIN1983} independently proposed the first $O(\log n)$-time algorithms. In 1987, Chazelle and Dobkin~\cite{Chazelle1987intersection} presented a more detailed description of their method. Dobkin and Kirkpatrick revisited their own work in 1990~\cite{Dobkin1990}, proposing a unified $O(\log^2 n)$-time algorithm for polyhedron intersection, which O'Rourke later identified as incorrect~\cite{ORourke1998Computational}. He corrected the argument and provided a $C$-implementation.
Further work by Dobkin and Souvaine~\cite{DOBKIN1991} noted that earlier implementations lacked robustness. More recently, Barba and Langerman~\cite{Barba2015optimal} observed that the community still lacked a complete, robust algorithm for polyhedral intersection. They proposed an alternative $O(\log n)$ algorithm based on polar transformations. Walther's master's thesis~\cite{WaltherThesis}, supervised by Afshani and Brodal, implemented both this and earlier methods, but the source code is no longer available.

This 35-year history highlights the complexity and subtlety of convex hull intersection testing. Despite its history, no robust and modern $O(\log n)$-time implementation is available. Moreover, no published algorithm explicitly computes a separating line in the negative case.

\subparagraph{Contribution.}
In Appendix~\ref{app:intersection_testing}, we present a robust $O(\log n)$-time algorithm for convex hull intersection testing. Our algorithm is specialised to convex hulls composed of positively sloped segments and including the points $(\infty, -\infty)$ and $(-\infty, \infty)$. We formally prove its correctness and adapt it to compute a separating line in the negative case, thereby constructing an $\varepsilon$-cover of complexity 1 when it exists.
The later adaption and its analysis are nontrivial, and arguably (partly) fill a gap in the existing literature on convex hull intersection testing.

\begin{restatable}{theorem}{separation}
    \label{thm:separation}
Let $A$ and $B$ be convex chains of edges with slope at least $1$, stored in a balanced binary tree on their left-to-right order.
There exists an $O(\log n)$ time to decide whether there exists a line that separates $A$ and $B$. This algorithm requires only orientation-testing for ordered triangles and can output a separating line whenever it exists. 
\end{restatable}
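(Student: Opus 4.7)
The plan is to implement a simultaneous binary descent on the two balanced binary trees storing $A$ and $B$, in the spirit of the Chazelle--Dobkin intersection-testing technique~\cite{Chazelle1987intersection}. I initialize two contiguous subranges $A' := A$ and $B' := B$ and maintain the invariant that the answer for $(A,B)$ coincides with the answer for $(A',B')$, and moreover that any separating line recovered for $(A',B')$ extends, possibly after a tangential adjustment, to a separating line for $(A,B)$.

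At each iteration I access the median edges $a \in A'$ and $b \in B'$ in $O(1)$ time via the balanced trees. Because every edge of $A$ and $B$ has slope $\geq 1$, both chains are strictly $x$- and $y$-monotone, and each median edge splits its chain into a well-defined left and right half. Using only the four orientation tests on triangles formed by the endpoints of $a$ together with each endpoint of $b$ (and vice versa), I classify the relative position of $a$ and $b$ into a constant number of cases. In each case I argue that either (i) the chains already cross, so no separating line exists and $(a,b)$ is a witness, or (ii) one of the four halves of $A'$ or $B'$ cannot contribute to a separator-violating configuration, and may be pruned. After $O(\log|A| + \log|B|)$ steps, both ranges contain $O(1)$ edges and the problem reduces to a constant-time decision.

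To output a separating line in the negative case, I augment the descent with two \emph{witness} vertices $v_A \in A$ and $v_B \in B$. At every pruning step I update them so that the open halfplane bounded by the directed line through $v_A v_B$ always contains the so-far discarded portion of $A$ above and of $B$ below. After termination, a constant-time inspection of the surviving $O(1)$ edges either rotates the line through $v_A v_B$ onto a common supporting tangent of the two surviving sub-chains, or exhibits a crossing. Because $A$ and $B$ consist of edges with slope $\geq 1$, any line separating these monotone chains has slope $\geq 1$, and so the reported line is eligible as an $\eps$-cover.

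The main obstacle is the correctness of the pruning rule. As the historical account preceding the theorem makes clear, previous descriptions of this technique have repeatedly been found incomplete or incorrect, and the presence of the points at infinity used to define $L$ and $U$ introduces additional degenerate cases. My plan is therefore to fix a canonical orientation (upper chain of $L$ versus lower chain of $U$), carry out the case analysis exhaustively, and justify every case by the sign of an orientation test on a triangle with \emph{finite} integer vertices, handling the $(\pm\infty,\mp\infty)$ endpoints symbolically rather than numerically. This yields an algorithm that is both $O(\log n)$-time and robust under integer arithmetic, as required.
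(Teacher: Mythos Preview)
Your plan for the intersection test is essentially the paper's Algorithm~\ref{alg:intersection_test}: a Chazelle--Dobkin style simultaneous descent that, at each step, compares the current root edges $\alpha\in A'$ and $\beta\in B'$ and discards half of one subtree. The paper's case analysis (Lemma~\ref{lem:intersection_test}) is exactly the exhaustive orientation-based classification you sketch, specialised to the upper-quarter hull $CH(A)$ and lower-quarter hull $CH(B)$.

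Where you diverge is in how to \emph{output} a separating line. You propose to carry along witness vertices $v_A,v_B$ during the descent, maintain the invariant that the line through them separates everything discarded so far, and finish with a constant-time ``tangential adjustment''. The paper does something structurally different: it runs the intersection test to completion, observes that termination occurs when one side (say $B$) has been pruned down to a single vertex shared by two consecutive edges $(\beta,b)$, proves a dedicated lemma (Lemma~\ref{lem:nointersect}) that neither halfline of the wedge $w(\beta,b)$ meets $CH(A)$, and then runs a \emph{second} $O(\log n)$ binary search (Algorithm~\ref{alg:separationfind}) over $CH(A)$ alone against this fixed wedge. The separating line is then the supporting line of an edge of $A$, or of $\beta$, or of $b$.

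Your one-phase alternative has a gap as stated. The Chazelle--Dobkin pruning rule is designed to preserve \emph{intersections}, not tangencies: when you discard, say, the right half of $A'$, you are only guaranteed that no intersection point lives there, not that the eventual separating line is tangent to $A$ somewhere in the surviving half. Consequently the invariant ``the line through $v_Av_B$ has all discarded portions of $A$ above and of $B$ below'' is not obviously maintainable step-by-step, and your ``constant-time tangential adjustment'' at the end may in fact require re-examining $\Theta(n)$ discarded edges. The paper avoids this precisely by not trying to track both tangent features simultaneously: the first descent pins down the critical vertex of one hull (the wedge apex), and only then does a fresh search locate the matching edge on the other hull. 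If you want a single-phase algorithm you would need a different invariant and a separate correctness argument; the paper's two-phase route is the one that is actually carried out and proved.
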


\newpage

\section{Robustness}
\label{sec:robust}

A geometric predicate is a function that takes geometric objects and outputs a Boolean.
Our algorithms compute geometric predicates and use their output to branch along a decision tree. 
In $F_S$, consecutive points differ in $x$-coordinate by exactly $1$ whilst their $y$-coordinate may wildly vary. 
Consequently, any segment that $\eps$-covers a subsequence of $F_S$ is quite steep. 
This quickly leads to rounding errors when computing geometric predicates, which in turn creates robustness errors. To illustrate our point, we discuss one of our main algorithms:

\texttt{intersection\_test}  (Algorithm~\ref{alg:intersection_test}) which determines whether an upper quarter convex hull $CH(A)$ and a lower quarter convex hull $CH(B)$ intersect.
We receive these hulls as two trees. Our algorithm computes a few geometric predicates given the edges $\alpha$ and $\beta$ stored at their respective roots.  
Given $(\alpha, \beta)$, we either conclude that $CH(A)$ and $CH(B)$ intersect, or, that all edges succeeding (or preceding) $\alpha$ (or $\beta$) cannot intersect the other convex hull.
Based on the Boolean output, our algorithm then branches into a subtree of $\alpha$ (or $\beta$). This way, we verify whether $CH(A)$ and $CH(B)$ intersect in logarithmic time. 
Rounding causes these predicates to output a wrong conclusion, and our algorithm may branch into a subtree containing edges of $CH(A)$ that are guaranteed to not intersect $CH(B)$. 
Our algorithm then wrongfully concludes that there exists a line $\ell$ separating $CH(A)$ and $CH(B)$. Subsequent algorithms then exhibit undefined behaviour when they attempt to compute this line.

\subparagraph{Geometric predicates.} Our algorithms use on three predicates for their decision making:
\begin{itemize}
    \item \texttt{slope}. Given positive segments $(\alpha, \beta)$, output whether $\slope(\alpha) < \slope(\beta)$.
    \item \texttt{lies\_right}. Given two positive segments $\alpha$ and $\beta$ with different slopes, output whether the first vertex of $\beta$ lies right of $line(\alpha) \cap line(\beta)$.
    \item \texttt{wedge}. Consider a pair of positive segments $(\alpha, \gamma)$ that share a vertex and define $W$ as the cone formed by their supporting halflines containing $(\infty, -\infty)$. 
    Given a positive segment $\beta$ outside of $W$, output whether $line(\beta)$ intersects $W$. 
\end{itemize}

The segments are given by points with integer coordinates. 
The slopes of these segments (and thereby any representation of their supporting line) are often not integer. 
A naive way to compute these predicates is to represent slopes using \emph{doubles}. However, this is both computationally slow and prone to rounding errors (and thus, robustness errors). 

If we insist on correct output, one can use an \emph{algebraic type} instead. This type represents values using algebraic expressions. E.g., the slope of a positive segment $(a, b)$ is the quotient: $\frac{b.y - a.y}{b.x - a.x}$ and so, in our case, it can be represented as a pair of integers. 
Algebraic types can subsequently be accurately compared to each other. Indeed, if we want to verify whether $\frac{s}{t} < \frac{q}{r}$ we may robustly verify whether $sr < qt$ using only integers.
Exact (algebraic type) comparisons are frequently implemented, and present in the CGAL CORE library~\cite{fabri2000design}. 

However, exact comparisons are expensive.
Our implementation of \texttt{slope} requires two integer multiplications, which is still relatively efficient. 
Evaluating more complex expressions requires too much time. As a rule of thumb, we want to avoid compounding algebraic types to maintain efficiency. Na\"{i}vely,  \texttt{lies\_right} compounds two quotients and \texttt{wedge} compounds three. 
We give robust implementations of these functions by invoking three subfunctions. These compare slopes, or whether a point lies above or below a supporting halfplane: 

\begin{align*}
&\texttt{slope}((a, b), (c, d)) &:=&  (b.y - a.y) \cdot (d.x - c.x) < (d.y - c.y) \cdot (b.x - a.x) \\
&\texttt{above\_line}( (a, b), c) &:=&  (b.x - a.x)(c.y - b.y) - (c.x - b.x)(b.y - a.y) \geq 0 \\
&\texttt{below\_line}( (a, b), c) &:=&  (b.x - a.x)(c.y - b.y) - (c.x - b.x)(b.y - a.y) \leq 0
\end{align*}

\noindent
We can create $\texttt{lies\_right}$ from our robust predicates (see Figure~\ref{fig:robust} (a)):  
\begin{lemma}
    Let $\alpha = (a, b)$ and $\beta = (c, d)$ be two positive segments of different slope. Then:
    \begin{align*}
    \textnormal{\texttt{lies\_right}}(\alpha, \beta) &=  \bigl( \textnormal{\texttt{slope}}((a, b), (c, d)) ==  \textnormal{\texttt{above\_line}}( (a, b), c)   \bigr) \\ 
    &\vee  \bigl( \textnormal{\texttt{slope}}((c, d), (b, c) ) == \textnormal{\texttt{below\_line}}( (a, b), c)  \bigr)
    \end{align*}
\end{lemma}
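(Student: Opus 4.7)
Let $p := \mathrm{line}(\alpha) \cap \mathrm{line}(\beta)$; since $\alpha$ and $\beta$ have distinct slopes, $p$ is the unique intersection point, and \texttt{lies\_right} asks whether $c$ lies at or right of $p$. The plan is to reduce this $x$-coordinate comparison to the vertical relationship of $c$ to line $\alpha$, which is exactly what \texttt{above\_line} and \texttt{below\_line} test. The governing observation is: walking along line $\beta$ to the right of $p$, one rises above line $\alpha$ when $\slope(\beta) > \slope(\alpha)$ and falls below line $\alpha$ when $\slope(\beta) < \slope(\alpha)$. Hence, up to the degenerate case $c = p$, the point $c$ lies right of $p$ iff either $\slope(\alpha) < \slope(\beta)$ and $c$ lies above line $\alpha$, or $\slope(\alpha) > \slope(\beta)$ and $c$ lies below line $\alpha$.

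I would first verify the first clause $\texttt{slope}((a,b),(c,d)) == \texttt{above\_line}((a,b),c)$ in the four non-degenerate sub-cases defined by the sign of $\slope(\alpha) - \slope(\beta)$ and the strict side of $c$ relative to line $\alpha$. In these four cases \texttt{above\_line} and \texttt{below\_line} are Boolean negations of each other, and a direct truth-table check shows that the equality holds precisely in the two cases corresponding to $c$ strictly right of $p$, namely $\slope(\alpha) < \slope(\beta)$ with $c$ above line $\alpha$, and $\slope(\alpha) > \slope(\beta)$ with $c$ below line $\alpha$. Modulo the boundary case below, this already establishes the lemma.

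The main obstacle is the degenerate case $c \in \mathrm{line}(\alpha)$, where both \texttt{above\_line} and \texttt{below\_line} return \texttt{true} because each uses a non-strict inequality. The first clause can then report a wrong answer, and the second clause $\texttt{slope}((c,d),(b,c)) == \texttt{below\_line}((a,b),c)$ is present to rescue these configurations. Geometrically, $c \in \mathrm{line}(\alpha)$ forces $c = p$, and the second clause detects this collinearity through an auxiliary orientation test against the segment $(b, c)$. I would finish the proof by verifying that the second clause produces the correct disjunctive behavior in the degenerate case and, crucially, that it is never spuriously \texttt{true} in the four non-degenerate cases where the first clause already suffices. This last verification is the delicate step; I would carry it out by a case split on the sign of $c.x - b.x$, combined with the orientation of the triangle $abc$ already captured by \texttt{above\_line}, in order to show that both configurations of $(b,c)$ interact correctly with $\texttt{below\_line}((a,b),c)$. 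Since each subfunction reduces to a fixed sequence of integer multiplications and sign tests, the composite formula inherits robustness directly, with no divisions or floating-point arithmetic required.
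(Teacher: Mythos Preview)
Your outline is more thorough than the paper's own argument: the paper only treats the first clause, observing that when $\slope(\alpha)<\slope(\beta)$ the answer is $\texttt{above\_line}((a,b),c)$ and when $\slope(\alpha)>\slope(\beta)$ it is the negation, which already makes the equality test $\texttt{slope}((a,b),(c,d))==\texttt{above\_line}((a,b),c)$ correct on its own except at the boundary $c=p$. The second clause is never discussed there, whereas you correctly flag that in a disjunction one must also show it does not fire spuriously in the two non-degenerate ``answer is \texttt{false}'' sub-cases.

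That deferred verification is where the argument breaks. Expanding the definition and using $d.x>c.x$, one sees that $\texttt{slope}((c,d),(b,c))$ holds exactly when $b$ lies strictly below $\mathrm{line}(\beta)$. Now take the sub-case $\slope(\alpha)<\slope(\beta)$ with $c$ strictly below $\mathrm{line}(\alpha)$, so $c$ is strictly left of $p$ and the intended output is \texttt{false}. Here $\texttt{below\_line}((a,b),c)=\texttt{true}$, so the second clause reduces to $\texttt{slope}((c,d),(b,c))$; this is \texttt{true} whenever $b\in\mathrm{line}(\alpha)$ happens to sit to the right of $p$, which is precisely when $b$ lies below $\mathrm{line}(\beta)$. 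Nothing in the hypotheses forbids that. Concretely, with $\alpha=\bigl((-10,-10),(10,10)\bigr)$ and $\beta=\bigl((-1,-2),(1,2)\bigr)$ we get $p=(0,0)$ and $c=(-1,-2)$ strictly left of $p$, yet the first clause evaluates to $\texttt{true}==\texttt{false}=\texttt{false}$ while the second evaluates to $\texttt{true}==\texttt{true}=\texttt{true}$, so the disjunction returns \texttt{true}. Your planned case split on the sign of $c.x-b.x$ would have surfaced exactly this configuration; the identity as stated does not hold for arbitrary positive segments of different slope, and the ``delicate step'' cannot be closed without an additional assumption constraining the position of $b$ relative to $p$.
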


\begin{proof}
    Suppose that $\slope(\alpha) < \slope(\beta)$. 
    Then $c$ lies right of $line(\alpha) \cap line(\beta)$ if and only if $c$ lies above the halfplane bounded from above by $line((a, b))$.
    That happens if and only if $(a, b, c)$ are collinear or make a counter-clockwise turn.
    This in turn occurs if and only if the determinant if the matrix    
    $\begin{vmatrix}
 (b.x - a.x) & (c.x - b.x) \\
 (b.y - a.y) & (c.y - b.y)
\end{vmatrix}$ is zero or more.   If $slope(\alpha) > slope(\beta)$ the determinant must be negative instead.   
\end{proof}

 Similarly, we can create \texttt{wedge} from our robust predicates. We note for the reader that explain our equations in words in the proof of the lemma:

\begin{lemma}[Figure~\ref{fig:robust} (b)]
If $\alpha = (a, b)$, $\gamma = (b, c)$ and $\beta = (d, e)$ be three segments of positive slope where $W = \overleftarrow{\alpha} \cup \overrightarrow{\gamma}$ bounds a convex area containing $(\infty, -\infty)$.  Then 

\begin{align*}
   &\textnormal{\texttt{wedge}}(\alpha, \gamma, \beta) :=  \\
   & \bigl( \textnormal{\texttt{below\_line}}( (b, c), d) \wedge  \bigl( \textnormal{\texttt{above\_line}}( (d, e), b) \vee \textnormal{\texttt{slope}}( (a, b), (d, e)  \bigr) \bigr) \vee   \\
   & \bigl(\textnormal{\texttt{below\_line}}( (a, b), e) \wedge  \bigl( \textnormal{\texttt{above\_line}}( (d, e), b)  \vee  \textnormal{\texttt{slope}}( (d, e), (b, c) )  \bigr) \bigr) \vee  \\
& \bigl( \neg \textnormal{\texttt{below\_line}}( (a, b), e) \wedge  \neg \textnormal{\texttt{below\_line}}( (b, c), d)  \wedge \bigl( \textnormal{\texttt{slope}}( (a,b), (d, e) ) \vee  \textnormal{\texttt{slope}}( (b,c), (d, e) ) \bigr) \bigr)
\end{align*} 
\end{lemma}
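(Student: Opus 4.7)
The plan is to verify the formula by case analysis on the positions of the endpoints $d$ and $e$ of $\beta$ relative to the two supporting lines $line(a,b)$ and $line(b,c)$. My first step is to record the structural fact that, by the hypothesis that $W$ is a convex area containing $(\infty,-\infty)$ whose boundary consists of halflines on $line(a,b)$ and $line(b,c)$, the wedge $W$ coincides with the intersection of the two closed halfplanes below $line(a,b)$ and below $line(b,c)$. Consequently, a point lies outside $W$ iff it lies weakly above at least one of the two lines, and since $\beta$ is outside $W$, both $d$ and $e$ satisfy this.

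Next, I would split into three exhaustive cases matching the three disjuncts of the formula. In Case 1, $d$ lies strictly below $line(b,c)$; combined with $d \notin W$, this forces $d$ into the sector of the complement of $W$ adjacent to the boundary halfline along $line(a,b)$. From this sector, $line(\beta)$ enters $W$ precisely when either $line(\beta)$ passes strictly below the apex $b$ (equivalently, $b$ lies above $line(\beta)$, detected by $\textnormal{\texttt{above\_line}}((d,e),b)$) or the slope of $\beta$ exceeds that of $\alpha$, allowing $line(\beta)$ to dive below $line(a,b)$ into $W$ (detected by $\textnormal{\texttt{slope}}((a,b),(d,e))$). Case 2 is the mirror image of Case 1, interchanging the roles of $(\alpha,d)$ and $(\gamma,e)$. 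In Case 3, neither side-sector condition holds, so $d$ lies weakly above $line(b,c)$ and $e$ weakly above $line(a,b)$, placing both endpoints on the "above" side of $W$; here $line(\beta)$ intersects $W$ iff its slope is extreme enough to descend into the wedge from above, which corresponds to the two slope comparisons in the disjunct.

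For each case, I would establish both directions of the equivalence using only the three robust predicates $\textnormal{\texttt{slope}}$, $\textnormal{\texttt{above\_line}}$, and $\textnormal{\texttt{below\_line}}$, carefully tracking which inequalities are strict and which are not. The main obstacle will be handling the several sub-configurations that arise within each case, especially Case 3, where each of $d,e$ may lie either in the "above both" region or in a side sector of the complement of $W$, leading to several geometrically distinct pictures that must all reduce to the same Boolean expression. Additional care is required for degenerate configurations, such as $line(\beta)$ being parallel to $line(a,b)$ or $line(b,c)$, or passing exactly through the apex $b$; these are resolved consistently by the non-strict inequalities baked into $\textnormal{\texttt{above\_line}}$ and $\textnormal{\texttt{below\_line}}$.
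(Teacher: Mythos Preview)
Your proposal is correct and follows essentially the same three-case decomposition as the paper's proof: Case~1 when $d$ lies below $line(\gamma)$, Case~2 the mirror case when $e$ lies below $line(\alpha)$, and Case~3 when neither holds, with the intersection condition in each case reduced to the same combination of \texttt{above\_line} and \texttt{slope} tests. Your treatment is in fact slightly more careful than the paper's: the paper asserts without justification that in Case~3 ``both endpoints of $\beta$ must lie in the open green area'' (i.e.\ strictly above both lines), whereas you correctly flag that $d$ or $e$ could still sit in a side sector of the complement of $W$ and that these sub-configurations must be checked separately.
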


\begin{proof}
    The predicate is a case distinction of three mutually exclusive cases. 
    
    If the first vertex of $\beta$ lies below the supporting line of $\gamma$ then $line(\beta)$ intersects $W$ if and only if it intersects $\overleftarrow{\alpha}$. This happens if and only if one of two conditions hold: either $b$ lies below the supporting line of $b$, or,  $\slope(\alpha) < \slope(\beta)$.

    If the second vertex of $\beta$ lies below $line(\alpha)$ then the argument is symmetric. 

    If neither of those cases apply then both endpoints of $\beta$ must lie in the open green area. In this case, whenever $\slope(\alpha) < \slope(\beta)$, the supporting line of $\beta$ always intersects $W$. 
    Whenever $\slope(\beta) < \slope(\gamma)$, the supporting line of $\beta$ always intersects $W$.
    Whenever $\slope(\alpha) \geq \slope(\beta) \geq \slope(\gamma)$, the supporting line of $\beta$ cannot intersect $W$. 
\end{proof}

\begin{figure}[h]
    \centering
    \includegraphics[]{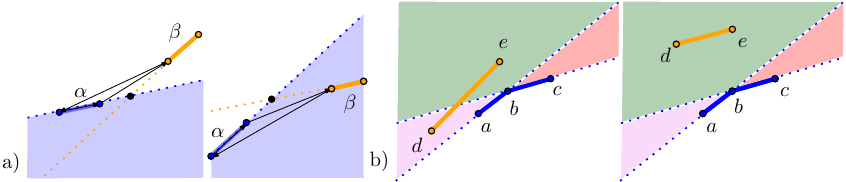}
    \caption{
    (a) We reduce testing whether the first vertex of $\beta$ lies right of the intersection point to comparing slopes and the orientation of a triangle.
    (b) If $d$ lies below the halfplane of $line(b, c)$ then $line((d, e))$ intersects the wedge if and only if $b$ lies below $line((d, e))$. 
    }
    \label{fig:robust}
\end{figure}

\section{Dynamically maintaining a learned index}
\label{sec:learned}

We dynamically maintain a learned index $h_\varepsilon$ of $S$ by maintaining an $\eps$-cover $f$ of $S$. We guarantee that there exists no $\eps$-cover $f'$ of $S$ with $|f| > \frac{3}{2} |f'|$. By Observation~\ref{obs:cover_to_index}, we obtain a learned index $h_\varepsilon$. 
By Observation~\ref{obs:count}, there exists no PGM index $h_\varepsilon$ where $|f| > \frac{3}{2}|h_\varepsilon|$.

To maintain $f$, we maintain a  balanced binary tree $B(f)$ over $\Lambda(f)$.
Additionally, for each $[a, b] \in \Lambda(f)$, we maintain  a \emph{rank-based convex hull} $T(S[a, b])$ of $S[a, b]$ as described in~\cite{Gaede2024simple}. 
We note that we store all segments in $f$ using \emph{relative} $x$-coordinates.
That is, we assume for all $[a, b] \in \Lambda(f)$ that the rank of the first element in $S[a, b]$ is zero. We may then use $B(f)$ to `offset' each line to compute the actual coordinates in rank-space.

\begin{theorem}
    \label{eps:eps_cover}
   We can dynamically maintain an $\eps$-cover $f$ of $S$ in $O(\log^2 n)$ worst-case time.
    We guarantee that there exists no $\eps$-cover $f'$ of $S$ where $|f| > \frac{3}{2}|f'|$.
\end{theorem}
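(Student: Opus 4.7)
The plan is to represent $f$ by the two structures already introduced in the preliminaries: the balanced binary tree $B(f)$ over $\Lambda(f)$, and, for each $[a,b] \in \Lambda(f)$, the rank-based convex hull $T(S[a,b])$ together with a stored witness segment realising the $\eps$-cover of $S[a,b]$. On any point update at value $v$, I would first use $B(f)$ to locate in $O(\log n)$ the unique $[a,b] \in \Lambda(f)$ that contains (or should receive) $v$, then invoke $T(S[a,b]).\texttt{update}(v)$ and $T.\texttt{get\_hull}()$, each in $O(\log^2 n)$. From the returned tree $\gamma(S[a,b])$ the two chains $\CH(L)$ and $\CH(U)$ of the lemma in the previous section are extracted in $O(\log n)$, and Theorem~\ref{thm:separation} is applied to decide in $O(\log n)$ whether a single segment still $\eps$-covers $S[a,b]$, and to produce one when it does.

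Using the outcome of this test, I would restore the following invariant: for every pair of neighbouring intervals $I_j, I_{j+1} \in \Lambda(f)$, the union $S[a_j, b_{j+1}]$ admits no single-segment $\eps$-cover. This is precisely the maximality condition produced by O'Rourke's streaming algorithm, so by the exchange argument recalled in Appendix~\ref{app:orourke} it immediately yields $|f| \leq \frac{3}{2}|f'|$ for every $\eps$-cover $f'$ of the current $S$. Concretely: if the test on $[a,b]$ succeeds after an insertion, update the witness segment and return. If it fails, apply $T.\texttt{split}$ at the breakpoint extracted from the failed separation test, obtaining two intervals on which separation succeeds, and insert the split into $B(f)$. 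For deletions, additionally rerun the separation test on the hulls of $I_{j-1}\cup I$ and $I \cup I_{j+1}$, merging whenever the test succeeds via the merge-\texttt{split} operation. Since only a constant number of adjacent intervals are affected, $O(1)$ hull-level operations suffice per update, and each costs $O(\log^2 n)$, matching the claimed worst-case bound.

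The most delicate step --- and the one where I would focus the detailed writing --- is the passage from O'Rourke's offline maximality invariant to a dynamic one maintainable in worst-case time. The concern is that a single update could cascade: a merge on one side of $I$ could enable a further merge further away, propagating along $\Lambda(f)$. I would rule this out by arguing that any violation of the invariant introduced by a single point update is local: it is always witnessed by a separation test involving the modified interval and at most one immediate neighbour, since farther intervals have unchanged point sets and hence unchanged mergeability relations among themselves. Consequently the constant-size neighbourhood of $I$ carries all the work, and every $O(1)$-sized correction restores the invariant globally. Combined with Observations~\ref{obs:cover_to_index} and~\ref{obs:count}, applying this maintenance scheme at each update yields the theorem.
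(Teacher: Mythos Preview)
Your high-level plan---maintain the invariant that every two consecutive intervals of $\Lambda(f)$ are jointly blocked, and use the rank-based hull structures plus Theorem~\ref{thm:separation} to test this---is exactly what the paper does, and your locality argument for why merges cannot cascade past the immediate neighbours of the touched interval is correct and is the same monotonicity observation the paper relies on. However, your handling of the insertion case has two concrete gaps.

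First, there is no ``breakpoint extracted from the failed separation test''. Theorem~\ref{thm:separation} produces a separating line only when the two hulls are \emph{disjoint}; when they intersect it merely reports failure and gives you nothing to split on. The paper avoids this issue altogether: it splits $[a,b]$ at the inserted value $s$ itself, into $[a,s],\ [s,s],\ [s,b]$. Each piece is trivially coverable (the outer two are subsets of the previously coverable $S[a,b]$, the middle one is a singleton), so no breakpoint has to be discovered algorithmically.

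Second, after such a split you must re-examine the \emph{outer} neighbours even on insertion. If $[a,b]$ is replaced by strictly smaller pieces, the fact that the old left neighbour together with $[a,b]$ was blocked does not imply that the left neighbour together with the new left piece is blocked---a subset of a blocked set may well be coverable. Your write-up only mentions neighbour checks for deletions. The paper handles insertions and deletions uniformly by placing the two adjacent intervals into the working window together with the three split pieces (five intervals in all) and then greedily merging consecutive pairs inside this window until every adjacent pair is blocked. Your own non-cascade argument (any merged interval still contains one member of an unchanged, previously blocked pair on each side) then shows the process terminates inside the window, giving $O(1)$ hull split/join/separation operations of cost $O(\log^2 n)$ each.
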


\begin{proof}
    The proof is illustrated by Figure~\ref{fig:merge}.
    For any $s,t \in \mathbb{Z}$ with $s \leq t$, we say that $S[s, t]$ is \emph{blocked} if there exists no $\eps$-cover of $S[s, t]$ of size $1$. 
    We maintain an $\eps$-cover $f$ where for all consecutive intervals $[a, b], [c, d] \in \Lambda(f)$, $S[a, d]$ is blocked. Thereby, $|f| \leq \frac{3}{2}|f'|$ for any $\eps$-cover $f'$ of $S$ (we give a proof of this fact in Appendix~\ref{app:orourke}). 

        We consider inserting a value $s$ into $S$; deletions are handled analogously.  
        We query $B(f)$ in $O(\log n)$ time for an interval $[a, b]$ that contains $s$. If no such interval exists, set $[a, b] = [s, s]$. 
        We search $T(S[a, b])$ and test whether $s \in S$. If so, we reject the update. 

        Otherwise, we remove $[a, b]$ from $\Lambda(f)$ and insert the intervals $([a, s], [s, s], [s, b]$). We obtain $T(S[a, s])$, $T(S[s, s])$ and $T(S[s, b])$ through the split operation. 

        Let $([w, x], [y,z], [a, s], [s, s], [s, b], [c, d], [e, f])$ be consecutive intervals in $\Lambda(f)$ and denote $I  = ([y, z], [a, s], [s, s], [s, b], [c, d])$ (see Figure~\ref{fig:merge} (c) ). 
        For each $(s, t) \in I$, we have access to $T(S[s, t])$.  For any consecutive pair $([s, t], [q, r] )$ in $I$, we may join the trees $T(S[s, t])$ and $T(S[q, r])$ in $O(\log^2 n)$ time to obtain $T([s, r])$.
        We invoke $T([s, r])$.\texttt{get\_hull()} and apply Theorem~\ref{thm:separation} to test in $O(\log^2 n)$ total time whether $S[s, r]$ is \emph{blocked}. If it is not,  we replace $[s, t]$ and $[q, r]$ by $[s, r]$. Otherwise, we keep $T(S[s, r])$ and a complexity-1 $\eps$-cover of $S[s, r]$.

        By recursively merging pairs in $I$, we obtain in $O(\log^2 n)$ time a sequence $I'$ of intervals $( [y, \beta], \ldots, [\gamma, d])$ where consecutive intervals are blocked. 
        Since $[y, z] \subseteq [y, \beta]$, $( [w, x], [y, \beta])$ is blocked. Similarly, $([\gamma, d], [e, f])$ must be blocked.
        We remove the line segments corresponding to $I$ from $f$ and replace them with line segments derived from $I'$ in constant time. As a result, we maintain our $\eps$-cover $f$ and our data structure in $O(\log^2 n)$ total time. 
  \end{proof}

\begin{figure}[h]
    \centering
    \includegraphics[]{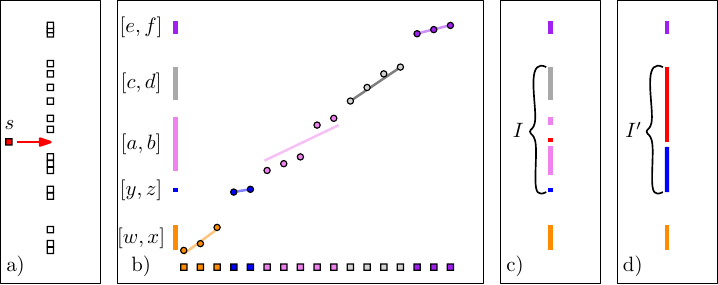}
    \caption{(a) Let $S$ be a set of values and let us insert $s$. (b) We consider our $\eps$-cover $f$ and five consecutive intervals in $\Lambda(f)$.
    (c) We create seven intervals by splitting $[a, b]$ on $s$. 
    (d) By recursively merging intervals in $I$, we obtain a set of intervals $I'$ where consecutive intervals are blocked.     
    }
    \label{fig:merge}
\end{figure}

\newpage

\section{From an $\eps$-cover to an indexing structure}
\label{sec:indexing}

A  learned index $h_\varepsilon$ does not immediately support indexing and range queries.
We obtain an indexing structure by combining $h_\varepsilon$ with a hash map $H$. 
Combining learned models with hash maps is not new~\cite{ lin2023learning, torralba200880, wang2015learning}  and this technique has even been applied to learned indexing~\cite{kraska2018case}. 
The core idea is to store $S$ in an unordered vector $A$ and maintain a Hash map $H : \mathbb{Z} \mapsto [n]$. 
Given some $q \in \mathcal{U}$, the learned function then produces a value $v$  such that $A[h(v)]$ is `close' to $q$. 
It is compelling to create $H$ such that $A[ H(h(q))]$ is (approximately) the predecessor of $q$. However, dynamically, this approach fails for the same reason that storing each $s \in S$ at $A[\textnormal{\texttt{rank}}(s)]$ fails. Since the ranks of elements in $S$ are constantly changing, we build a hash map using the parts of $S$ that remain constant: the values.

\subparagraph{Our data structure.}
In Appendix~\ref{app:indexing}, we define a data structure independent of the learned index $h_\varepsilon$ (we illustrate our approach in Figure~\ref{fig:approach} (a) + (b)). A \emph{page} $p$ is an integer with a vector that stores all $s \in S$ where  $\lfloor \frac{s}{\eps} \rfloor = p$, in order. 
We store all non-empty pages $P$ in an unordered vector $A$. We maintain a hash map $H : P \rightarrow [|A|]$, where $A[H(p)]$ contains the page $p$.  
We additionally maintain a doubly linked list over all pages in $P$, arranged in sorted order. 

\subparagraph{Our queries.}
We restrict our learned index $h_\eps$ to a \emph{vertical $\eps$-cover}. I.e., $h_\eps$ is a $y$-monotone collection of line segments such that for all points $p \in F_S$, a vertical line segment of height $2\eps$ centred at $p$ intersects a segment in $h_\eps$.
We compute $h_\eps$ oblivious of our paging structure. 

Given $q \in \mathcal{U}$, we project $q$ onto $h_\eps$ (Figure~\ref{fig:approach} (d)).  We project to the $x$-axis, floor the value, and project back to $h_\eps$. 
We prove that the resulting $y$-value corresponds to the page $p$ containing \texttt{predecessor}($q$). 
This way, we answer \texttt{predecessor} using $O(\eps + \log |h_\eps|)$ time.

As a result, we dynamically maintain a learned index $h_\eps$ and a data structure that updates in $O(\eps + \log^2 n)$ and supports indexing queries in $O(\eps + \log |h_\eps|)$ expected time (Theorem~\ref{thm:main}). 

\begin{figure}[h]
    \centering
    \includegraphics[width = 0.95\linewidth]{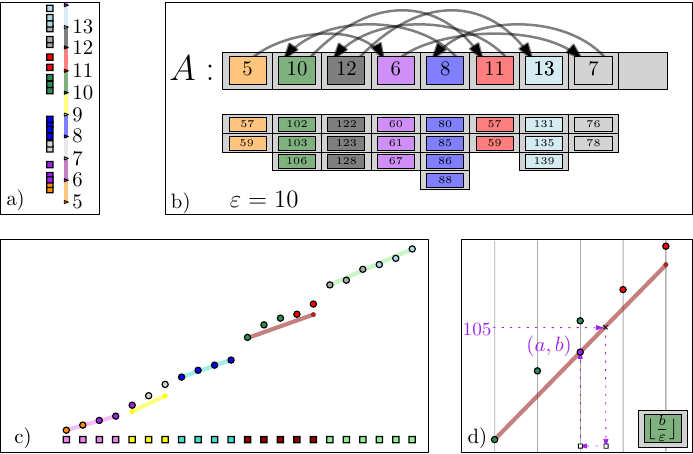}
    \caption{An illustration of our approach in Appendix~\ref{app:indexing}.}
    \label{fig:approach}
\end{figure}

\begin{theorem}\label{thm:main}
    For any $\eps$, there exists a data structure to dynamically maintain a vertical $\eps$-cover $F$ of a dynamic set of distinct integers in $O(\eps + \log^2 n)$ time. 
    We guarantee that there exists no vertical $\eps$-cover $F'$ with $|F| > \frac{3}{2}|F'|$. The  
    data structure supports indexing queries in $O(\eps + \log |F|)$ expected time and range queries in additional $O(k)$ time where $k$ is the output size. 
\end{theorem}

\newpage
\section{Experiments}
\label{sec:experiments}

Our implementation is written in \texttt{C++} and made publicly available~\cite{impl_index}. We compare to the \texttt{C++} implementation in~\cite{ferragina2020pgm}, which uses a PGM index under the logarithmic method. 
The experiments were conducted on a machine with a 4.2GHz AMD Ryzen 7 7800X3D and 128GB memory. Our test bench is available~\cite{impl_bench}, and can replicate experiments, generate synthetic data, and produce plots.
As input we consider two synthetic data sets and two real world data sets. Three contain data of geometric nature, with one of random nature to align with precedent. Each set consists of unique 8 byte integers in randomly shuffled order. In Appendix~\ref{app:experiments}, we showcase additional experiments on other datasets.
\begin{itemize}
    \item \textbf{LINES} is a synthetic data set of 5M integers that, in rank space, produces 5 lines of exponentially increasing slope. This set models the ideal scenario for a PGM index. 
    \item \textbf{LONGITUDE} is a real world data set that contains the longitudes of roughly 246M points of interest from OpenStreetMap, over the region of Italy. 
    This data is thereby inherently of geometric nature.
    This data set was used in both~\cite{ferragina2020pgm}  and \cite{kipf2019sosdbenchmarklearnedindexes}. We follow~\cite{ferragina2020pgm} and convert the data to integers by removing the decimal point from the raw longitudes.
    \item \textbf{UNIF} originates from~\cite{ferragina2020pgm}. It is a synthetic data set, containing a uniform random sample of 50M integers from $(0,10^{11})$. We adapt this data set to our dynamic setting.
    \item \textbf{DRIFTER} is a real world data set, containing roughly 1.7M steps of accumulated distance travelled by ocean drifters tracked through GPS~\cite{conradi2023}.
\end{itemize}
\subparagraph{Measurements.}
We compare the quality of the learned indices based the complexity of $h_\varepsilon$, in a dynamic setting. We use the same choice of $\eps = 64$ as in~\cite{ferragina2020pgm} across our experiments.
For performance of the indexing structures, we measure their time per operation in a dynamic scenarios with a range of query to update ratios. 
We note that logarithmic PGM is by default equipped with an optimisation that avoids building a PGM for data below a certain size. In this case, it instead only uses an underlying sorted array without additional search structure. 
In order to properly compare the performances, this optimisation has been disabled. 

\subsection{The learned index complexity}

Figure~\ref{fig:res_linecount} presents the complexity of the learned indices, measured by the number of line segments maintained during random-order insertions. The behaviour differs between datasets. For the geometric \textsc{Lines} and \textsc{Longitude} datasets, the dynamic and logarithmic PGMs initially perform similarly. As the data grows, the performance of the logarithmic PGM degrades and its jagged progression reflects its logarithmic partitioning.

On the highly structured synthetic \textsc{Lines} data, the logarithmic method consistently retains more segments than necessary. It misses the optimal line count by a wide margin due to its fragmentation across $O(\log n)$ buckets. A similar pattern appears in the \textsc{Longitude} dataset, where the logarithmic PGM maintains roughly 50 percent more segments than our solution. We note that precisely on these structured data sets, the complexity of the learned indices is $o(n)$. I.e., precisely here one also expects improvements in query time.  

\begin{figure}[hbt]
    \centering
    \includegraphics[width=0.49\linewidth]{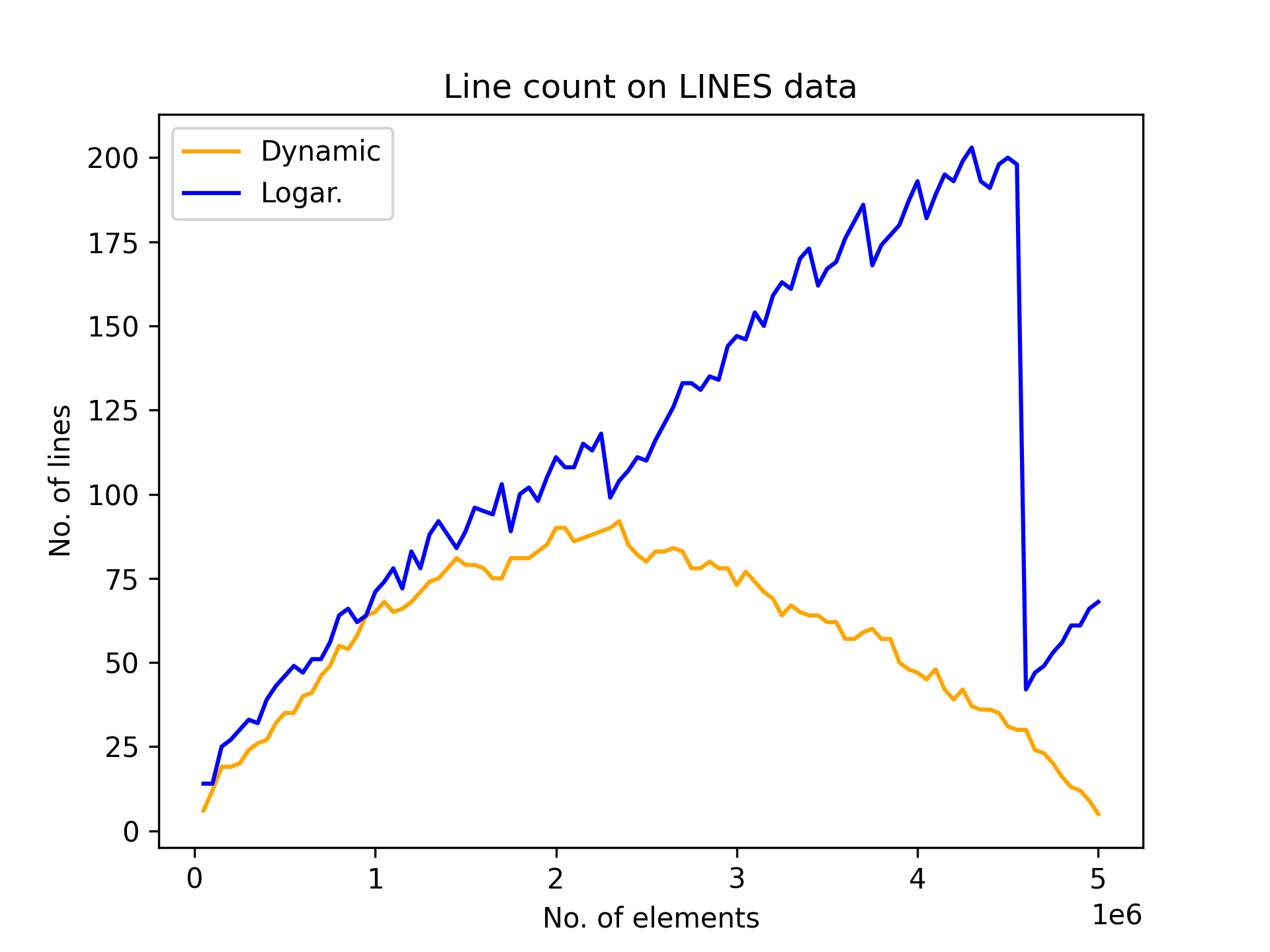}
    \includegraphics[width=0.49\linewidth]{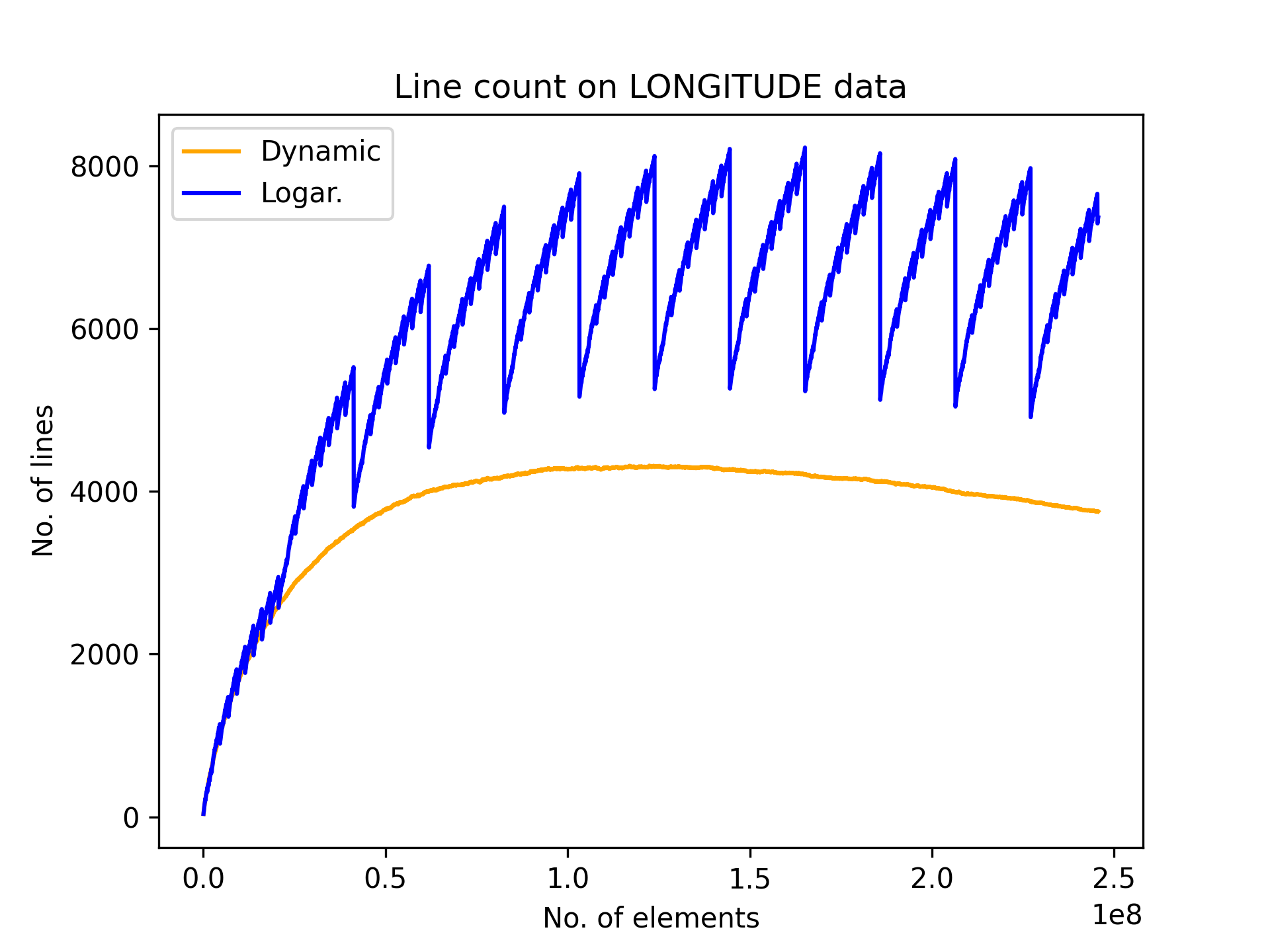}
    \includegraphics[width=0.49\linewidth]{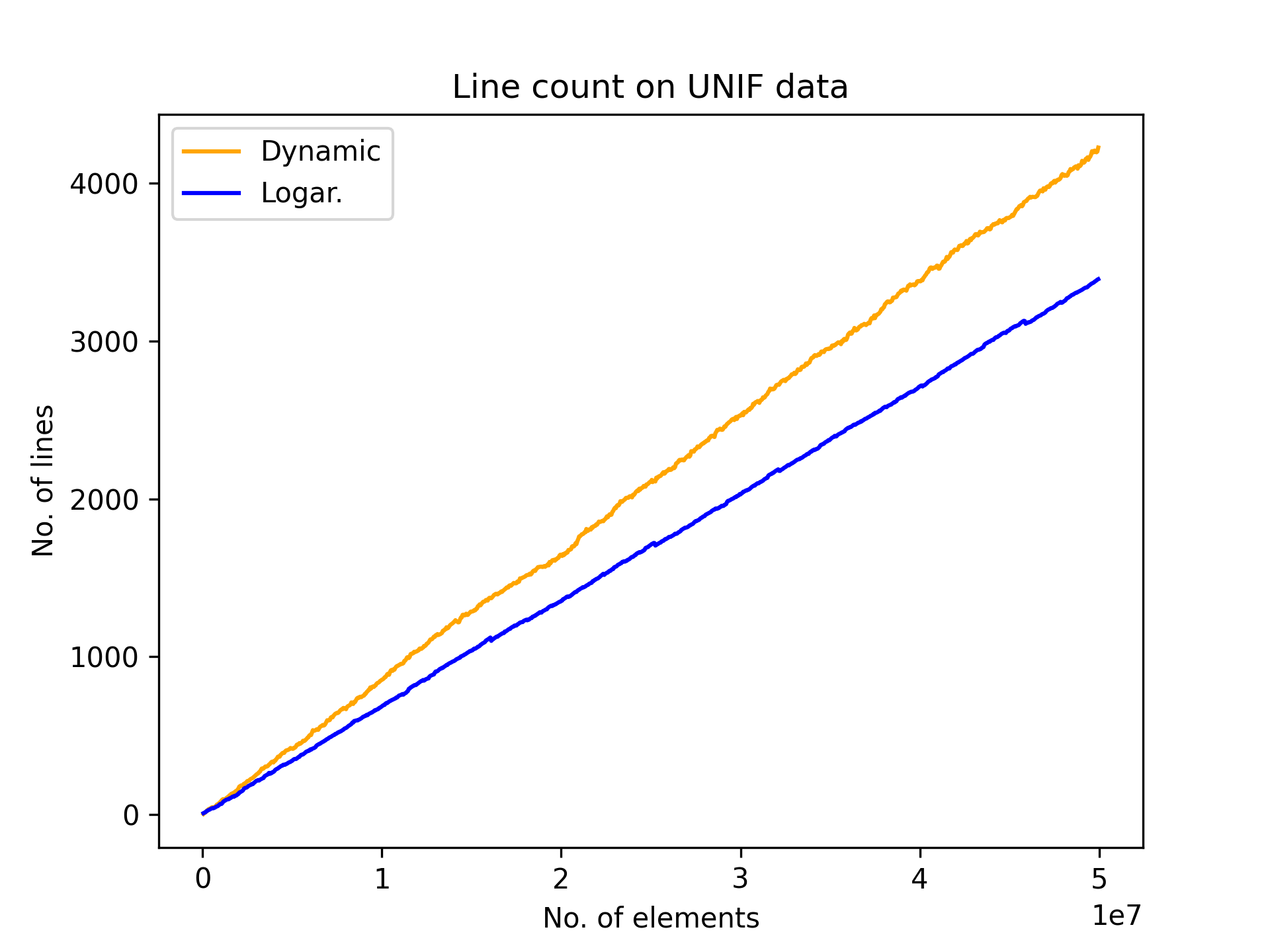}
    \includegraphics[width=0.49\linewidth]{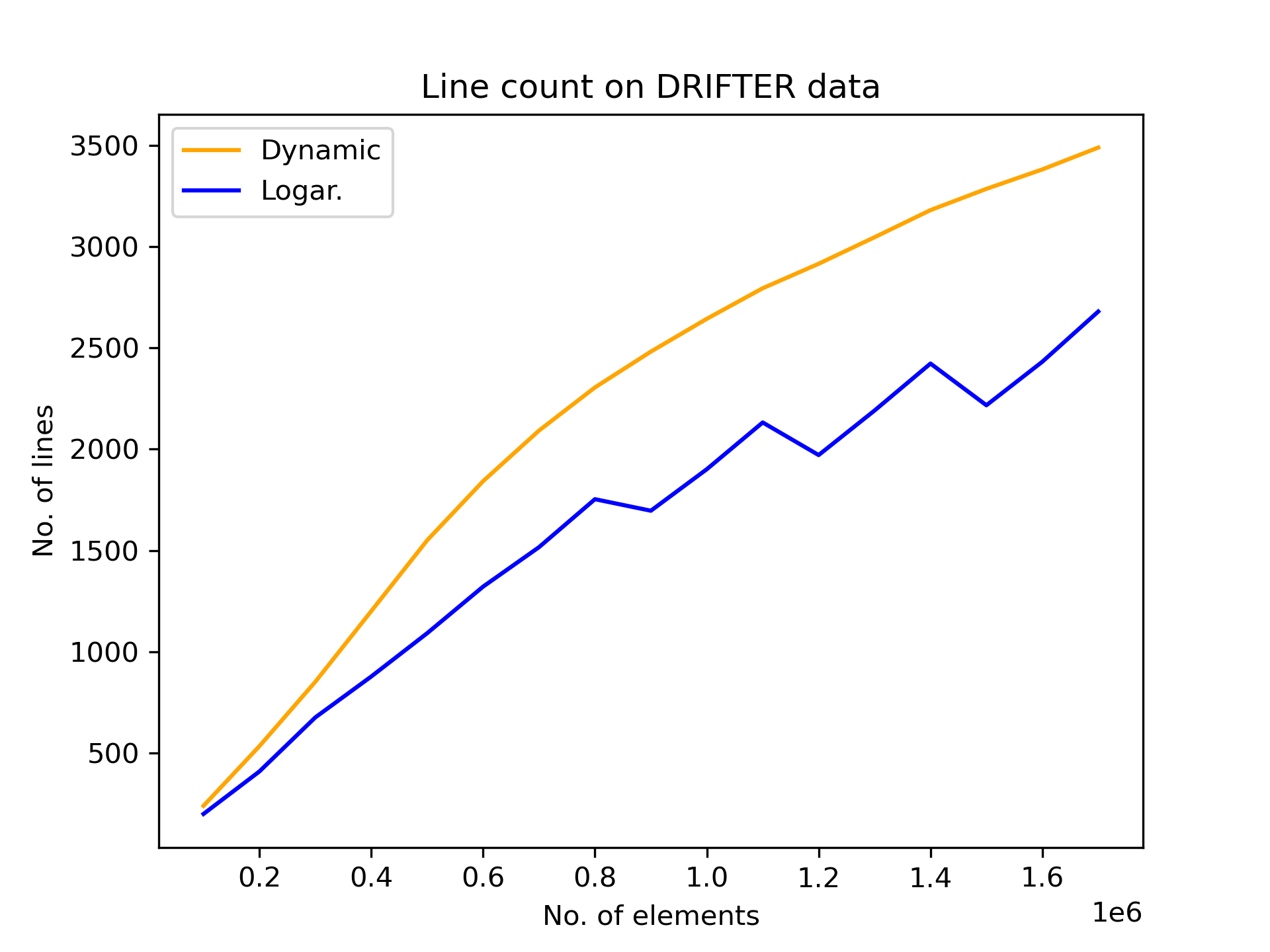}
    \caption{The complexity of the learned indices throughout insertion-only construction. The top graphs represent geometrically structured data. The left graphs represent synthetic data.}
    \label{fig:res_linecount}
\end{figure}

For unstructured data such as \textsc{Unif}, both learned indices display similar asymptotic trends, with the logarithmic PGM using approximately 30 percent fewer segments. Here, the complexity appears to scale as $\Omega(n)$ for both methods, suggesting that learned indices offer few to no improvements in the absence of exploitable structure.
Surprisingly, the \textsc{Drifter} data -- despite its geometric origin -- shows similar results to that of \textsc{Unif}. Again, complexity scales linearly, and the logarithmic method outperforms the dynamic one by around 30 percent. This implies that whatever latent geometric structure exists is insufficiently captured by the learned index under either strategy.

An interesting observation is that, on unstructured data, the imposed bucketing of the logarithmic PGM index can introduce a form of regularity that the model benefits from, essentially imposing artificial structure where none exists.

\subsection{Running time comparisons}
Recall that our indexing structure is composed of a learned index over a vertical $\varepsilon$-cover and a paging structure. We first examine the performance of both the learned index in its own, and then the performance of the full indexing structure.

For the dynamic scenarios, we first follow the precedence set by prior papers~\cite{ferragina2020pgm,kipf2019sosdbenchmarklearnedindexes}. These first construct, insertion-only, the indexing structure. They then perform a batch of $10$M operations
These batches consist of insertions, deletions, and range queries over ranges such that the output contains approximately $\frac{\sqrt{n}}{10}$ elements.
We deviate from the precedent by also deleting from the index before performing a batch of operations, to simulate a scenario in which the structure has existed and transformed prior to processing.

\subsubsection{Maintaining a learned index}
Figure~\ref{fig:maintenance} shows the cost of maintaining each learned index under dynamic operations. Our update procedure, which ensures worst-case $O(\log^2 n)$ bounds, performs significantly worse than the amortised $O(\log n)$ updates offered by the logarithmic method -- especially on larger datasets such as \textsc{Longitude} and \textsc{Unif}. On smaller datasets like \textsc{Drifter}, the gap narrows, but the logarithmic method generally remains preferable in these update-only scenarios.

\begin{figure}[h]
    \centering
    \includegraphics[width=0.32\linewidth]{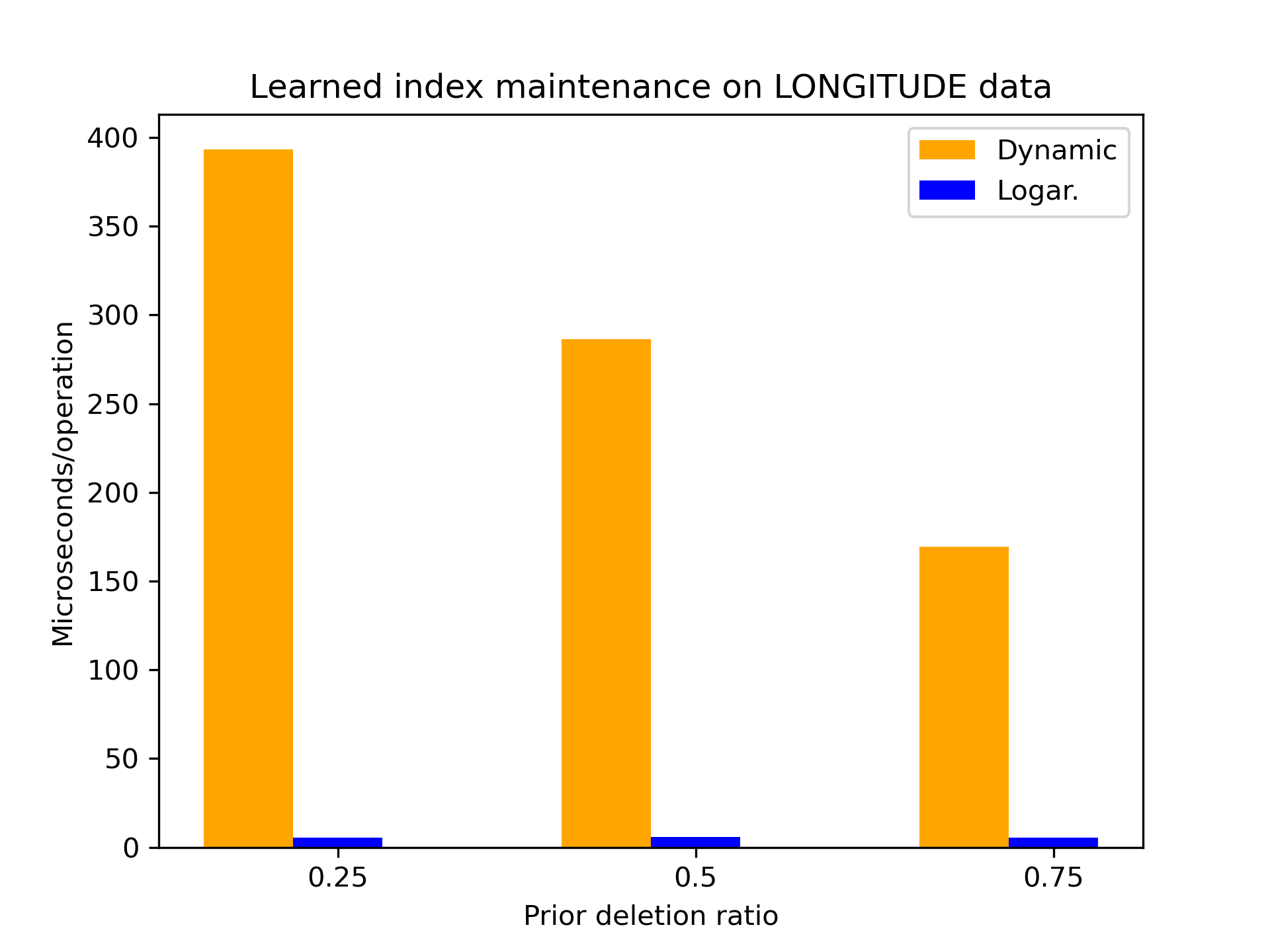}
    \includegraphics[width=0.32\linewidth]{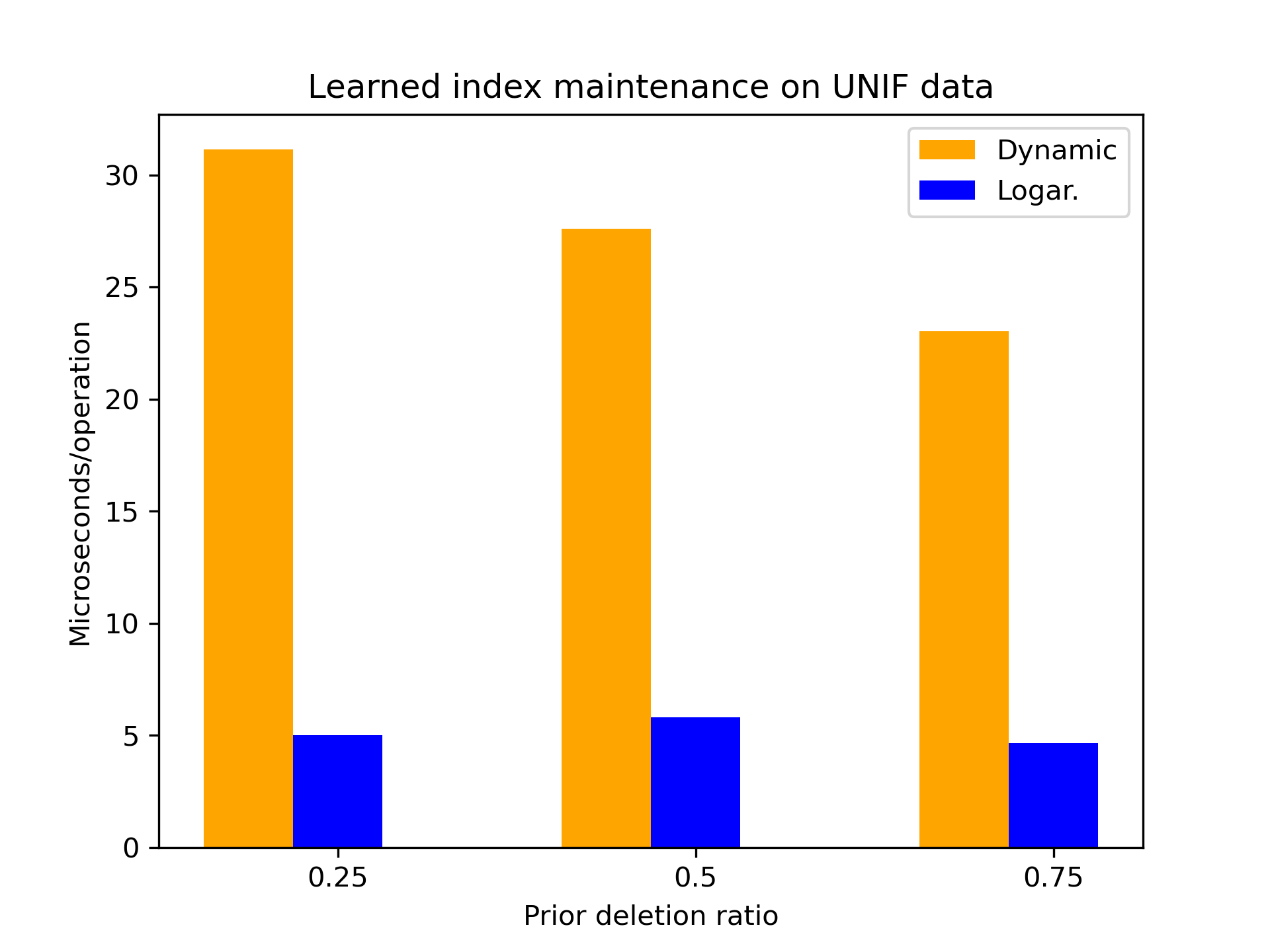}
    \includegraphics[width=0.32\linewidth]{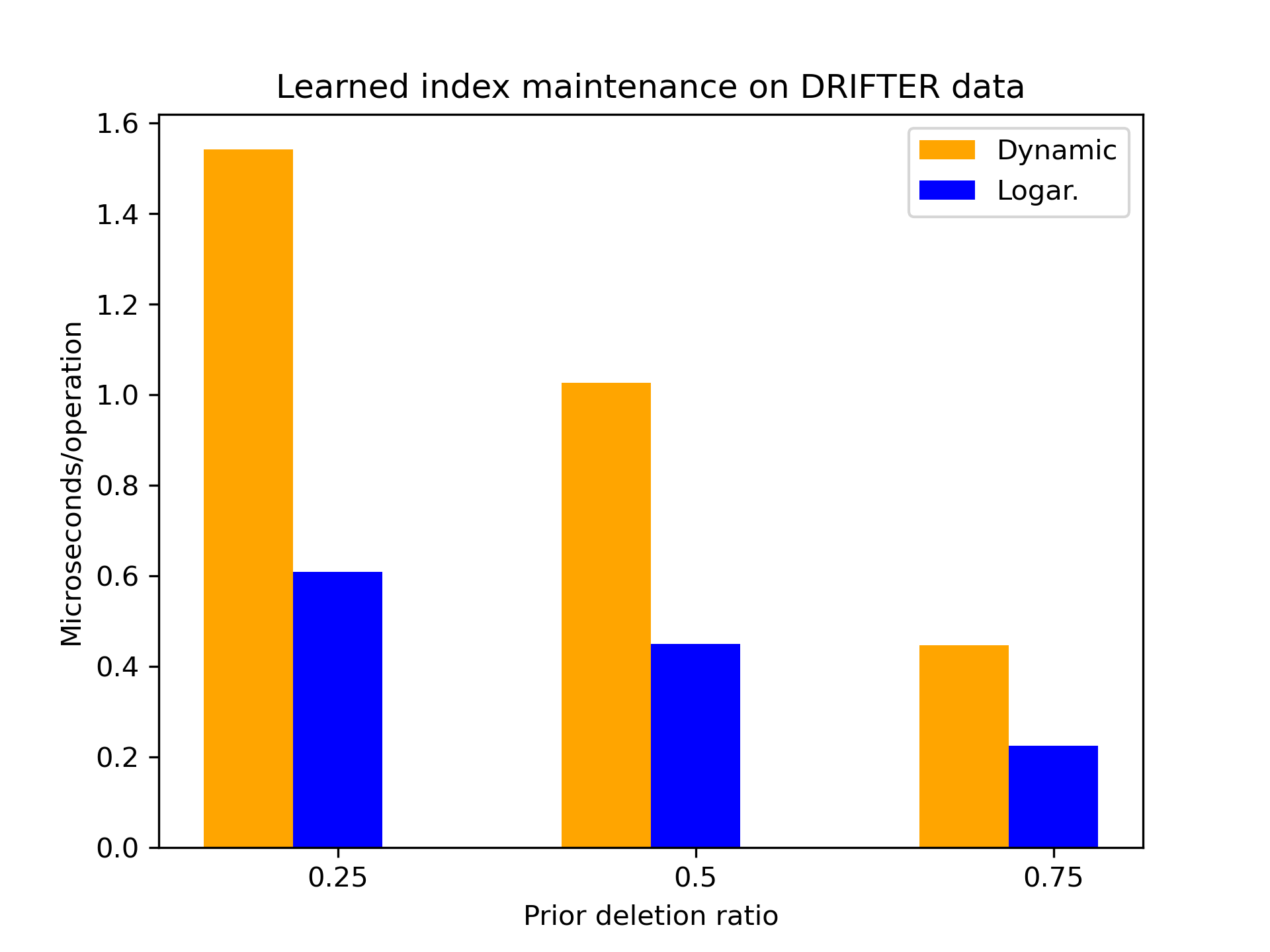}
    \caption{Update times for maintaining the learned index dynamically. }
    \label{fig:maintenance}
\end{figure}

\subsubsection{Indexing data structures}
We next assess the complete indexing structures under mixed workloads. Batches contain a tunable ratio of queries and updates, where updates are evenly split between insertions and deletions, and operations are randomly ordered. More experiments can be found in Appendix~\ref{app:experiments}.
For the \textsc{Lines} dataset, updates are constrained to a single line segment to preserve its idealised structure. For smaller datasets, batch size is limited to 25 percent.

%We deviate from this procedure for LINES. The goal of LINES is to illustrate how these algorithms compare under (near)-perfect structure.  Therefore, we keep the structure of this data set largely intact and restrict the batch of updates to lie on only one of the line segments. For our smaller datasets, we limit the number of operations in a batch to $25\%$ of the size of the dataset.

\subparagraph{Adversarial workload (ADV).}
In the above scenario by~\cite{ferragina2020pgm,kipf2019sosdbenchmarklearnedindexes}, an update batch of $10M$ operations affects less than ten percent of the data. Therefore, we do not encounter the worst case scenario where range queries take $O(N + \eps + \sum_i^{\lceil \log n \rceil} \log |f_i|)$ time. So, the worst case difference in performance does not come to light. Therefore, we additionally construct an \emph{adversarial scenario} consisting of 10M range queries after deleting all but $1.000$ values. 

\begin{figure}[h]
    \centering
    \includegraphics[width=0.49\linewidth]{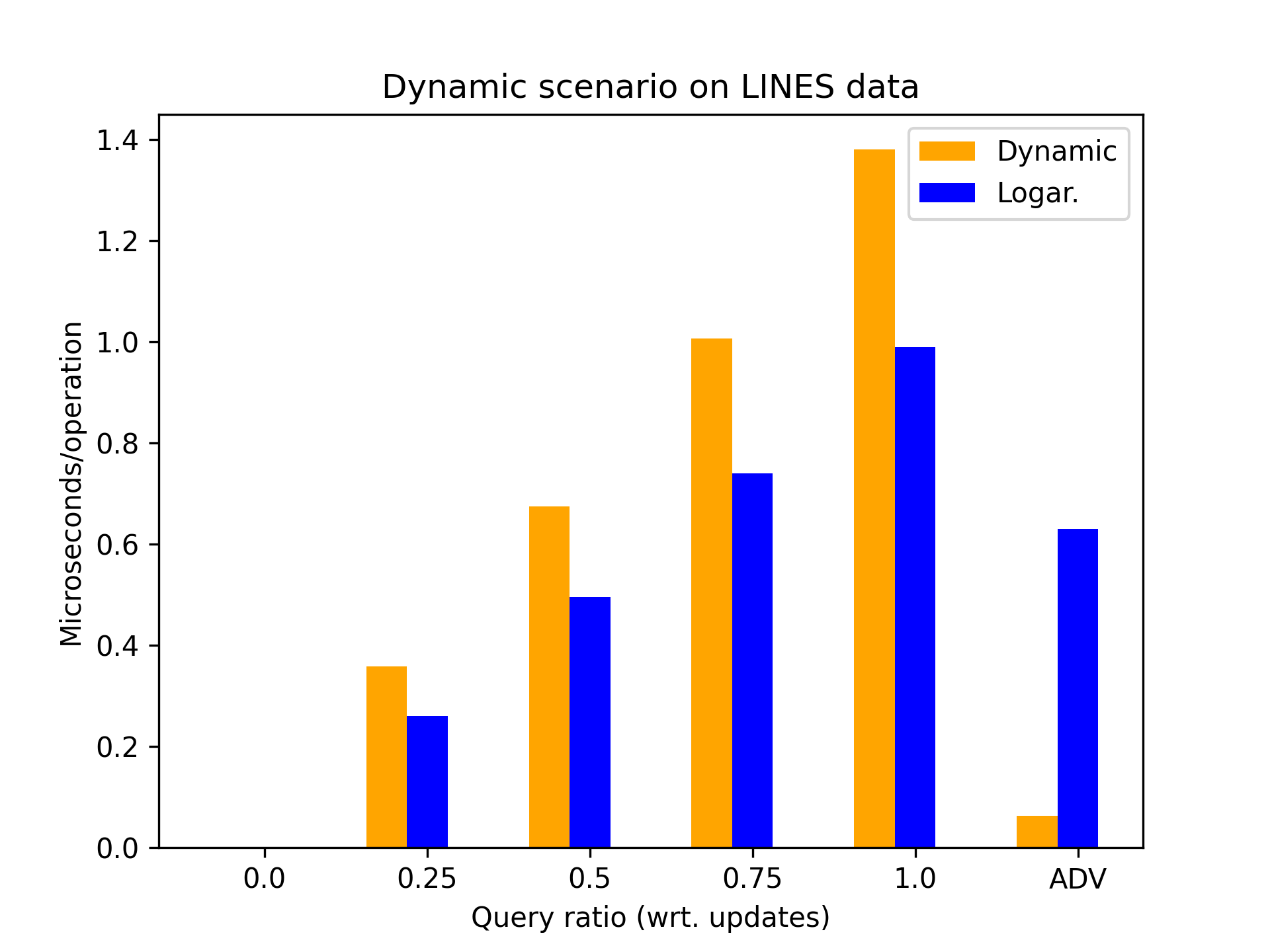}
    \includegraphics[width=0.49\linewidth]{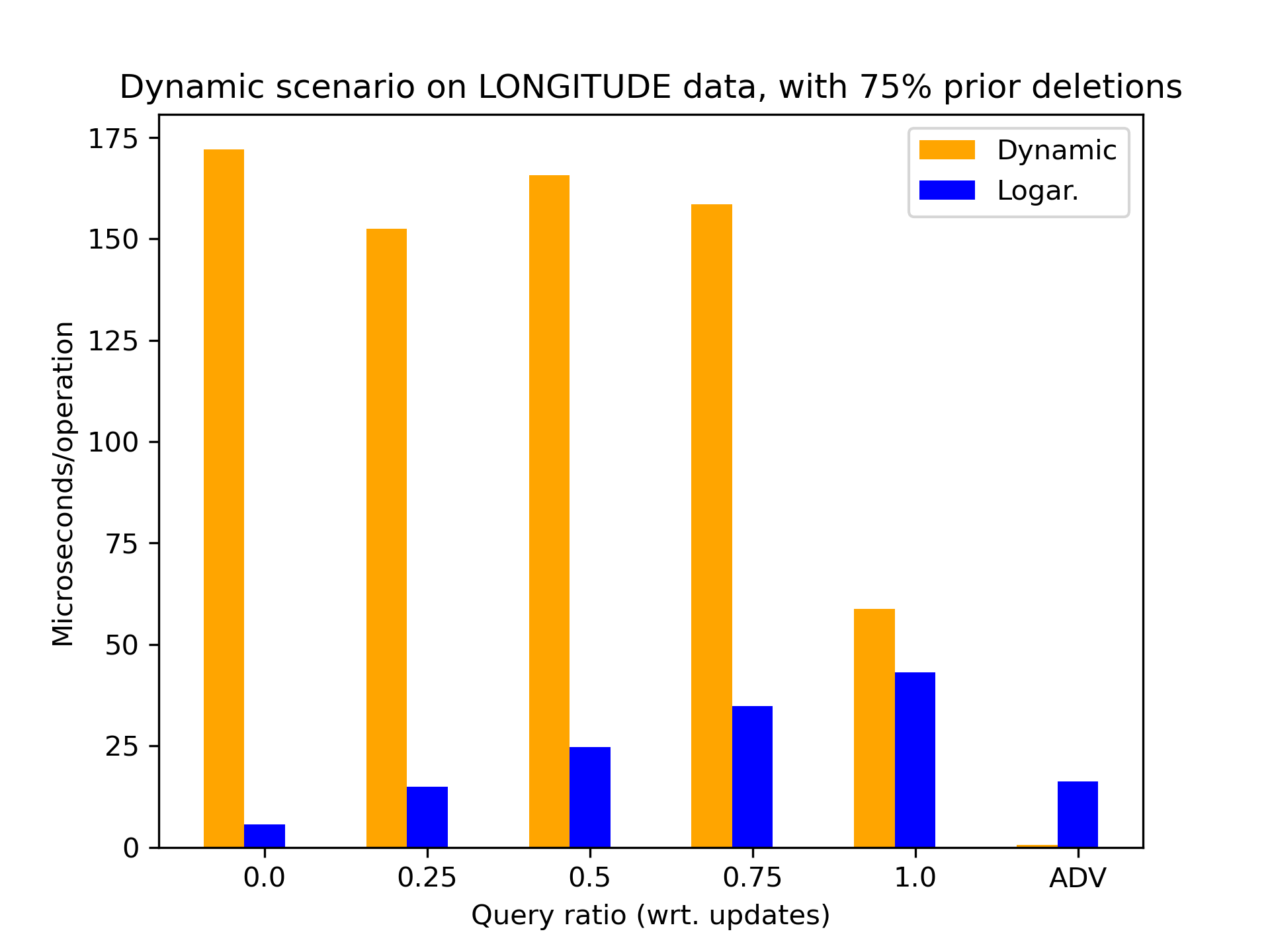}
    \includegraphics[width=0.49\linewidth]{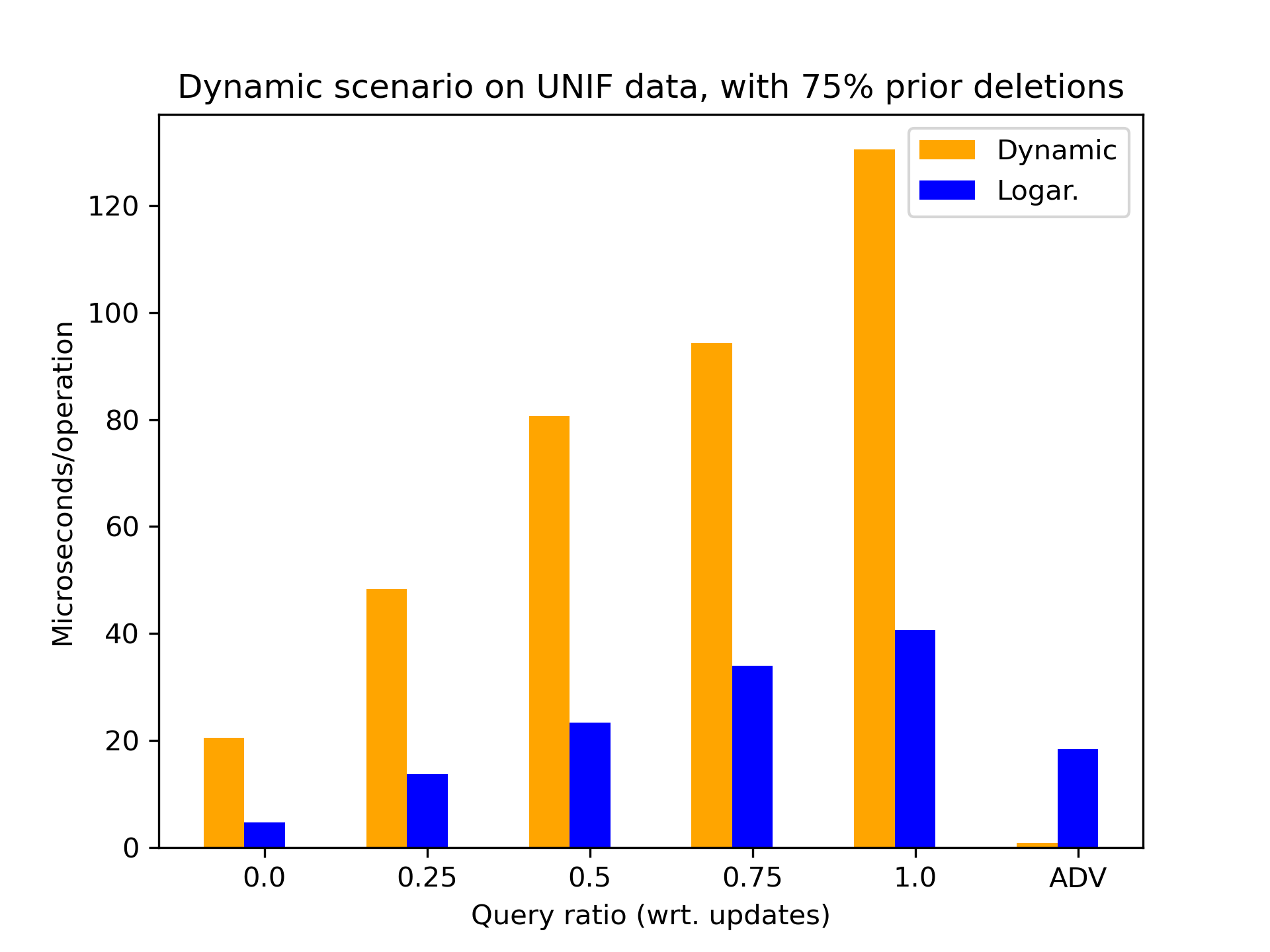}
    \includegraphics[width=0.49\linewidth]{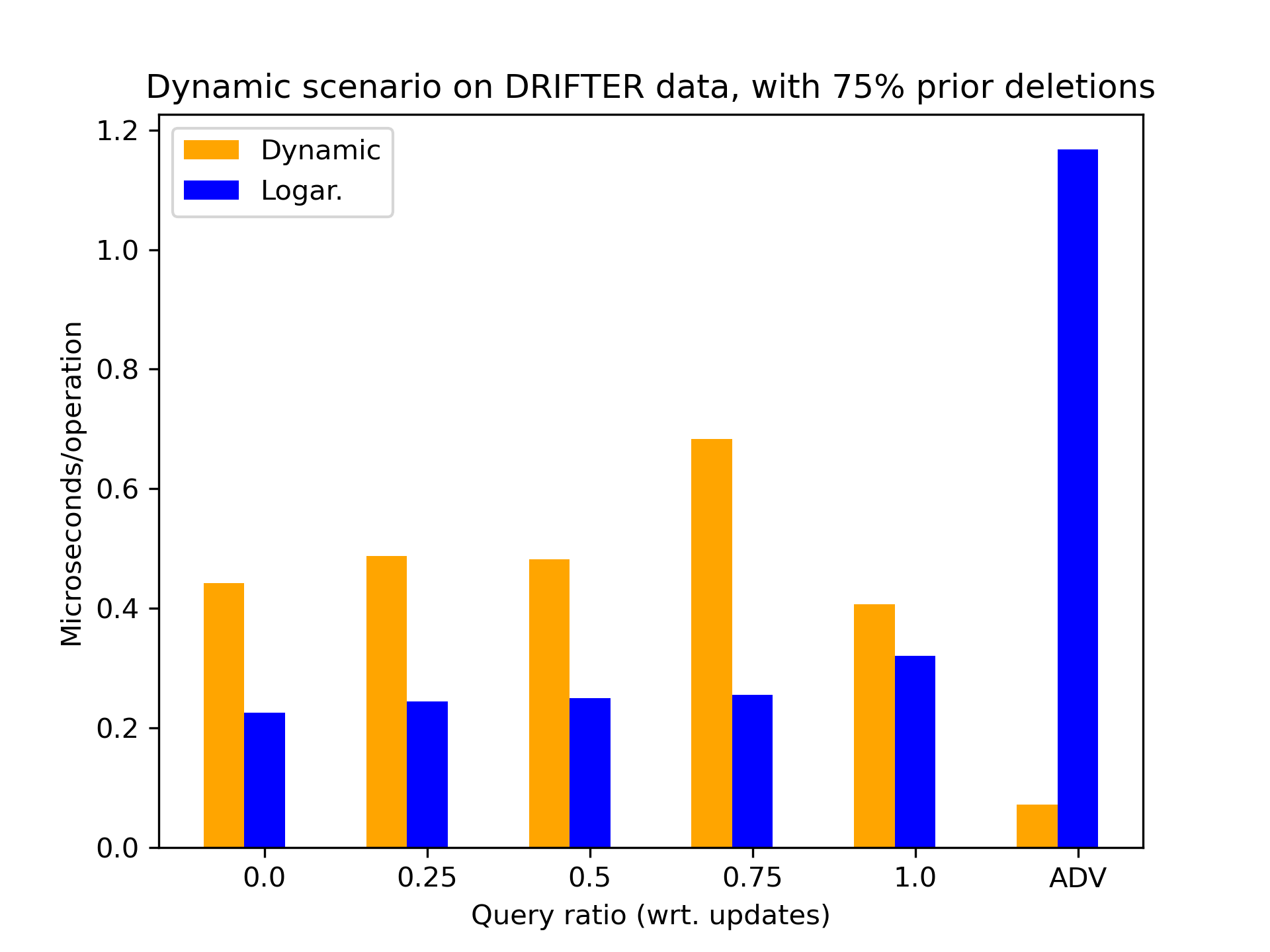}
    \caption{Time per operation in dynamic scenario with varying query ratios.  For the LINES data set updates are restricted to points on a single line. The ADV ratio denotes the adversarial case where all but a constant number of elements are deleted before the structure is queried.}
    \label{fig:res_updates}
\end{figure}

\subparagraph{Operational performance.}
Figure~\ref{fig:res_updates} summarises the total processing time under varying query-to-update ratios. Structured datasets, particularly \textsc{Longitude}, reveal a trade-off between the fact that we have a lower-complexity learned index and  our update time. 
On one hand, our updates are costly due to our line-merging tests which suffer from the poor cache behaviour from pointer-based trees, and complex rebalancing that is present in the state-of-the-art practical dynamic convex hull data structures.
This behaviour is also reflected on our previous analysis of maintaining the learned index itself. On the other hand we see on
 geometrically structured data that, as the query ratio increases, performance improves.  
This trend is absent in the \textsc{Lines} data, where updates are constrained and the structure is small, mitigating cache penalties. Here, running time is largely driven by query processing, which increases as expected with query ratio.

%For the dynamic scenario, shown in Figure~\ref{fig:res_updates}, we measure the total running time of processing batches of dynamic operations. The $x$-axis shows the ratio between queries and updates within the batch. The rightmost column is our adversarial case. 

% In these results, there are also distinguishing patterns between structured and non-structured data. For the structured data, we pay a much higher cost for updates to the structure relative to the cost of queries. This is easily seen on the LONGITUDE data, where  there is a steep decrease in running time as the query ratio increases. The cost of updates stems from having to test if lines containing many points can potentially be merged. While the update time is $O(\log^2n)$, the memory access pattern is very bad. These observations align with the results on maintaining the learned index alone. Because the convex hull structure relies internally on pointer based trees, it incurs cache misses as hulls have to be split, joined, and the internal structure is restored.  This is also the reason why we do not see this behaviour on the LINES data. Since both indexing structures are built over a small set of lines (see Figure~\ref{fig:res_linecount}), the memory overhead is small enough that the latency is hidden. In addition, the updates are restricted to a single line, meaning little restructuring has to be done in order to maintain the structure of the learned index. Not surprisingly, the time increases with the query ratio, as for both structures most of the time is spent traversing the output range.

In contrast, the random data from the \textsc{Unif} dataset shows improved performance with higher query ratios. In our data structure, having smaller segments reduce restructuring costs. Our dynamic implementation suffers greatly from cache inefficiency due to random access during range reporting. The logarithmic PGM, with its lower memory overhead and sequential layout, delivers significantly better performance -- particularly as sizes increase.

% For the random data in UNIF, the performance instead increases with the query ratio. As the segments here contain fewer elements, the impact of the previously mentioned tree structure operations is limited.  However, for range queries, the dynamic PGM here is heavily penalized by the unordered nature of the underlying array. When reporting a range, the dynamic PGM potentially occurs a cache miss for every element reported. The low memory overhead, and contiguous storage, of the logarithmic PGM puts it far ahead on performance in the random setting. Notably, as the sizes of data sets increase, the logarithmic PGM is also affected by memory access, which is most evident on the LONGITUDE data.

For the \textsc{Drifter} data, one would expect a trend similar to that of the \textsc{Unif} data, based on the complexity of the $\varepsilon$-cover shown in Figure~\ref{fig:res_linecount}. However, there is little difference in performance for either PGM as the query ratio is varied. This is likely due to the small size of the data set, with both structures performing updates in fractions of microseconds on average. The logarithmic PGM does come out slightly on top, likely due to the machine friendly memory access pattern.

\subparagraph{Adversarial impact.} Across all datasets, the logarithmic PGM's tombstoning strategy becomes a bottleneck in the adversarial scenario. Its range queries must scan and subtract deleted values before outputting results. Our structure, by contrast, is output-sensitive and avoids such overhead.
This illustrates that our data structure, whilst being generally less efficient that the logarithmic PGM, offers worst-case guarantees.

\section{Conclusion}
We studied dynamic learned indices through a geometric perspective. Following the work of Ferragina and Vinciguerra~\cite{ferragina2020pgm}, we maintained a learned index $h_\varepsilon$ of a dynamic set $S$ as a piecewise-linear approximation---an $\varepsilon$-cover---of the rank-space point set $F_S$. We used techniques from computational geometry to answer the following question: 

\begin{quote}
    ``Can a learned index be dynamically maintained with worst-case guarantees?''
\end{quote}

\noindent
We proposed a new approach to maintain a learned index based on dynamic convex hull data structures. We presented an $O(\log^2 n)$ time algorithm to dynamically maintain a learned index $h_\eps$. To obtain this algorithm, we showed an algorithm to compute a separating line between two non-intersecting convex hulls -- an operation previously missing from the literature.
The existing logarithmic PGM index has an  amortised $O(\log n)$ time update algorithm which is more efficient in practice. Indeed, although close in theory, the memory access pattern associated with dynamically maintaining convex hulls incurs heavy penalties for large datasets. At the same time, the resulting learned index of the logarithmic PGM, $h_\eps'$, can be considerably more complex on geometrically structured data.

%Specifically, we introduced a method for maintaining $\varepsilon$-covers dynamically via convex hull intersection testing.  While convex hull intersection testing is a well-studied problem, we highlighted the surprising difficulty of implementing it robustly. To address this, we presented a domain-specific implementation tailored to our setting, accompanied by a formal proof of correctness. Crucially, we extended this approach to compute a separating line in the negative case -- an operation previously missing from the literature. We combine convex hull intersection testing with dynamic convex hull data structures to dynamically maintain an $\varepsilon$-cover.  Our empirical evaluation shows that this geometric strategy yields an efficient and competitive method for maintaining a dynamic learned index when compared to the only existing fully dynamic approach: the logarithmic PGM index.

\subparagraph{From learned indices to indexing.}
Finally, we considered the following question:

\begin{quote}
    ``Can a dynamic learned index be converted into a dynamic indexing data structure?''
\end{quote}

\noindent
To this end, we designed a hybrid technique combining hashing-based fast-access data structures with a doubly linked list to support indexing queries. Our method offers output-sensitive worst-case guarantees, even in the presence of deletions. As it is known that traditional indexing structures currently outperform learned indices in the general dynamic setting~\cite{Sun23survey}, we focused our comparisons on improving the theoretical and practical performance within the class of learned approaches.

In practice, 
the contiguous memory access of the logarithmic PGM index offsets its overhead from querying multiple structures, making it faster in all scenarios, except for an adversarial one.
This means that the lower complexity of our learned index does not immediately translate to improved efficiency in the indexing data structure.
This raises an interesting open question of whether memory-access efficient fully dynamic approaches to convert a learned index into a dynamic indexing structure can exist. 

While we acknowledge that our update-times are slow in comparison with state-of-the art, 
our approach does illustrate that it brings worst-case guarantees: as it has an advantage when the query-to-update ratio is large and the index has undergone sufficiently many deletions.  
In adversarial workloads with frequent deletions followed by range queries, we have seen our structure outperform the logarithmic approach -- highlighting the value of worst-case guarantees even in specialised settings.

\subparagraph{Closing thoughts.}
We showed what we believe is an interesting connection between the geometric learned index by Ferragina and Vinciguerra, and dynamic convex hulls from computational geometry. 
We subsequently provided an implementation of a dynamic learned index that relies on the state-of-the-art dynamic convex hull maintenance algorithms. 
Our empirical analysis shows that the complex tree rebalancing that is used to dynamically maintain a convex hull currently brings considerable operational overhead compared to low-memory techniques under the logarithmic method.
Our experiments, though not uniformly favourable, offer interesting insights into the current barriers and adversarial tradeoffs between worst-case dynamic algorithms and memory-efficient amortised rebuilding schemes.

\begin{algorithm}[H]
    \caption{\texttt{intersection\_test}(\texttt{edge} $\alpha \in CH(A)$, \texttt{edge} $\beta \in CH(B)$ )}
    \label{alg:intersection_test}
    \begin{algorithmic}[1]
    \If{ $\alpha = $ \emph{null} OR $\beta =$ \emph{null}}
     \State \Return No
     \EndIf
    \State $s(\alpha, \beta) = line(\alpha) \cap line(\beta)$
    \If{If $s \in \alpha$ and $s \in \beta$}
    \State \Return Yes
    \EndIf
    \If{$\alpha$.slope  $<$  $\beta$.slope}
    \If{ $\alpha$.first.$x$ > $s(\alpha, \beta)$.x}
    \State \Return \texttt{intersection\_test}($\alpha$.left, $\beta$)
    \ElsIf{ $\beta$.first.$x$ > $s(\alpha, \beta)$.x }
    \State \Return \texttt{intersection\_test}($\alpha$, $\beta$.left)
    \ElsIf {$\alpha$.first.$x$ $>$ $\beta$.second.$x$ AND $\alpha$.first.$y$ > $\beta$.second.$y$}
    \State \Return  \texttt{intersection\_test}($\alpha$.left, $\beta$)
    \ElsIf {$\alpha$.second.$x$ $<$ $\beta$.first.$x$  AND $\alpha$.second.$y$ < $\beta$.first.$y$}
    \State \Return  \texttt{intersection\_test}($\alpha$, $\beta$.left)
    \Else
    \State \Return yes
    \EndIf
    \EndIf
    \If {$\alpha$.slope > $\beta$.slope}
    \If{$\alpha$.second.$x$ $<$ $s(\alpha, \beta).x$}
    \State \Return \texttt{intersection\_test}($\alpha$.right, $\beta$)
    \ElsIf {$\beta$.second.$x$ $<$ $s(\alpha, \beta).x$}
    \State \Return \texttt{intersection\_test}($\alpha$, $\beta$.right)
    \ElsIf {$\alpha$.first.$x$ $>$ $\beta$.second.$x$ AND $\alpha$.first.$y$ > $\beta$.second.$y$}
    \State \Return  \texttt{intersection\_test}($\alpha$, $\beta$.right)
    \ElsIf {$\alpha$.second.$x$ $<$ $\beta$.first.$x$  AND $\alpha$.second.$y$ < $\beta$.first.$y$}
    \State \Return  \texttt{intersection\_test}($\alpha$.right, $\beta$)
    \Else
    \State \Return yes
    \EndIf
    \EndIf
    \end{algorithmic}
  \end{algorithm}

\bibliography{refs}
\appendix
\newpage 

\newpage
\section{Algorithms engineering for separating lines of convex hulls}
\label{app:intersection_testing}

Let $A = (\alpha_1, \ldots, \alpha_n)$ and $B = (\beta_1, \ldots, \beta_n)$ be convex chains of $n$ edges with positive slope. 
Let $CH(A)$ be an upper-quarter convex hull (the boundary of the minimum convex area containing $A$ and $(\infty, -\infty)$).
Let $CH(B)$ be a lower quarter convex hull (the boundary of the minimum convex area containing $B$ and $(-\infty, \infty)$).

We do intersection testing between $A$ and $B$. In addition, we show an algorithm to compute a separating line between $A$ and $B$ in the negative case. 
We assume that we receive $A$ and $B$ in a tree structure.
Formally, we define a struct: \\

\noindent
\textbf{\texttt{edge}} $\alpha$

\texttt{vertex} first \hspace{2.7 cm} \emph{the first endpoint of $\alpha$}

\texttt{vertex} second \hspace{2.3 cm} \emph{the second endpoint of $\alpha$}

\texttt{edge} left \hspace{3.15 cm} \emph{the median edge of the remaining edges that precede $\alpha$}

\texttt{edge} right \hspace{2.9cm} \emph{the median edge of the remaining edges that succeed $\alpha$} 

\texttt{real} slope \hspace{2.85cm} \emph{the slope of the supporting line of the segment}\\

\noindent
For the first edge of $\alpha_1$ of $A$, we define $\alpha_1$.left as a vertical downward halfline.
For the last edge $\alpha_2$ of $A$, we define $\alpha_1$.right as a horizontal rightward halfline. The first and last edges of $B$ are also incident to either a horizontal and vertical halfline respectively.

\subparagraph{Goal and organisation.}
We first show an $O(\log n)$-time algorithm to test whether $A$ and $B$ intersect and we prove its correctness.
Then we extend this algorithm so that it gives a separating line of $CH(A)$ and $CH(B)$ in the negative case and prove its correctness. 
These algorithms compare the slopes of edges and compute intersection points between supporting lines of edges, which are na\"{i}vely not robust operations.
By Section~\ref{sec:robust}, we can implement these operations in a robust manner.

  \begin{lemma}
    \label{lem:intersection_test}
      Algorithm~\ref{alg:intersection_test} correctly determines whether $\CH(A)$ and $\CH(B)$ intersect.
  \end{lemma}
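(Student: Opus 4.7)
The plan is to prove correctness of Algorithm~\ref{alg:intersection_test} by induction on the combined depth of the subtrees being examined, using a geometric invariant that is preserved by each branching decision. At every recursive invocation on edges $\alpha$ from $A$ and $\beta$ from $B$, let $A'$ and $B'$ denote the subchains corresponding to the current subtrees, augmented by sentinel halflines inherited from the pruning history. The invariant I would maintain is: $\CH(A)$ and $\CH(B)$ intersect if and only if $\CH(A')$ and $\CH(B')$ do. At the leaves (when one of $\alpha.\text{left}, \alpha.\text{right}, \beta.\text{left}, \beta.\text{right}$ is a halfline sentinel), the algorithm tests intersection of the two remaining edges directly, which is correct by a constant-size geometric case check.

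For the inductive step, I would set up the case analysis around the three robust predicates from Section~\ref{sec:robust}. First, use \texttt{slope} to order $\slope(\alpha)$ and $\slope(\beta)$, then use \texttt{lies\_right} to localize the intersection point $line(\alpha) \cap line(\beta)$ relative to the endpoints of $\alpha$ and $\beta$. This yields a coarse classification of where $\alpha$ and $\beta$ sit with respect to each other. If the supporting lines cross inside both edges, $\alpha$ and $\beta$ themselves intersect and we return true. Otherwise, one edge's supporting line misses the other edge, and we use \texttt{wedge} on the adjacent hull edges stored at $\alpha$ (namely the segments determined by $\alpha.\text{left}$ and $\alpha.\text{right}$) to decide whether the supporting line of $\beta$ can possibly enter the wedge at $\alpha$ that faces $\CH(B)$, and symmetrically for $\beta$. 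The key geometric lemma I would extract is: if $line(\beta)$ does not enter the wedge at $\alpha$ on the side of $\CH(B)$, then the entire half of $A$ on the opposite side of $\alpha$ lies strictly in the halfplane bounded by $line(\beta)$ not containing $\CH(B)$, and can therefore be pruned without losing any intersection. Combined with the analogous statement for $\beta$, this gives a rule that always prunes at least half of one of the two subtrees.

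To finish the induction, I would verify for each of the resulting mutually exclusive configurations (distinguished by the outputs of \texttt{slope}, \texttt{lies\_right}, and the two invocations of \texttt{wedge}) that the branch chosen by the algorithm matches the side not pruned by the key lemma, and that the invariant survives the pruning, because replacing $A$ by its non-pruned half and padding with the halfline through $\alpha$ yields a new upper-quarter convex hull whose intersection with $\CH(B)$ coincides with $\CH(A) \cap \CH(B)$. The base of the recursion reduces to constant-size edge intersection tests using only the \texttt{above\_line}/\texttt{below\_line} primitives.

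The main obstacle is the case analysis: although the restriction to positive-slope edges between sentinels $(\infty,-\infty)$ and $(-\infty,\infty)$ eliminates many configurations that would arise for general convex polygons, the interaction between the wedge predicates on the two sides and the parallel-slope boundary case (where $line(\alpha)$ and $line(\beta)$ do not meet) still forces a careful enumeration. I would handle parallel slopes by treating \texttt{lies\_right} as vacuously true on the side toward which the faster-growing line escapes, and confirm that the correctness of pruning in that degenerate case follows from the same supporting-halfplane argument. Once the enumeration is exhausted and each branch is shown to preserve the invariant, correctness follows, and the $O(\log n)$ bound is immediate because each call descends into a strict subtree of either $\alpha$ or $\beta$.
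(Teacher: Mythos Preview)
Your proposal does not match the algorithm you are asked to verify, and the mismatch is structural rather than cosmetic.

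First, Algorithm~\ref{alg:intersection_test} never invokes the \texttt{wedge} predicate. Its branching is driven solely by the slope comparison, by whether an endpoint of $\alpha$ or $\beta$ lies left or right of $s(\alpha,\beta)=line(\alpha)\cap line(\beta)$, and by coordinate-wise domination between the four endpoints $\alpha.\text{first}$, $\alpha.\text{second}$, $\beta.\text{first}$, $\beta.\text{second}$. The \texttt{wedge} predicate belongs to \texttt{separation\_find}, not here. Relatedly, you treat $\alpha.\text{left}$ and $\alpha.\text{right}$ as the hull edges \emph{adjacent} to $\alpha$, but in the data structure they are the roots of the left and right \emph{subtrees} of the balanced tree storing the hull; they may be far from $\alpha$ along the chain. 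So your ``wedge at $\alpha$'' argument is about an object the algorithm never forms.

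Second, your base case is wrong. The algorithm does not bottom out by ``testing intersection of the two remaining edges directly''; it returns \emph{No} as soon as either argument becomes \emph{null}. Conversely, the nontrivial \emph{Yes} outputs (the \texttt{else} branches when none of the recursion conditions hold) can fire at any level of the recursion, and they are not the case ``the supporting lines cross inside both edges.'' In those branches neither endpoint lies strictly beyond $s(\alpha,\beta)$ and neither domination condition applies; the paper's proof shows that this forces a point $a\in\alpha$ lying top-left of a point $b\in\beta$, from which intersection of the two quarter-hulls follows. Your outline never addresses this argument, and without it the algorithm's positive answers are unjustified.

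What actually needs to be proved, branch by branch, is: (i) when the algorithm recurses into a subtree it discards only edges that cannot participate in an intersection (this uses the halfplane containment $\CH(A)\subseteq\{y\le line(\alpha)\}$, $\CH(B)\subseteq\{y\ge line(\beta)\}$ together with the domination tests and the sentinel halflines), and (ii) when all recursion conditions fail, an intersection is certified by the top-left/bottom-right quarter-plane argument. That is the shape of the paper's proof; your invariant-and-wedge framework would have to be rebuilt around the actual tests the algorithm performs.
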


  \begin{proof}
      The proof follows the approach in~\cite{Chazelle1987intersection} and is a case distinction.

      \textbf{Suppose first} that $\alpha$.slope $<$ $\beta$.slope.
      We define $W$ as the cone with supporting lines $line(\alpha)$ and $line(\beta)$, that lies  left of $s(\alpha, \beta)$ and has $s(\alpha, \beta)$ as its apex  (Figure~\ref{fig:intersection_finding}).
      The area $\CH(A)$ is contained in the halfplane bounded from above by $line(\alpha)$. 
      The area $\CH(B)$ is contained in the halfplane bounded from below by $line(\beta)$.
      Any intersection between $\CH(A)$ and $\CH(B)$ must be contained in $W$ and therefore lie left of $s(\alpha, \beta)$.

        If $\alpha$.first lies strictly right of $s(\alpha, \beta)$ (line 11) then $\alpha$ cannot intersect any edge of $\CH(B)$. 
        Moreover, any edge in the right subtree of $\alpha$ cannot intersect $\CH(B)$ and so we may safely recurse on $\alpha$.left. 
        If $\beta$.first lies strictly right of $s(\alpha, \beta)$ (line 13) then $\beta$ and its right subtree cannot intersect any edge of $\CH(A)$ and we may safely recurse on $\beta$.left.

\begin{figure}[h]
    \centering
    \includegraphics[]{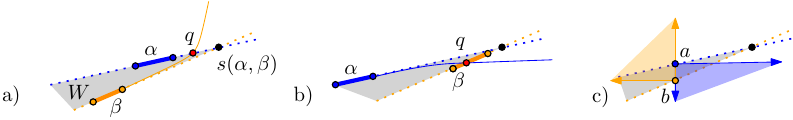}
    \caption{   (a) The edge $\beta$ strictly precedes $\alpha$ and the first vertex of $\alpha$ precedes $q$.
    (b) The edge $\alpha$ strictly precedes $\beta$.
    (c) There exists points $(a, b)$ where $a$ is in the top left quadrant if $b$. 
    }
    \label{fig:intersection_finding}
\end{figure}

 \begin{algorithm}[H]
    \caption{\texttt{intersection\_test}(\texttt{edge} $\alpha \in CH(A)$, \texttt{edge} $\beta \in CH(B)$ )}
    \label{alg:intersection_test}
    \begin{algorithmic}[1]
    \If{ $\alpha = $ \emph{null} OR $\beta =$ \emph{null}}
     \State \Return No
     \EndIf
    \State $s(\alpha, \beta) = line(\alpha) \cap line(\beta)$
    \If{If $s \in \alpha$ and $s \in \beta$}
    \State \Return Yes
    \EndIf
    \If{$\alpha$.slope  $<$  $\beta$.slope}
    \If{ $\alpha$.first.$x$ > $s(\alpha, \beta)$.x}
    \State \Return \texttt{intersection\_test}($\alpha$.left, $\beta$)
    \ElsIf{ $\beta$.first.$x$ > $s(\alpha, \beta)$.x }
    \State \Return \texttt{intersection\_test}($\alpha$, $\beta$.left)
    \ElsIf {$\alpha$.first.$x$ $>$ $\beta$.second.$x$ AND $\alpha$.first.$y$ > $\beta$.second.$y$}
    \State \Return  \texttt{intersection\_test}($\alpha$.left, $\beta$)
    \ElsIf {$\alpha$.second.$x$ $<$ $\beta$.first.$x$  AND $\alpha$.second.$y$ < $\beta$.first.$y$}
    \State \Return  \texttt{intersection\_test}($\alpha$, $\beta$.left)
    \Else
    \State \Return yes
    \EndIf
    \EndIf
    \If {$\alpha$.slope > $\beta$.slope}
    \If{$\alpha$.second.$x$ $<$ $s(\alpha, \beta).x$}
    \State \Return \texttt{intersection\_test}($\alpha$.right, $\beta$)
    \ElsIf {$\beta$.second.$x$ $<$ $s(\alpha, \beta).x$}
    \State \Return \texttt{intersection\_test}($\alpha$, $\beta$.right)
    \ElsIf {$\alpha$.first.$x$ $>$ $\beta$.second.$x$ AND $\alpha$.first.$y$ > $\beta$.second.$y$}
    \State \Return  \texttt{intersection\_test}($\alpha$, $\beta$.right)
    \ElsIf {$\alpha$.second.$x$ $<$ $\beta$.first.$x$  AND $\alpha$.second.$y$ < $\beta$.first.$y$}
    \State \Return  \texttt{intersection\_test}($\alpha$.right, $\beta$)
    \Else
    \State \Return yes
    \EndIf
    \EndIf
            \If{ $\alpha$.slope $=$  $\beta$.slope}
            \If{$line(\beta)$ is above $line(\alpha)$}
            \State \Return No
                \ElsIf {$\alpha$.first.$x$ $>$ $\beta$.second.$x$ AND $\alpha$.first.$y$ > $\beta$.second.$y$}
    \State \Return  \texttt{intersection\_test}($\alpha$.left, $\beta$)
    \ElsIf {$\alpha$.second.$x$ $<$ $\beta$.first.$x$  AND $\alpha$.second.$y$ < $\beta$.first.$y$}
    \State \Return  \texttt{intersection\_test}($\alpha$, $\beta$.left)
    \Else
        \State \Return No
            \EndIf        
    \EndIf
    \end{algorithmic}
  \end{algorithm}

        \newpage
              Let $\alpha$.first dominate $\beta$.second. 
        Suppose that $\CH(A)$ intersects $\CH(B)$ in a point $q$ right of $\alpha$.first (Figure~\ref{fig:intersection_finding} (a)). 
        Since $CH(A)$ has positive slope, $q$ dominates $\alpha$.first. 
        Consider the convex area $G$ enclosed by a curve $\gamma$ that traverses $CH(B)$ backwards until $\beta$.second, at which point it ends with a horizontal leftwards halfline. 
        Since $q \in \gamma$ dominates $\alpha$.first, and $\alpha$.first dominates $\beta$.second $\in \gamma$, $\alpha$.first lies in $G$. 
        However, $\CH(A)$ starts with a vertical downwards halfline left of $\alpha$.first.
        It follows that $\CH(A)$ and $\CH(B)$ also intersect in a point $q'$ left of $\alpha$.first and so we may safely recurse on $\alpha$.left.

        Let $\beta$.first dominate $\alpha$.second. Suppose that $\CH(B)$ intersects $\CH(A)$ in a point $q$ right of $\beta$.first (Figure~\ref{fig:intersection_finding} (b)).  
        Since $CH(B)$ has positive slope, $q$ dominates $\beta$.first.
        Consider the convex area $G$ enclosed by a curve $\gamma$ that traverses $CH(A)$ backwards until $\alpha$.first, at which point it ends with a vertical downwards halfline. 
        Since $q \in \gamma$ dominates $\beta$.first and $\beta$.first dominates $\alpha$.second $\in \gamma$, it follows that $\beta$.first is in $G$. 
        However, $\CH(B)$ starts with a horizontal leftwards halfline below $\beta$.first.
        It follows that $\CH(A)$ and $\CH(B)$ also intersect in a point $q'$ left of $\beta$.first and so we may safely recurse on $\beta$.left.

        Otherwise, $\alpha$ and $\beta$ share a vertical line with $\alpha$ above $\beta$ (or, a horizontal line with $\alpha$ left of $\beta$).  This is illustrated by Figure~\ref{fig:intersection_finding} (c).
        Pick a point $a \in \alpha$ and $b \in \beta$ such that they share a vertical line with $a$ above $b$ (or, such that they share a horizontal line with $a$ left of $b$). 
        The curve $CH(A)$ bounds an area containing the quarter plane that has $a$ as its top left corner.
        The curve $CH(B)$ bounds an area containing the quarter plane that has $b$ as its bottom right corner. It follows that $\CH(A)$ and $\CH(B)$ intersect and so we may output yes. 

        \textbf{Suppose next that $\slope(\alpha) > \slope(\beta)$.}
        Consider the cone $W$ with supporting lines $line(\alpha)$ and $line(\beta)$,  right of $s(\alpha, \beta)$, that has $s(\alpha, \beta)$ as its apex (Figure~\ref{fig:intersection_finding_two}).
            Any intersection between $\CH(A)$ and $\CH(B)$ must be contained in $W$ and therefore lie right of $s(\alpha, \beta)$. 

              Let $\alpha$.first dominate $\beta$.second. 
        Suppose that $\CH(A)$ intersects $\CH(B)$ in a point $q$ left of $\beta$.second. 
        Since $CH(B)$ has positive slope, $\beta$.second dominates $q$. 
        Consider the convex area $G$ enclosed by a curve $\gamma$ that traverses $CH(A)$ forwards until $\alpha$.first, at which point it ends with a horizontal rightwards halfline. 
        Since $q$ is dominated by $\beta$.second, and $\beta$.second is dominated by $\alpha$.first it follows that $\beta$.second  lies in $G$. 
        However, $\CH(B)$ ends with a vertical upwards halfline right of $\beta$.second. 
        It follows that $\CH(A)$ and $\CH(B)$ also intersect in a point $q'$ right of $\beta$.second and so we may safely recurse on $\beta$.right.

        Let $\beta$.first dominate $\alpha$.second. Suppose that $\CH(B)$ intersects $\CH(A)$ in a point $q$ left of $\alpha$.second.    
        Since $CH(A)$ has positive slope, $q$ is dominated by $\alpha$.second. 
        Consider the convex area $G$ enclosed by a curve $\gamma$ that traverses $CH(B)$ forwards until $\beta$.first, at which point it ends with a vertical upwards halfline. 
        Since $q$ is dominated by $\alpha$.second and $\alpha$.second is dominated by $\beta$.first, it follows that $\alpha$.second is in $G$. 
        However, $\CH(A)$ ends with a horizontal rightwards halfline above $\alpha$.second. 
        It follows that $\CH(A)$ and $\CH(B)$ also intersect in a point $q'$ right of $\alpha$.second and so we may safely recurse on $\alpha$.right.

        Otherwise, we may find a point $a \in \alpha$ that lies top left to a point $b \in \beta$ and so $CH(A)$ and $CH(B)$ are guaranteed to intersect. 
        
        \begin{figure}[h]
    \centering
    \includegraphics[page = 2]{Figures/intersection_finding.pdf}
    \caption{   (a) The edge $\beta$ strictly precedes $\alpha$ and the first vertex of $\alpha$ precedes $q$.
    (b) The edge $\alpha$ strictly precedes $\beta$.
    (c) There exists points $(a, b)$ where $a$ is in the top left quadrant if $b$. 
    }
    \label{fig:intersection_finding_two}
\end{figure}

\textbf{Otherwise, let $\slope(\alpha) = \slope(\beta)$.}
    If the supporting line of $\beta$ lies above $\alpha$ then $CH(A)$ and $CH(B)$ can never intersect (Figure~\ref{fig:intersection_finding_three}).
    Otherwise, the argument is identical to the previous two cases.    
  \end{proof}

\begin{figure}[h]
    \centering
    \includegraphics[page = 3]{Figures/intersection_finding.pdf}
    \caption{  If $line(\alpha)$ and $line(\beta)$ are parallel with the first below the latter, $CH(A)$ and $CH(B)$ cannot intersect. Otherwise, we may apply any of the previous arguments.  }
    \label{fig:intersection_finding_three}
\end{figure}

\subsection{Finding a separating line}
If Algorithm~\ref{alg:intersection_test} terminates  and outputs that $CH(A)$ and $CH(B)$ do not intersect then we can find a line that separates $CH(A)$ and $CH(B)$. 
Algorithm~\ref{alg:intersection_test} outputs \emph{no} in two cases. 
The first case is the special case where there exist two parallel edges $\alpha \in CH(A)$ and $\beta \in CH(B)$ where $line(\beta)$ lies above $line(\alpha)$.
In this case both $line(\alpha)$ and $line(\beta)$ are a separating line.

\begin{definition}
    For any edge $\alpha$, we denote by $\overleftarrow{\alpha}$ and $\overrightarrow{\alpha}$ its two supporting leftward and rightward halflines.  For any pair of edges $(\beta, b)$ that share a vertex with $\beta$ left of $b$, we denote by $w(\beta, b) = \overleftarrow{\beta} \cup \overrightarrow{b}$ their \emph{wedge}.
\end{definition}

The second case is that either argument of the function was \emph{null}.
Without loss of generality, we assume that $\beta$ was null. 
Then there exist two pairs of edges $(\alpha, \beta), (a, b) \in CH(A) \times CH(B)$ where
$\texttt{intersection\_test}(\alpha, \beta)$ recurses on $\beta$.right and $\texttt{intersection\_test}(a, b)$ recurses on $b$.left.
Moreover, the edges $\beta$ and $b$ must share a vertex. 
By keeping track of the traversal of Algorithm~\ref{alg:intersection_test}, we obtain $w(\beta, b)$  at no overhead.

\begin{lemma}
    \label{lem:nointersect}
    Let Algorithm~\ref{alg:intersection_test} terminate without finding an intersection between $CH(A)$ and $CH(B)$ and denote by $w(\beta, b)$ the corresponding wedge. Then:
    \begin{itemize}
        \item the halfline $\overleftarrow{\beta}$ cannot intersect $CH(A)$, and
        \item the halfline $\overrightarrow{b}$ cannot intersect $CH(A)$.
    \end{itemize}
\end{lemma}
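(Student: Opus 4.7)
The plan is to case-analyze the pruning rule that Algorithm~\ref{alg:intersection_test} fired at the pair $(\alpha,\beta)$ where it last recursed on $\beta$.right, and to strengthen the algorithm's local conclusion into the claimed halfline statement. Both bullets are symmetric under swapping left/right, so I focus on $\overleftarrow{\beta}\cap CH(A)=\emptyset$ and mirror the argument for $\overrightarrow{b}$. Inspecting the algorithm, a recursion on $\beta$.right fires in exactly one of two configurations: (i) $\slope(\alpha)>\slope(\beta)$ with $\beta$.second.$x<s(\alpha,\beta).x$, placing $\beta$ strictly left of the wedge apex; or (ii) $\alpha$.first dominates $\beta$.second.

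In configuration (i), since $\alpha$ is the steeper line and the two cross at $s(\alpha,\beta)$, $line(\alpha)$ lies strictly below $line(\beta)$ throughout $(-\infty,s(\alpha,\beta).x)$, which contains the entire $x$-range of $\overleftarrow{\beta}$. Because the upper convex chain $CH(A)$ is bounded above by its tangent $line(\alpha)$, it lies strictly below $\overleftarrow{\beta}$ wherever their $x$-ranges overlap. I would handle the two infinity halflines of $CH(A)$ by exploiting the positive slope of $\overleftarrow{\beta}$: it stays strictly above $\alpha_1$.first.$y$ at $x=\alpha_1$.first.$x$ and strictly above $y=\alpha_n$.second.$y$ for every $x\ge\alpha_n$.second.$x$ in its range, which rules out crossings with either halfline. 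Configuration (ii) is more delicate: the pruning in Lemma~\ref{lem:intersection_test} there is justified via a mirroring argument showing that any intersection strictly left of $\beta$.second forces a companion intersection strictly right of it. To lift this pointwise observation to the full halfline, I would take the contrapositive: assuming $\overleftarrow{\beta}$ meets $CH(A)$ at some $q$, I would combine $CH(B)\ge line(\beta)$ at $x=q.x$ (so $CH(B)(q.x)\ge q.y=CH(A)(q.x)$) with the monotone upper-convex shape of $CH(A)$, which ascends through $\alpha$.first with slopes at least $1$, to produce a second abscissa at which $CH(A)$ strictly exceeds $CH(B)$; continuity then forces $CH(A)\cap CH(B)\neq\emptyset$, contradicting the assumption that the algorithm terminated without finding an intersection.

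The symmetric argument handles $\overrightarrow{b}$ under the mirrored configurations, in which the algorithm recurses on $b$.left because either $\beta$.first.$x>s(\alpha,\beta).x$ or $\beta$.first dominates $\alpha$.second. The main obstacle I expect is configuration (ii): unlike (i), which reduces to a direct two-line comparison, (ii) requires a global argument tying the halfline into the no-intersection hypothesis, and the uniform slope-$\ge 1$ condition is essential for ensuring that any dip of $\overleftarrow{\beta}$ into $CH(A)$ actually propagates into a forced crossing of the chains rather than merely slipping past through the tangent gap between $line(\beta)$ and $CH(B)$.
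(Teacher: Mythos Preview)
Your case split into configurations (i) and (ii) matches the paper's, and your argument for (i) is correct and essentially the same. The gap is in (ii).

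You hope the uniform slope-$\geq 1$ condition will let $CH(A)$ climb past $CH(B)$ at some unspecified second abscissa. But slope $\geq 1$ cannot be the driver: the chain $B$ also has all slopes $\geq 1$, so there is no a priori reason the upper boundary of $CH(A)$ should overtake the lower boundary of $CH(B)$ on that basis. The operative inequality is $\slope(\alpha) > \slope(\beta)$, which you have from the branch condition but never invoke. The paper uses it (implicitly) as follows. The hypothetical point $q$ lies on the boundary of $CH(A)$ with $q.x \leq \beta$.second.$x < \alpha$.first.$x$; by concavity of the upper chain $A$, every edge of $A$ between $q$ and $\alpha$.first has slope at least $\slope(\alpha) > \slope(\beta)$. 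Hence, moving right from $q$, the chain $A$ rises strictly faster than $line(\beta)$, so at $x = \beta$.second.$x$ the chain lies strictly above $\beta$.second. This places the vertex $\beta$.second of $CH(B)$ inside the region bounded by $CH(A)$, giving the contradiction directly --- no intermediate-value step is needed. In your language, the missing ``second abscissa'' is precisely $\beta$.second.$x$, and it is pinned down by $\slope(\alpha) > \slope(\beta)$ together with concavity of $A$, not by the slope-$\geq 1$ hypothesis. Your closing remark that slope $\geq 1$ is ``essential'' for this step is therefore mistaken.

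A minor additional gap: for the $\overrightarrow{b}$ half, recursion on $b$.left can also fire when $\slope(a) = \slope(b)$, so there is a third sub-case to absorb alongside your two mirrored configurations; the paper folds it into the domination case.
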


\begin{proof}
              We first prove that the halfline $\overleftarrow{\beta}$ cannot intersect $CH(A)$. There exists some $\alpha \in CH(A)$ where \texttt{intersection\_test}($\alpha$, $\beta$) recurses on $\beta$.right.           
          Thus, $slope(\alpha) > slope(\beta)$.
          Define $s(\alpha, \beta) = line(\alpha) \cap line(\beta)$.  Observe that     \texttt{intersection\_test}($\alpha$, $\beta$) recurses on $\beta$.right in two cases. The first case is whenever $\beta$.second.$x$ $<$ $s(\alpha, \beta)$.$x$.
        Since $CH(A)$ lies in the plane upper bounded by $line(\alpha)$ this implies that $\overleftarrow{\beta}$ cannot intersect $CH(A)$. 

        In the second case, the vertex $\alpha$.first dominates  $\beta$.second (Figure~\ref{fig:intersection_finding_four} (a)).
        Suppose for the sake of contradiction that $\overleftarrow{\beta}$ intersects $CH(A)$ in some point $q$ left of $\beta$.second. 
        Since $line(\beta)$ has positive slope, $\beta$.second must dominate $q$. Consider the convex area $G$ bounded by a curve $\gamma$ that traverses $CH(A)$ backwards until $q$, after which it becomes a vertical downward halfline.  It follows that $\beta$.second is contained in $G \subseteq CH(A)$.
        This implies that $CH(A)$ and $CH(B)$ intersect which is a contradiction. 

        We argue that $\overrightarrow{b}$ cannot intersect $CH(A)$ in the same way ( Figure~\ref{fig:intersection_finding_four} (b)). 
          There must exist some $a \in CH(A)$ where  \texttt{intersection\_test}($a$, $b$) recurses on $b$.left.  
          If $slope(a) < slope(b)$ then we first consider the special case where $\beta$.first.$x$ $>$ $s(a, b)$.$x$. Since $CH(A)$ is contained in a halfplane upper bounded by $line(a)$ this implies that $\overrightarrow{b}$ cannot intersect $CH(A)$.
                
          If the special case does not apply, or whenever $slope(\alpha) = slope(\beta)$ then  it must be that $a$ is dominated by $b$. 
          If $\overrightarrow{b}$ intersects $CH(A)$ in a point $q$ then we may use $q$ to argue that $CH(A)$ and $CH(B)$ intersect in an identical manner as above.
\end{proof}

\begin{figure}[h]
    \centering
    \includegraphics[page = 4]{Figures/intersection_finding.pdf}
    \caption{  (a) If there exists an edge $\alpha$ of $CH(A)$ that dominates $\beta$ and $\overrightarrow{\beta}$ intersects $CH(A)$ in a point $q$ then we may argue that $\beta$ is contained in $CH(A)$.    
    (b)  If there exists an edge $a$ of $CH(A)$ that is dominated by an edge $b$ then we make the symmetrical argument. }
    \label{fig:intersection_finding_four}
\end{figure}

\noindent
Given the edge $w(\beta, b)$ we run Algorithm~\ref{alg:separationfind}, starting with the root of $\alpha$.

\begin{algorithm}[h]
      \caption{\texttt{separation\_find}(\texttt{wedge} $w(\beta$, $b)$, 
      \texttt{edge} $a \in CH(A)$}
      \label{alg:separationfind}
      \begin{algorithmic}[1]
      \If{$\alpha = $ \emph{null}}
      \State \Return $line(\beta)$ or $line(b)$ 
      \EndIf
      \If{$line(\alpha) \cap w(\beta, b) = \emptyset$}
      \State \Return $line(\alpha)$
      \ElsIf{ $\overleftarrow{\alpha} \cap w(\beta, b) \neq \emptyset$}
      \State \Return \texttt{separation\_find}( $w(\beta, b)$, $\alpha$.left)
      \Else
      \State  \Return \texttt{separation\_find}($\beta$, $b$, $\alpha$.right)
      \EndIf
        \end{algorithmic}
\end{algorithm}

  \begin{lemma}
      Algorithm~\ref{alg:separationfind} outputs a edge in $\CH(A) \cup CH(B)$ whose supporting line separates $\CH(A)$ and $\CH(B)$.
  \end{lemma}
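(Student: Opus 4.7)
The plan is to prove correctness by structural induction on the subtree rooted at $\alpha$, combined with a global geometric invariant drawn from Lemma~\ref{lem:nointersect}. Let $W$ denote the closed convex region bounded by $w(\beta,b)$ that contains $(\infty,-\infty)$. Since neither $\overleftarrow{\beta}$ nor $\overrightarrow{b}$ crosses $CH(A)$ and $CH(A)$ is a connected curve extending to $(\infty,-\infty)$, we immediately get $CH(A)\subseteq W$. Conversely, convexity of the lower-quarter hull $CH(B)$ forces every point of $CH(B)$ outside the edges $\beta$ and $b$ to lie weakly above $\overleftarrow{\beta}$ and $\overrightarrow{b}$, so $CH(B)$ lies in the closed complement of $W$. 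I would carry both containments into every recursive call.

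For the case at line~3 of Algorithm~\ref{alg:separationfind} (returning $line(\alpha)$): $\alpha\subseteq CH(A)\subseteq W$ together with $line(\alpha)\cap\partial W=\emptyset$ forces the entire line $line(\alpha)$ into $W$. Because the chain $CH(A)$ is concave and $\alpha$ is one of its edges, $line(\alpha)$ supports $CH(A)$ from above, so $CH(A)$ lies weakly below $line(\alpha)$; because $CH(B)$ lies weakly above $w(\beta,b)$ and $line(\alpha)$ lies weakly below $w(\beta,b)$, the hull $CH(B)$ also lies weakly above $line(\alpha)$. This yields separation.

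For the recursive branches (lines~5 and~8) I would maintain the invariant that the subtree passed to the recursive call still contains every edge of $CH(A)$ whose supporting line could satisfy the Case~1 condition. The key geometric content is that $\overleftarrow{\alpha}\cap w(\beta,b)\neq\emptyset$ forces $line(\alpha)$ to meet $w(\beta,b)$ strictly on the leftward side of $\alpha$, and the monotonic decrease of slopes along the concave chain $CH(A)$ then rules out every edge in $\alpha$.right from ever satisfying the Case~1 condition; line~8 is symmetric. For the base case $\alpha=\mathit{null}$, this invariant implies that no supporting line of any edge of $CH(A)$ separates the two hulls. Since the hulls are disjoint convex sets, a separator must still exist and must be a common tangent supported at the apex $v=\beta\cap b$, i.e.\ a line through $v$ with slope in $[\slope(\beta),\slope(b)]$. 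A direct check using $CH(A)\subseteq W$ then shows that one of the two extreme tangents at $v$, namely $line(\beta)$ or $line(b)$, realises such a separator, and the recorded direction of the last descent step selects which of the two. The main obstacle will be the slope-based case analysis in the recursive step: one must couple $\slope(\alpha)$ with $\slope(\beta)$ and $\slope(b)$ and simultaneously track which halfline of $w(\beta,b)$ is crossed by $line(\alpha)$, in order to safely discard one subtree while preserving the invariant.
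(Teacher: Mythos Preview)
Your setup (establishing $CH(A)\subseteq W$ from Lemma~\ref{lem:nointersect}) and your treatment of the case returning $line(\alpha)$ are correct and match the paper's geometry. The recursive pruning is also essentially the paper's: if $\overleftarrow{\alpha}$ meets $w(\beta,b)$ then every successor of $\alpha$ on $CH(A)$, having smaller slope, inherits this property, so the right subtree can be discarded without losing any edge whose supporting line avoids $w(\beta,b)$.

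The gap is in the base case. Your invariant at termination says only that every edge $\alpha$ of $CH(A)$ has $line(\alpha)\cap w(\beta,b)\neq\emptyset$. You then assert this ``implies that no supporting line of any edge of $CH(A)$ separates the two hulls.'' That implication is false: $line(\alpha)$ can cross $\overleftarrow{\beta}$ at a point far to the lower-left of the segment $\beta$ (or cross $\overrightarrow{b}$ far to the upper-right of $b$) and still have all of $CH(B)$ strictly on one side, simply because those portions of the wedge are not part of $CH(B)$. Concretely, with $\beta$ from $(0,0)$ to $(1,1)$ and $b$ from $(1,1)$ to $(3,10)$, an edge $\alpha$ of slope $1/2$ lying well below both hulls has $line(\alpha)$ meeting $\overleftarrow{\beta}$ (so the algorithm recurses past it) yet $line(\alpha)$ already separates the two hulls. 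Hence even when the algorithm exhausts $CH(A)$, some $line(\alpha)$ may well separate; you cannot conclude that the separator must be supported at the apex $v$, and your final sentence (``the recorded direction of the last descent step selects which of the two'') is left without foundation.

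The paper's base case does not go through that intermediate claim. Instead it records the consecutive pair $(\gamma,g)$ on $CH(A)$ at which the search direction flips, so that $\overrightarrow{\gamma'}$ meets $w(\beta,b)$ for every $\gamma'$ preceding or equal to $\gamma$, and $\overleftarrow{g'}$ meets $w(\beta,b)$ for every $g'$ succeeding or equal to $g$. It then argues directly about $line(\beta)$ and $line(b)$ via a case split on whether $\overrightarrow{\beta}$ and $\overleftarrow{b}$ hit $CH(A)$. If, say, $\overrightarrow{\beta}$ misses $CH(A)$, then together with Lemma~\ref{lem:nointersect} the whole of $line(\beta)$ misses $CH(A)$ and separates. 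If $\overrightarrow{\beta}$ does hit $CH(A)$, it must do so on some $\gamma'\preceq\gamma$; then $\overrightarrow{\gamma'}$ meets $w(\beta,b)$, necessarily on $\overrightarrow{b}$ rather than on $\overleftarrow{\beta}$, which forces $\overleftarrow{b}$ to miss $line(\gamma')$ and hence all of $CH(A)$, so $line(b)$ separates. Your base case needs an argument of this shape rather than the leap from ``every $line(\alpha)$ hits the wedge'' to ``the separator is supported at $v$.''
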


  \begin{proof}
      By Chazelle and Dobkin~\cite{Chazelle1987intersection} there always exists an edge on $CH(A)$ or $CH(B)$ whose supporting line separates the two convex hulls.
      
      First, we show that our algorithm always finds either an edge of $CH(A)$, or guarantees that for all edges in $\CH(A)$ their supporting line intersects $w(\beta, b)$. 
     Indeed, since $CH(B)$ is contained in $w(\beta, b)$, a line $line(\alpha)$ separates the two hulls if it does not intersect $w(\beta, b)$.

      Whenever $line(\alpha)$ \emph{does} intersect $w(\beta, b)$, either $\overleftarrow{\alpha}$ or $\overrightarrow{\alpha}$ must intersect $w(\beta, b)$. 
      Suppose that $\overleftarrow{\alpha}$ intersects $w(\beta, b)$. Any edge $a \in CH(A)$ succeeding $\alpha$ must have lower slope and so $\overleftarrow{a}$ must intersect $w(\beta, b)$. 
      Similarly if $\overrightarrow{\alpha}$ intersects $w(\beta, b)$ for any edge $a \in CH(A)$ preceding $\alpha$, $\overrightarrow{a}$ intersects $w(\beta, b)$.
      Finally, since $A$ starts with a vertical downwards halfline and ends with a horizontal rightwards halfline, it cannot be that for all edges $a \in CH(A)$ the halfline $\overleftarrow{a}$ intersects $w(\beta, b)$ (the same is true for $\overrightarrow{a}$). 

      Thus, if Algorithm~\ref{alg:separationfind} does not output an edge $\alpha \in CH(A)$ then there must exist two consecutive edges $(\gamma, g)$ on $CH(A)$ with the following property:
      for all edges $\gamma'$ of $CH(A)$ preceding and including $\gamma$,   $\overrightarrow{\gamma'}$ intersects $w(\beta, b)$, and, 
      for all edges $g'$ of $CH(A)$ succeeding an including $g$, $\overleftarrow{g}$ intersects $w(\beta, b)$. We now make a case distinction.

      \begin{itemize}
          \item If $\overrightarrow{\beta}$ does not intersect $CH(A)$ then by Lemma~\ref{lem:nointersect}, $line(\beta)$ separates $CH(A)$ and $CH(B)$.
          \item If $\overleftarrow{b}$ does not intersect $CH(A)$ then by Lemma~\ref{lem:nointersect}, $line(b)$ separates $CH(A)$ and $CH(B)$.
            \item If the edge $\overrightarrow{\beta}$ intersects $CH(A)$ in an edge $\gamma'$ that equals or precedes $\gamma$ then, per definition of $\gamma$, the halfline $\overrightarrow{\gamma'}$ intersects $w(\beta, b)$. 
            It cannot be that $\overrightarrow{\gamma'}$ intersects $\overleftarrow{\beta}$ since $\gamma'$ is already intersected by $\overrightarrow{\beta}$. And so, $\overrightarrow{b}$ intersects $\overrightarrow{\gamma'}$ (Figure~\ref{fig:intersection_finding_five}).
            In particular, this implies that $\overleftarrow{b}$ does not intersect $line(\gamma')$. 

            We now note that all edges of $CH(A)$ are contained in the halfplane bounded from above by $\gamma'$. And so, $\overleftarrow{b}$ cannot intersect any edge of $CH(A)$.  Lemma~\ref{lem:nointersect} guarantees that $\overrightarrow{b}$ cannot intersect $CH(A)$ and so $line(b)$ separates $CH(A)$ and $CH(B)$. 
            \item If the edge $\overleftarrow{b}$ intersects $CH(A)$ in an edge $g'$ that equals or succeeds $g$ then it follows by symmetry that $line(\beta)$ separates $CH(A)$ and $CH(B)$.
            \item It cannot be that $\overrightarrow{\beta}$ intersects $CH(A)$ on an edge strictly succeeding $\gamma$ \emph{and} that $\overleftarrow{b}$ intersects $CH(A)$ in an edge strictly preceding $g$. 
      \end{itemize}

\noindent
      We showed that Algorithm~\ref{alg:separationfind} either outputs a edge $\alpha$ where $line(\alpha)$ separates $CH(A)$ and $CH(B)$, or, that either $line(\beta)$ or $line(b)$ separates $CH(A)$ or $CH(B)$.
\end{proof}

\begin{figure}[h]
    \centering
    \includegraphics[page = 5]{Figures/intersection_finding.pdf}
    \caption{  (a) Let $\gamma' \in CH(A)$ be an edge where $\overrightarrow{\gamma'}$ intersects $\overrightarrow{b}$. Then $\overleftarrow{b}$ does not intersect the supporting line of $\gamma'$. However, then $\overleftarrow{b}$ cannot intersect any edge of $CH(A)$. 
    (b)  Let $g' \in CH(B)$ be an edge where $\overleftarrow{g'}$ intersects $\overleftarrow{\beta}$. 
    Then $\overrightarrow{\beta}$ does not intersect the supporting line of $g'$. 
    However, then $\overrightarrow{\beta}$ cannot intersect $CH(B)$. }
    \label{fig:intersection_finding_five}
\end{figure}

\subparagraph{A logarithmic-time robust algorithm}

Algorithms~\ref{alg:intersection_test} and \ref{alg:separationfind} have a recursive depth of $O(\log n)$. 
Thus, if each function call takes constant time then these algorithms take $O(\log n)$ time. 
For ease of exposition, we showed these algorithms using three geometric predicates:
\begin{itemize}
    \item \texttt{slope}. Given positive segments $(\alpha, \beta)$, output whether $\slope(\alpha) < \slope(\beta)$.
    \item \texttt{lies\_right}. Given two positive segments $\alpha$ and $\beta$ with different slopes, output whether the first vertex of $\beta$ lies right of $line(\alpha) \cap line(\beta)$.
    \item \texttt{wedge}. Consider a pair of positive segments $(\beta, b)$ that share a vertex and define $w(\beta, b)$ as the cone formed by their supporting halflines containing $(-\infty, \infty)$. 
    Given a positive segment $\alpha$ outside of $w(\beta, b)$, output whether $line(\alpha)$ intersects $w(\beta, b)$. 
\end{itemize}

In Section~\ref{sec:robust} we showed, under slightly different notation, that these can be implemented as robust, constant-time functions.
Thus, we showed: 

\separation*

\section{From a learned index to an indexing structure}
\label{app:indexing}

Let $S$ be a dynamic set of distinct integers in sorted order. 
Let $F_S$ denote the two-dimensional point set, obtained by mapping each $s \in S$ to $(\textsc{RANK}(s), s)$. 

\begin{definition}
         Let $\eps$ be a positive integer. We define a vertical $\eps$-cover $F$ of $S$ as set of vertically separated segments with slope at least $1$ with the following property:
         \begin{itemize}
             \item for all $p \in F_S$, the vertical line segment with width $2\eps$, centred at $p$, intersects a segment in $F$. 
         \end{itemize}
\end{definition}

Denote by $L$ the point set obtained by taking every point in $F_S$ and shifting it downwards by $\eps$, together with the point $(\infty, -\infty)$.
Denote by $U$ the point set obtained by taking every point in $F_S$ and shifting it upwards by $\eps$, together with the point $(-\infty, \infty)$.
Then a line $\ell$ is a vertical $\eps$-cover of $S$ if and only if it separates $CH(L)$ and $CH(U)$.
If follows that we may immediately adapt Theorem~\ref{eps:eps_cover} to the following:

\begin{theorem}
    \label{eps:vertical_cover}
   We can dynamically maintain a vertical $\eps$-cover $F$ of $S$ in $O(\log^2 n)$ worst-case time.
    We guarantee that there exists no vertical $\eps$-cover $F'$ of $S$ with $|F| > \frac{3}{2}|F'|$.
\end{theorem}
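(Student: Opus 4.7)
The plan is to recycle the entire architecture of Theorem~\ref{eps:eps_cover}, substituting only the primitive that decides whether a sub-interval admits a single-segment cover. I maintain a balanced binary tree $B(F)$ over $\Lambda(F)$ and, for every $[a,b]\in\Lambda(F)$, the rank-based convex hull $T(S[a,b])$. I redefine \emph{blocked} to mean that no vertical $\eps$-cover of $S[s,t]$ of size $1$ exists, and preserve the invariant that every consecutive pair of intervals in $\Lambda(F)$ is blocked. The approximation guarantee $|F|\le\tfrac{3}{2}|F'|$ then follows from the same exchange argument recalled in Appendix~\ref{app:orourke}, since that argument only relies on the blocked-pair invariant and not on the particular geometry of the cover class.

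The first step is to implement the blocked-test in $O(\log^2 n)$. The reduction stated just before the theorem says a line is a vertical $\eps$-cover of $S[a,b]$ iff it separates $CH(L_{[a,b]})$ and $CH(U_{[a,b]})$, where $L_{[a,b]}$ and $U_{[a,b]}$ arise from $F_{S[a,b]}$ by uniform vertical shifts of $\pm\eps$ together with the two points at infinity. Because the shifts are purely vertical, the relevant lower and upper chains of these hulls are obtained from the lower and upper chains of $CH(F_{S[a,b]})$ by translation; both can be read off from $\gamma(S[a,b])$ returned by $T(S[a,b])$.\texttt{get\_hull()} in $O(\log n)$ time. I feed these two chains to Theorem~\ref{thm:separation} to decide the blocked-test and, in the negative case, recover the separating segment, which is then an $\eps$-cover of $S[a,b]$.

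The second step is the update procedure, which is verbatim that of Theorem~\ref{eps:eps_cover}: locate the interval containing the new value $s$, split $T(S[a,b])$ at $s$, examine the constant-length window $I$ of adjacent intervals, and recursively merge consecutive pairs while the blocked-test fails. Every join costs $O(\log^2 n)$ for the rank-based tree operations plus $O(\log n)$ for one call to the blocked-test, and only $O(1)$ joins are attempted, yielding $O(\log^2 n)$ worst-case total time per update. Deletions are handled symmetrically through the split/merge interface of $T$.

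The only subtle point I expect to have to verify is that the $3/2$-approximation argument of Appendix~\ref{app:orourke}, which is written for general $\eps$-covers, transfers verbatim to vertical $\eps$-covers. Since the argument hinges solely on the fact that consecutive blocked intervals cannot be merged into a single segment — a property intrinsic to O'Rourke's greedy recurrence and oblivious to whether "blocked" is interpreted in the $L_\infty$ sense or the purely vertical sense — no new inequality needs to be derived and the adaptation is genuinely immediate.
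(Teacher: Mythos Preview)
Your proposal is correct and matches the paper's approach exactly: the paper itself gives no detailed proof of this theorem, stating only that it follows by immediately adapting Theorem~\ref{eps:eps_cover} once the vertical-shift construction of $L$ and $U$ is in hand. You have faithfully spelled out that adaptation --- redefining \emph{blocked} for the vertical notion, obtaining the shifted hulls from $\gamma(S[a,b])$, invoking Theorem~\ref{thm:separation}, and reusing the split/merge update and the blocked-pair $\tfrac{3}{2}$-approximation argument verbatim.
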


\noindent
$F$ stores segments from low to high, and we store segments using our segment type: \\

\noindent
\textbf{\texttt{segment} $f$}

\texttt{function} $f$ \hspace{1.8cm}\textit{the function $f(x)$ that forms the line through start and end}

\texttt{value} $start$

\texttt{value} $end$

\texttt{segment} $succ$  \hspace{1.5cm} \textit{the segment $f'$  in $F$ with the minimal start value greater than $end$}  \\

We assume that all segments in $F$ are maintained in a balanced binary tree $B(F)$, where segments are ordered by $f$.start.

\subsection{Defining our data structure}

We combine the dynamic vertical $\eps$-cover $F$ with a dynamic data structure to perform indexing queries.
Interestingly, our data structure is independent of $F$. 
The core of our data structure is our value type: \\

\noindent
\textbf{\texttt{value} $v$}

\texttt{int} $v$   \hspace{3cm}\textit{corresponding to a unique $v \in S$}

\texttt{int} id $= \lfloor \frac{v}{\eps} \rfloor$.  \\

\noindent
All values are stored in a \texttt{page}:\\

\noindent
\textbf{\texttt{vector}} $A$ is a dynamic vector that stores all non-empty pages in arbitrary order.  \\

\noindent
\textbf{\texttt{page} $p$}

\texttt{int} $p$ \hspace{3.5cm}\textit{ where $\exists v \in V$ with } $p = v.\textnormal{id}$

\texttt{vector}<\texttt{value}> $values$ \hspace{0.8cm}\textit{storing all $v \in V$ with } $p = v.\texttt{id}$ \textit{in sorted order} 

\texttt{int} $prev$ \hspace{3cm}\textit{the index in $A$ of the maximum $p' \in A$ with $p' < p$} 

\texttt{int} $succ$ \hspace{3cm}\textit{the index in $A$ of the minimum $p'' \in A$ with $p < p''$}  \\

\noindent
$H : 
\texttt{int} \rightarrow \texttt{int}$ is a hash map that for any ID $p$ returns the index $i$ such that $p = A[i]$.

\subsection{Answering queries using our data structure}

Given this data structure, we first show how to support $\texttt{member}$  and queries (Algorithm~\ref{alg:member}).
Then, we show how to answer \texttt{predecessor} and $\texttt{rank}$ queries~\ref{alg:predecessor}.
Finally, we show how to answer \texttt{range} queries.

\begin{lemma}
    \label{lem:member}
    Our data structure supports \texttt{member} queries in $O(\eps)$ time. 
\end{lemma}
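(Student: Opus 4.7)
The plan is to show that a \texttt{member}$(q)$ query decomposes into three constant-cost stages plus one linear scan over a page of size at most $\eps$. First, I would observe that every value $v \in S$ is stored inside exactly one page, namely the page with $p = \lfloor v/\eps \rfloor$. In particular, $q \in S$ if and only if the page with identifier $p_q = \lfloor q/\eps \rfloor$ exists in our structure and contains $q$ in its values-vector.

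The query algorithm is therefore: compute $p_q = \lfloor q/\eps \rfloor$ in $O(1)$ time; perform a single lookup $i = H(p_q)$ in the hash map; if $H$ reports that $p_q$ is absent, return \texttt{false}; otherwise, inspect the page $A[i]$ and search its values-vector for $q$, returning \texttt{true} iff it is found. Correctness is immediate from the invariants maintained by the paging structure: if $q \in S$ then the page with identifier $p_q$ is non-empty and stored in $A$, so $H(p_q)$ is defined, and the value $q$ appears in $A[H(p_q)].values$; conversely, any $q$ found in that vector is by definition an element of $S$.

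For the time bound, I would argue that each of the first two steps runs in $O(1)$ time (the hash lookup being expected $O(1)$, consistent with the use of randomised hashing noted in the introduction). The key remaining step is bounding the time for searching the values-vector of a page. Here I would use the essential property of pages: all values $v$ stored in page $p$ satisfy $\lfloor v/\eps \rfloor = p$, so they lie in the half-open interval $[p\cdot\eps, (p+1)\cdot\eps)$. Since $S$ contains only distinct integers, this interval contains at most $\eps$ elements of $S$, so the values-vector has length at most $\eps$. A linear scan (or even a binary search) through this vector therefore costs $O(\eps)$ time, giving the claimed bound.

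The only real subtlety is the in-page search step; everything else is bookkeeping. If one wanted a slightly sharper bound one could replace the linear scan by binary search on the sorted vector to get $O(\log \eps)$, but since the stated bound is $O(\eps)$ a linear scan already suffices and avoids any extra assumption about how the per-page vector is maintained on insertions and deletions.
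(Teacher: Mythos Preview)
Your proof is correct and follows essentially the same approach as the paper: use the hash map to locate the page $\lfloor q/\eps\rfloor$ in $O(1)$ time, then scan its at most $\eps$ values (since $S$ contains distinct integers in the length-$\eps$ interval $[p\eps,(p+1)\eps)$). Your write-up is in fact more careful than the paper's, as you explicitly handle the missing-page case and spell out why a page holds at most $\eps$ elements.
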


\begin{proof}
    Using $H$, Algorithm~\ref{alg:member} may access the page $p$ containing all values $v'$ with $v'$.id $=$ $v$.id in constant time. 
    Since all values in $S$ are distinct, this page contains at most $\eps$ values and so the lemma follows.     
\end{proof}

 \begin{algorithm}[h]
    \caption{\texttt{member}(\texttt{value} $v$)}
    \label{alg:member}
    \begin{algorithmic}
    \State $i \gets$ $H(\lfloor  \frac{v}{\eps} \rfloor)$
    \State \Return  $A[i]$.values.contains($v$)
    \end{algorithmic}
  \end{algorithm}

\begin{lemma}
    \label{lem:predecessor}
    Our data structure supports \texttt{rank} and \texttt{predecessor} queries in $O(\eps + \log |F|)$ time. 
\end{lemma}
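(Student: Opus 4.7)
The plan is to use $F$ as a routing device: first locate a candidate page near $\texttt{predecessor}(q)$, and then scan a constant number of pages of size $O(\eps)$ to pin down the exact answer. I would begin by searching $B(F)$ in $O(\log |F|)$ time to obtain the segment $\sigma \in F$ whose $y$-range contains $q$ (projecting $q$ to the closer endpoint of the nearest segment if $q$ falls in a $y$-gap of $F$). Viewing $\sigma$ as a linear map from values to ranks, I would evaluate $r = \sigma(q)$, floor it to an integer rank $r' = \lfloor r \rfloor$, and project back via $y' = \sigma^{-1}(r')$, all in $O(1)$ arithmetic on $\sigma$. The candidate page is then $p^{*} = \lfloor y'/\eps \rfloor$, looked up in the hash map $H$ in $O(1)$ to obtain $A[H(p^{*})]$.

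The crucial geometric claim is that $y'$ lies within $\eps$ of the true value at rank $r'$ in $S$, so that $p^{*}$ differs from the page containing $\texttt{predecessor}(q)$ by at most one. Establishing this uses both properties of the cover: the vertical $\eps$-approximation supplies an $\eps$ error in the vertical direction for each point of $F_{S}$, while the slope lower bound of $1$ translates the inaccuracy introduced by flooring $r$ back into a value-space error of at most $\eps$. Granting the claim, I would walk at most a constant number of \texttt{prev}/\texttt{succ} pointers in the doubly linked list of pages to reach the exact page, which stores at most $\eps$ values in sorted order and can be scanned in $O(\eps)$ to deliver $\texttt{predecessor}(q)$.

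For rank queries I would reduce to predecessor: after locating $\texttt{predecessor}(q)$ and its position inside its page, observe that $\sigma(q)$ is already within $\eps$ of $\texttt{rank}(q)$, so the exact rank is pinned down by counting values inside the $O(\eps)$ window of elements surrounding $\texttt{predecessor}(q)$, crossing page boundaries via the doubly linked list when necessary. This adds only $O(\eps)$ to the running time and therefore matches the claimed bound of $O(\eps + \log |F|)$.

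The main obstacle is formalising the transfer between rank-space and value-space errors: the geometric lemma that pins $p^{*}$ to within one page of the truth is where the slope-at-least-$1$ hypothesis on $F$ really does the work, since without it the preimage $\sigma^{-1}(r')$ could drift arbitrarily far from the true value of rank $r'$. A secondary subtlety is the handling of queries $q$ that fall in a $y$-gap between two segments of $F$ or outside the extremal segments; I would reduce each of these to the in-range case by projecting $q$ to the nearest segment endpoint before applying the routing step.
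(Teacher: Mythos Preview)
For \texttt{predecessor} your plan tracks the paper's proof almost verbatim: locate the covering segment $f$ in $B(F)$, compute $r = \lfloor f^{-1}(q) \rfloor$ and $b = f(r)$, then invoke the vertical-cover guarantee to place $b$ within $\eps$ of the element of rank $r$, so the target page is among $\lfloor b/\eps \rfloor \pm 1$. The paper probes all three candidate page ids via $H$; you probe one and follow the linked list --- functionally equivalent. One misstep in your justification: you credit the $\eps$ bound to the slope-$\ge 1$ hypothesis ``translating the flooring inaccuracy back into a value-space error of at most~$\eps$''. That is the wrong direction: flooring costs at most $1$ in rank space, and mapping back through a segment of slope $m \ge 1$ inflates this to as much as $m$ in value space, not $\eps$. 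The paper's bound $|b - s| \le \eps$ comes directly from the vertical-cover definition applied at $x$-coordinate $r$, with no appeal to the slope bound at that step.

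Your \texttt{rank} argument has a genuine gap. You propose to pin down the exact rank by counting elements in an $O(\eps)$ window around the predecessor, but the data structure stores no absolute rank for any element: pages hold values only, and the linked list orders pages by value. A local count cannot reveal how many elements of $S$ precede the window, so no amount of scanning near the predecessor converts the $\eps$-approximate rank $\sigma(q)$ into an exact one. The paper does not count; it asserts that $r = \lfloor f^{-1}(q) \rfloor$ \emph{is} the rank of the predecessor and returns it directly. Your plan needs either to adopt that claim (which you explicitly stop short of) or to augment the structure with explicit rank offsets.
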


\begin{proof}
    Consider first the special case where $v$ is smaller than all values in $S$. Then its rank is equal to $0$  and its predecessor is the first element of $S$. 
    
    Otherwise, let $f \in F$ such that $v$ is greater than $f$.start, but less that the start of the successor of $f$ in $F$ (in the special case where $f$ is the last segment in $F$, we always say that $v$ is less than the start of its successor). 

    We again consider a special case, where $v$ exceeds the value at the end of $f$. 
    In this case, the horizontal line through $v$ lies in between $f$ and its successor in $F$. The predecessor of $v$ must therefore be the last value in $f$.    
    This value is stored in the page $p$ at $A[i]$ for $i = H(f.\textnormal{end.id})$. 
    The rank of $v$ is then the rank of its predecessor. 

    If no special case applies then the predecessor $s$ of $v$ has a rank $r =\lfloor f^{-1}(v) \rfloor$. 
    We want to find the index $i$ such that $A[i]$ stores the page $p$ containing $v$. 
    Consider the point $(r, b) := (r, f(r))$. 
    Per definition of a vertical $\eps$-cover, the point $(r, s) \in F_S$ corresponding to the predecessor $s$ of $v$ lies within vertical distance $\eps$ of $(r, b)$. 
    It follows that:
    \[
    s.\textnormal{id} =  \left\lfloor \frac{b - \eps}{\eps} \right\rfloor \textnormal{ or }  \left\lfloor \frac{b }{\eps} \right\rfloor \textnormal{ or }  \left\lfloor \frac{b + \eps}{\eps} \right\rfloor \quad \Rightarrow \quad i =  H\left(
     \left\lfloor \frac{v}{\eps} \right\rfloor - 1 \right) \textnormal{ or  }  H\left( \left\lfloor \frac{v}{\eps} \right\rfloor \right) \textnormal{ or } H\left( \left\lfloor \frac{v}{\eps} \right\rfloor + 1 \right)
    \]
    We check in constant time which $i$ we need to choose. We index $A[i]$ to find the page containing $s$. We then iterate over all values in that page to find $s$.    
\end{proof}

\begin{lemma}
    \label{lem:range}
    Our data structure supports \texttt{range} queries in $O(k + \eps + \log |F|)$ time where $k$ denotes the output size.
\end{lemma}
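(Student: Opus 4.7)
The plan is to reduce range reporting to finding a starting point via predecessor, followed by a linear scan through pages that is charged to the output.

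First I would invoke the predecessor procedure of Lemma~\ref{lem:predecessor} on $q$, spending $O(\eps + \log |F|)$ time to locate a page $p^\star$ in $A$ and an index within $p^\star.\textit{values}$ that either contains $\textnormal{\texttt{predecessor}}(q)$ or determines that no predecessor exists (i.e.\ $q < \min S$). Since each page stores its values in sorted order, I can advance from the predecessor to the smallest element of $S$ that is at least $q$ in $O(1)$ additional time: it is either the successor inside $p^\star.\textit{values}$, or, if $\textnormal{\texttt{predecessor}}(q)$ is the last value in $p^\star$, the first element of the page $A[p^\star.\textit{succ}]$. In the degenerate case $q < \min S$, I simply take the first element of the page containing $\min S$, which can be obtained by following $\textit{prev}$ pointers from $p^\star$ to the unique page with no predecessor in the linked list (recall every visited page is non-empty by construction, so this takes $O(1)$).

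From that starting position I would walk forward through pages along the $\textit{succ}$ pointers in the doubly linked list, outputting values as long as they are at most $t$. The key observation that makes this output-sensitive is structural: any page $p$ strictly between the first and last visited page satisfies $\lfloor q/\eps \rfloor < p < \lfloor t/\eps \rfloor$, and every value $v$ it stores satisfies $\eps \cdot p \leq v < \eps(p+1)$, hence $q < v < t$. Therefore all values in every such intermediate page are emitted in full. The first and last visited pages may each contain some values outside $[q,t]$, but within each of these two pages the values are sorted, so I emit the contiguous sorted range of qualifying values directly; the scan of any page costs $O(1)$ per reported value.

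For the time analysis, the initial predecessor step costs $O(\eps + \log |F|)$ and the traversal spends $O(1)$ on each reported value (via pointer walking in the sorted vector and $\textit{succ}$ hops between pages). Termination happens as soon as the current value exceeds $t$, or when the next page's minimum value (equivalently, the smallest value with id $A[p.\textit{succ}]$) exceeds $t$; both can be tested in $O(1)$. Adding the $O(1)$ charge for the at most one page where we discover we have surpassed $t$ without emitting anything new, the total cost is $O(k + \eps + \log |F|)$ as claimed. The main subtlety is the first bullet: correctly handling the case in which the predecessor of $q$ is the maximum of its page and the following page's id is noncontiguous, which is resolved precisely because $\textit{succ}$ points to the next non-empty page in sorted order rather than relying on id arithmetic.
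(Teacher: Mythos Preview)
Your approach is correct and essentially matches the paper's: locate a starting page via the predecessor machinery of Lemma~\ref{lem:predecessor}, then traverse the doubly linked list of pages reporting sorted values, charging the traversal to the output. The only structural difference is that the paper calls \texttt{predecessor} on \emph{both} endpoints $q$ and $t$ and iterates between the two returned pages, whereas you call it once on $q$ and stop the scan by comparing values to $t$; both variants give the same bound.

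One small slip: in the degenerate case $q<\min S$ you say you reach the head of the linked list by ``following \textit{prev} pointers from $p^\star$ \ldots\ this takes $O(1)$.'' That is not $O(1)$ as written---walking \textit{prev} pointers from an arbitrary page to the head can visit many pages, and in this case $p^\star$ is not even well defined since \texttt{predecessor} returns a sentinel. The fix is trivial (maintain a pointer to the minimum page, or have \texttt{predecessor} return it directly), but the sentence as stated is incorrect.
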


\begin{proof}
Given two values $u$ and $v$ with $u < v$, we find the predecessors $A[i]$.values[$j$] and $A[i']$.values[$j'$] of $u$ and $v$ in $O(k + \log |F|)$ time.

Each page $p$ contains all values in the range $[\eps p, \eps p + \eps - 1]$ in sorted order so within a page we can output all values in the range in output-sensitive time. 
Recall that a page $p$ stores the integers $s, r$ where $A[s]$ is the page preceding $p$ and $A[r]$ is the page succeeding $p$ in their sorted order. 
So we simply iterate over this doubly linked list implementation, iterate over all values in these pages to output the range $S \cap [u, v]$ in output-sensitive time. 
\end{proof}

    \begin{algorithm}[h]
    \caption{\texttt{index}(\texttt{int} $v$ )}
    \label{alg:index}
    \begin{algorithmic}
    \State $(a, b, c) \gets (H( \lfloor \frac{v}{\eps} \rfloor - 1), H( \lfloor \frac{v}{\eps} \rfloor) , H( \lfloor \frac{v}{\eps} \rfloor) + 1)$
    \State \Return $arg \, \min \{ |A[a].values.first - v|, |A[b].values.first - v|, |A[c].values.first - v|\}$
    \end{algorithmic}
  \end{algorithm}

 \begin{algorithm}[h]
    \caption{\texttt{predecessor}(\texttt{value} $v$, $B(F)$ )}
    \label{alg:predecessor}
    \begin{algorithmic}
        \If {$v < $ $F$.first.start}
    \State \Return $(0, 0)$
    \EndIf
    \State $f \gets$ segment in $F$ where $v \geq f$.start and $v < f$.succ.start
    \If{$v \geq f$.end}
    \State $i = H(f.\textnormal{end.id})$
    \Else 
    \State $i =$ \texttt{index}$(f(\lfloor f^{-1}(v) \rfloor ))     $ \EndIf    
%    \While{$P[i].values.start > v$}
%    \State $i = P[i].prev$
 %       \EndWhile   
 %       \While{$P[i].values.end < v$}
 %   \State $i = P[i].succ$
 %   \EndWhile   
    \State $j = A[i]$.values.size() $- 1$
    \While{$A[i]$.values$[j] > v$}
    \State decrement $j$
    \EndWhile
    \State \Return $(i, j)$
    \end{algorithmic}
  \end{algorithm}

 \begin{algorithm}[H]
    \caption{\texttt{range}(\texttt{value} $u$, \texttt{value} $v$, $B(F)$ )}
    \label{alg:range_report}
    \begin{algorithmic}
    \State $(i, j) \gets$ \texttt{predecessor}$(u, B(F))$
    \State $(i', j') \gets$ \texttt{predecessor}$(v, B(F))$
    \If{$i = i'$}
    \State \texttt{report}$(A[i]\textnormal{.values}[j + 1, j' + 1] )$
    \Else
    \State \texttt{report}$(A[i]\textnormal{.values}[j + 1, A[i]\textnormal{.values.size()}] )$
    \State $k \gets A[i]$.succ
    \While{$A[k] < A[i']$}
    \State \texttt{report}$(A[k]\textnormal{.values})$
    \State $k \gets A[k]$.succ
    \EndWhile
    \State \texttt{report}$(A[i']\textnormal{.values}[1, j' + 1] )$
    \EndIf
    \end{algorithmic}
  \end{algorithm}

  \subsection{Updating our data structure}

  \begin{lemma}
      Let $v$ be a value in $\mathcal{U}$. Given \texttt{predecessor}($v$), we may update our data structure in $O(\eps)$ time.
  \end{lemma}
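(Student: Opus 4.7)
My plan is to separately handle insertions and deletions, using the predecessor output $(i,j)$ to localise all work to $O(\eps)$ time on the paging structure.

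For an insertion of $v$, I first compute $p = \lfloor v/\eps \rfloor$ and consult $H$ in expected $O(1)$ time to check whether a page with ID $p$ already exists. If it does, I insert $v$ into that page's sorted \texttt{values} vector by shifting at most $\eps-1$ entries, which costs $O(\eps)$. Otherwise, I allocate a fresh page $q$ with $q.p = p$ and $q.\mathit{values} = [v]$, append it to the back of $A$, and insert $(p, |A|-1)$ into $H$. I then splice $q$ into the doubly linked list using $A[i]$ as its left neighbour: I set $q.\mathit{prev} = i$, $q.\mathit{succ} = A[i].\mathit{succ}$, and update the $\mathit{prev}$/$\mathit{succ}$ fields of $A[i]$ and $A[q.\mathit{succ}]$ in $O(1)$ time. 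The boundary case $v < \min S$ is handled symmetrically by splicing $q$ at the head of the list.

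For a deletion of $v$ located at $(i,j)$, I remove $A[i].\mathit{values}[j]$ by shifting in $O(\eps)$. If $A[i]$ remains non-empty I am done. Otherwise, I detach $A[i]$ from the linked list by updating the $\mathit{prev}/\mathit{succ}$ fields of its two neighbours, erase its key from $H$, and then compact $A$ by swapping $A[i]$ with the last entry of $A$ and popping. I finally refresh $H$ at the key of the moved page and patch the linked-list pointers of its neighbours, which store indices into $A$. Each of these operations is $O(1)$ expected time.

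The main subtlety is correctness of the linked-list splice on new-page insertion: I must argue that when no page with ID $p$ exists, the page $A[i]$ containing $\texttt{predecessor}(v)$ is precisely the immediate predecessor of $q$ in page order. This holds because pages are totally ordered by their IDs, any page of ID $p' < p$ contains only values $< \eps p \leq v$, and any page of ID $p' > p$ contains only values $> v$, so the predecessor of $v$ in $S$ must reside in the largest page with $p' < p$, which is the left neighbour of $q$. All remaining components of the argument are routine vector and hash-map manipulations summing to $O(\eps)$.
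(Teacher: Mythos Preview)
Your argument is correct and mirrors the paper's own proof, which likewise splits insertion into ``target page already exists'' versus ``create a fresh page and splice it after the predecessor's page'', and handles deletion by shifting within the page and, if it empties, detaching it from the linked list and swapping it to the back of $A$ before popping. One small slip: since \texttt{predecessor}$(v)$ returns the largest element \emph{strictly} less than $v$, the pair $(i,j)$ points to that element rather than to $v$ itself, so on deletion $v$ actually sits at $A[i].\mathit{values}[j{+}1]$ (or at index $0$ of the successor page); this is exactly what the paper's \texttt{delete} does and changes nothing in your $O(\eps)$ bound.
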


  \begin{proof}
      Algorithms~\ref{alg:insertion} and~\ref{alg:delete} show how to insert values into and delete values from $S$ respectively.
        Let $(i, j)$ be such that $A[i]$.values[$j$] is the predecessor of $v$. 
        For insertions, we consider three cases. 
        In the first case, $A[i]= v$.id. Then $v$ needs to be inserted into the page at $A[i]$, succeeding $v' =$ $A[i]$.values[$j$]. 
        We achieve this in $O(\eps)$ time by first appending $v$ to $A[i]$.values, and then swapping $v$ with its predecessor in the vector until it is incident to $v'$.
        In the second case, $A[i] < v$.id but the successor page of $A[i]$ matches $v$.id. We obtain this successor by indexing $A[A[i]\textnormal{.succ}]$ and then insert $v$ into the values vector in the same way. 
        Finally, it may be that there exists no page $p \in A$ where $p = v$.id. 
        We create a new page, append it to $A$, add it to the hash map and insert it into our doubly linked list implementation. 
        Deletions are handled analogously.      
  \end{proof}

And so, we may conclude:

\begin{theorem}\label{thm:main}
    For any $\eps$, there exists a data structure to dynamically maintain a vertical $\eps$-cover $F$ of a dynamic set of distinct integers in $O(\eps + \log^2 n)$ time. 
    We guarantee that there exists no vertical $\eps$-cover $F'$ with $|F| > \frac{3}{2}|F'|$. The  
    data structure supports indexing queries in $O(\eps + \log |F|)$ expected time and range queries in additional $O(k)$ time where $k$ is the output size. 
\end{theorem}
  
 \begin{algorithm}[h]
    \caption{insert(\texttt{value} $v$, $B(F)$)}
    \label{alg:insertion}
    \begin{algorithmic}
    \State $(i, j)  \gets$ \texttt{predecessor}($v$, $B(F)$)
    \If{$A[i] = v.id$}
    \State \texttt{insert\_into\_page}($i$, $v$)
    \ElsIf{$A[ A[i]\textnormal{.succ}] = v\textnormal{.id}$}
    \State \texttt{insert\_into\_page}($A[i]\textnormal{.succ}$, $v$)
    \Else
    \State $i' \gets A$.size()
    \State \texttt{page} $p \gets$ \texttt{new page}$(v)$
    \State $p$.prev $\gets P[i]$
    \State $p$.succ $\gets P[i]$.succ
    \State $A[p\textnormal{.prev}]$.succ $\gets i'$
    \State $A[p\textnormal{.succ}]$.prev$ \gets i'$
    \State $A[i'] \gets p$
    \State $H$.insert($v.id$, $i'$)
    \EndIf
    \end{algorithmic}
  \end{algorithm}

 \begin{algorithm}[h]
    \caption{\texttt{insert\_into\_page}(\texttt{int} $i$, \texttt{value} $v$)}
    \label{alg:insert_into_page}
    \begin{algorithmic}
    \State $j = A$.values.size()
    \State $A\textnormal{.values}[j] \gets v$
    \While{$j > 0$ and $A[i]\textnormal{.values}[j-1] > A[i]\textnormal{.values}[j]$}
    \State $A[i]$.values.Swap($j-1$, $j$)
    \State decrement $j$
    \EndWhile    
    \end{algorithmic}
  \end{algorithm}

 \begin{algorithm}[h]
    \caption{\texttt{delete}(\texttt{value} $v$, $B(F)$)}
    \label{alg:delete}
    \begin{algorithmic}
    \State $(i, j)  \gets$ \texttt{predecessor}($v$, $B(F)$)
    \If{$A[i] = v$.id}
    \State {\texttt{delete\_from\_page}}($i$, $j+1$, $v$)
    \Else
    \State {\texttt{delete\_from\_page}}($A[i]$.succ, $0$, $v$)
    \EndIf
    \end{algorithmic}
  \end{algorithm}

 \begin{algorithm}[h]
    \caption{\texttt{delete\_from\_page}(\texttt{int} $i$, \texttt{int} $j$, \texttt{value} $v$)}
    \label{alg:delete_from_page}
    \begin{algorithmic}
    \If{$A[i]\textnormal{.values}[j] == v$}
    \For{$t \in [j, A[i]$.values.size() $- 1]$}
    \State $A[i]$.values.Swap($t$, $t+1$)
    \EndFor
    \State $A[i]$.reduce()
    \EndIf
     \If{$A[i]$.values.size() $= 0$}
         \State $A[A[i]\textnormal{.prev}]\textnormal{.succ} \gets A[i]$.succ
    \State $A[A[i]\textnormal{.succ}]\textnormal{.prev} \gets A[i]$.prev
    \State $A$.Swap($i$, $A$.size() $- 1$)
    \State $H$.insert($A[i], i$)
    \State $H$.delete$(v\textnormal{.id}, i)$
     \EndIf 
    \end{algorithmic}
  \end{algorithm}

\newpage

\section{Remarks on computing a PGM-index or $\eps$-cover}
\label{app:orourke}

We briefly note that Ferragina and Vinciguerra~\cite{ferragina2020pgm} wrongfully claim that the algorithm by O'Rourke~\cite{o1981line} computes a PGM-index of minimum complexity. 
In~\cite{o1981line}, O'Rourke considers the following problem (we flip the $x$ and $y$-axis to align with~\cite{ferragina2020pgm})

The input is a set of \emph{data ranges}, i.e., horizontal segments in the place, sorted by $y$-coordinate. The goal is to compute a $y$-monotone polyline that intersects all data ranges. O'Rourke assumes a streaming setting, which adds horizontal segments in sorted order from low to high. 
This algorithm maintains a line segment from $p$ that stabs \emph{all} received horizontal segments. The algorithm tests after an insertion whether such a segment still exists. If not, it outputs the previous line segment and deletes all horizontal segments (apart from the current insertion). 
This algorithm exists in a streaming setting. However,~\cite{ferragina2020pgm} is a normal static setting where this algorithm is not optimal!

The algorithm by O'Rourke can maintain a horizontal $\eps$-cover in the following way: for each insertion into $G$, test whether there exists a segments intersecting all segments in $G$. If there does not exist such a segment, output the previous segment and delete all segments from $G$ (apart from the segment that was just inserted). 
This algorithm is also not optimal in a static setting, as a similar adversarial input can be constructed. However, it can easily be shown that this algorithm is a $\frac{3}{2}$ approximation:

Indeed, this algorithm maintains the invariant that for all consecutive segments $\lambda, \lambda'$ in the output, the input $I$ corresponding to these segments is \emph{blocked}. I.e., there exists no segment $\lambda^*$ that intersects all segments in $I$. This implies that for any three consecutive segments in the output, any other algorithm must include at least two segments and so a $\frac{3}{2}$ approximation follows.

\newpage
\section{Additional Experiments}
\label{app:experiments}

We here describe our full suite of experiments, including additional real world datasets and variations of the experiments shown in Section~\ref{sec:experiments}.
We've used the following datasets, each consisting of unique 8 byte integers in randomly shuffled order:
\begin{itemize}
    \item \textbf{LINES} is a synthetic data set of 5M integers that, in rank space, produces 5 lines of exponentially increasing slope. This set models the ideal scenario for a PGM index. 
    \item \textbf{LONGITUDE} is a real world data set that contains the longitudes of roughly 246M points of interest from OpenStreetMap, over the region of Italy. 
    This data is thereby inherently of geometric nature.
    This data set was used in both~\cite{ferragina2020pgm}  and \cite{kipf2019sosdbenchmarklearnedindexes}. We follow~\cite{ferragina2020pgm} and convert the data to integers by removing the decimal point from the raw longitudes.
    \item \textbf{UNIF} originates from~\cite{ferragina2020pgm}. It is a synthetic data set, containing a uniform random sample of 50M integers from $(0,10^{11})$. We adapt this data set to our dynamic setting.
    \item \textbf{DRIFTER} is a real world data set, containing roughly 1.7M steps of accumulated distance travelled by ocean drifters tracked through GPS. Used previously as a subtrajectory clustering benchmark~\cite{conradi2023}.
    \item \textbf{BOOK}, from~\cite{books200m}, contains Amazon book sale popularity data. It was used in a benchmarking paper on learned indices~\cite{kipf2019sosdbenchmarklearnedindexes}. We use a truncated sample of 100M integers.
    \item \textbf{BERLIN, CHICAGO}, datasets each containing roughly $150.000$ GPS samples from car traffic in these cities. Used often as benchmarks for road map construction~\cite{ahmed2015}. We use the accumulated distance travelled as input.
    \item \textbf{TIMESERIES}, from~\cite{m42020}, contains a concatenation of $15.000$ steps of time series from various real world domains. Originally used as part of the M4 forecasting competition.
\end{itemize}

\subsection{Learned index complexity}

As a measure of the complexity of the learned indices, we count the number of lines stored in the structures during construction. In Figure~\ref{fig:linecounts} the measurements across all our data can be seen. As the data arrives shuffled, even if the total data set has some underlying structure, it might not come to light until more data arrives. This is easily seen on the synthetic \textsc{Lines} data, where the structure starts decreasing in complexity roughly halfway for the dynamic PGM. Since the logarithmic PGM has the data split across buckets, the complexity grows until close to the end where there is a sudden drop.

Although geometric in nature, obtained as accumulated distances travelled, the \textsc{Drifter}, \textsc{Berlin}, and \textsc{Chicago} data sets do not show trends that indicate favorable conditions for the examined learned indices. 
\begin{figure}
    \centering
    \includegraphics[width=0.49\linewidth]{Figures/Results/linecount_lines.png}
    \includegraphics[width=0.49\linewidth]{Figures/Results/linecount_longitude.png}
    \includegraphics[width=0.49\linewidth]{Figures/Results/linecount_unif.png}
    \includegraphics[width=0.49\linewidth]{Figures/Results/linecount_drifter.png}
    \includegraphics[width=0.49\linewidth]{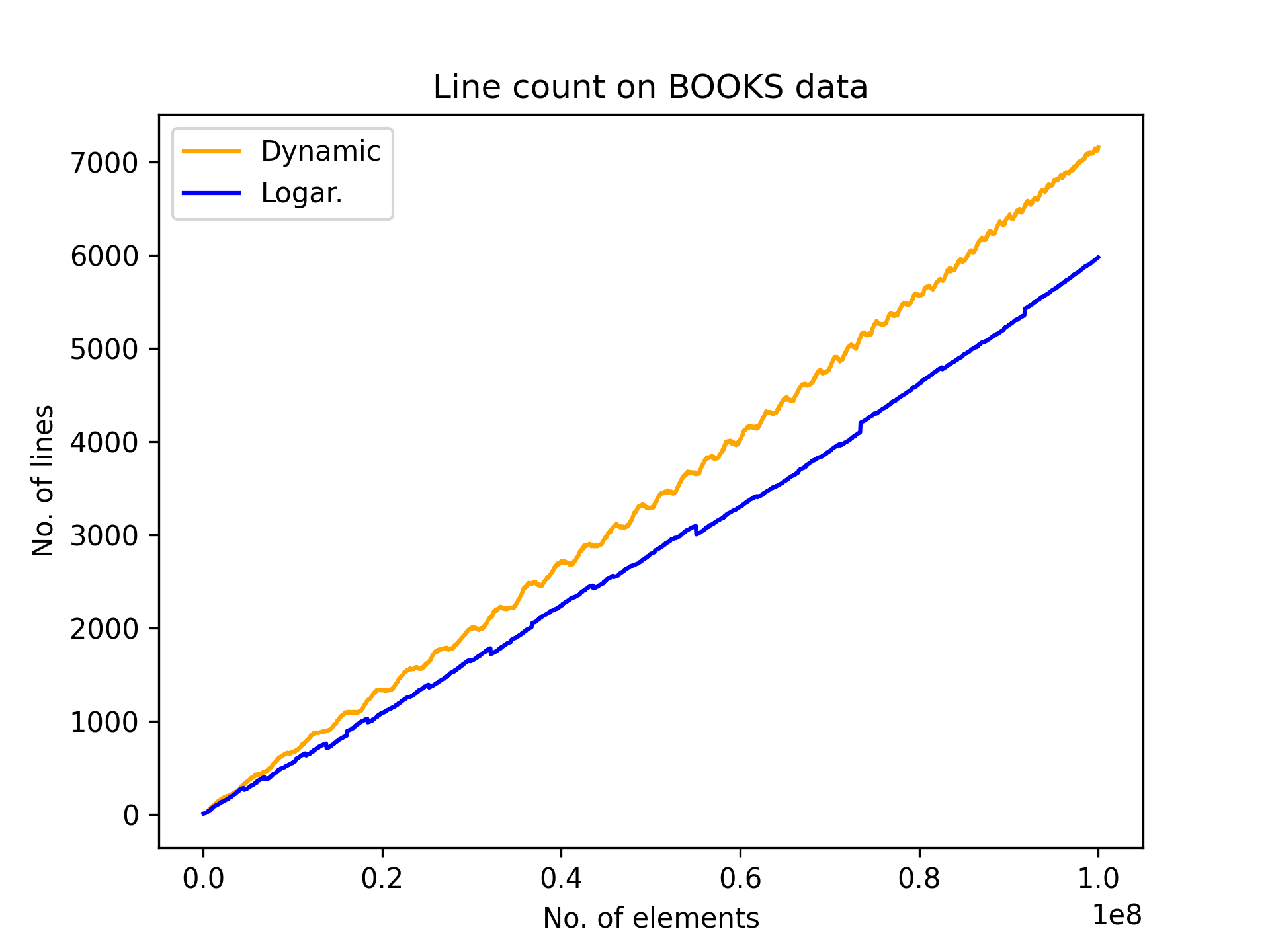}
    \includegraphics[width=0.49\linewidth]{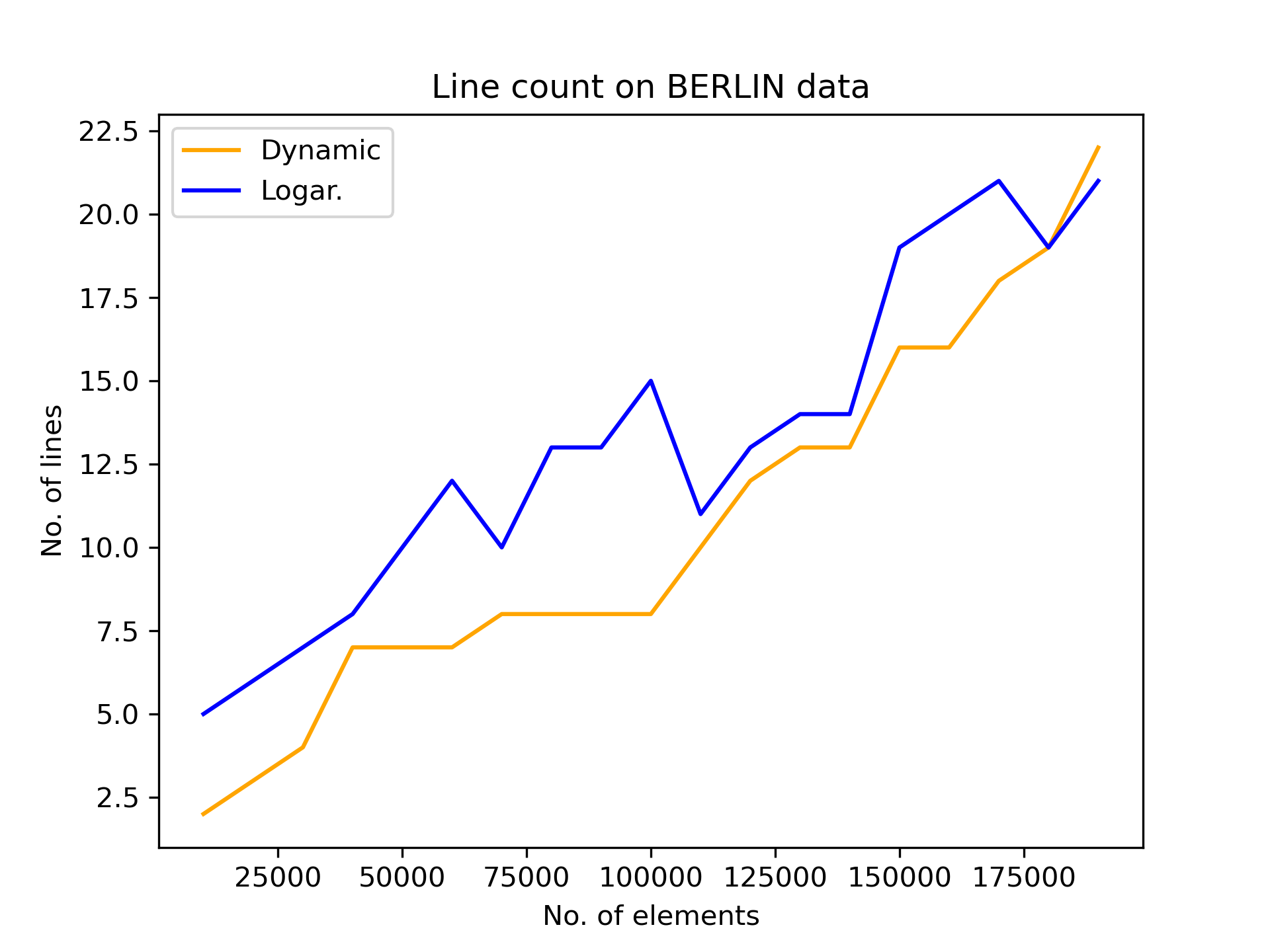}
    \includegraphics[width=0.49\linewidth]{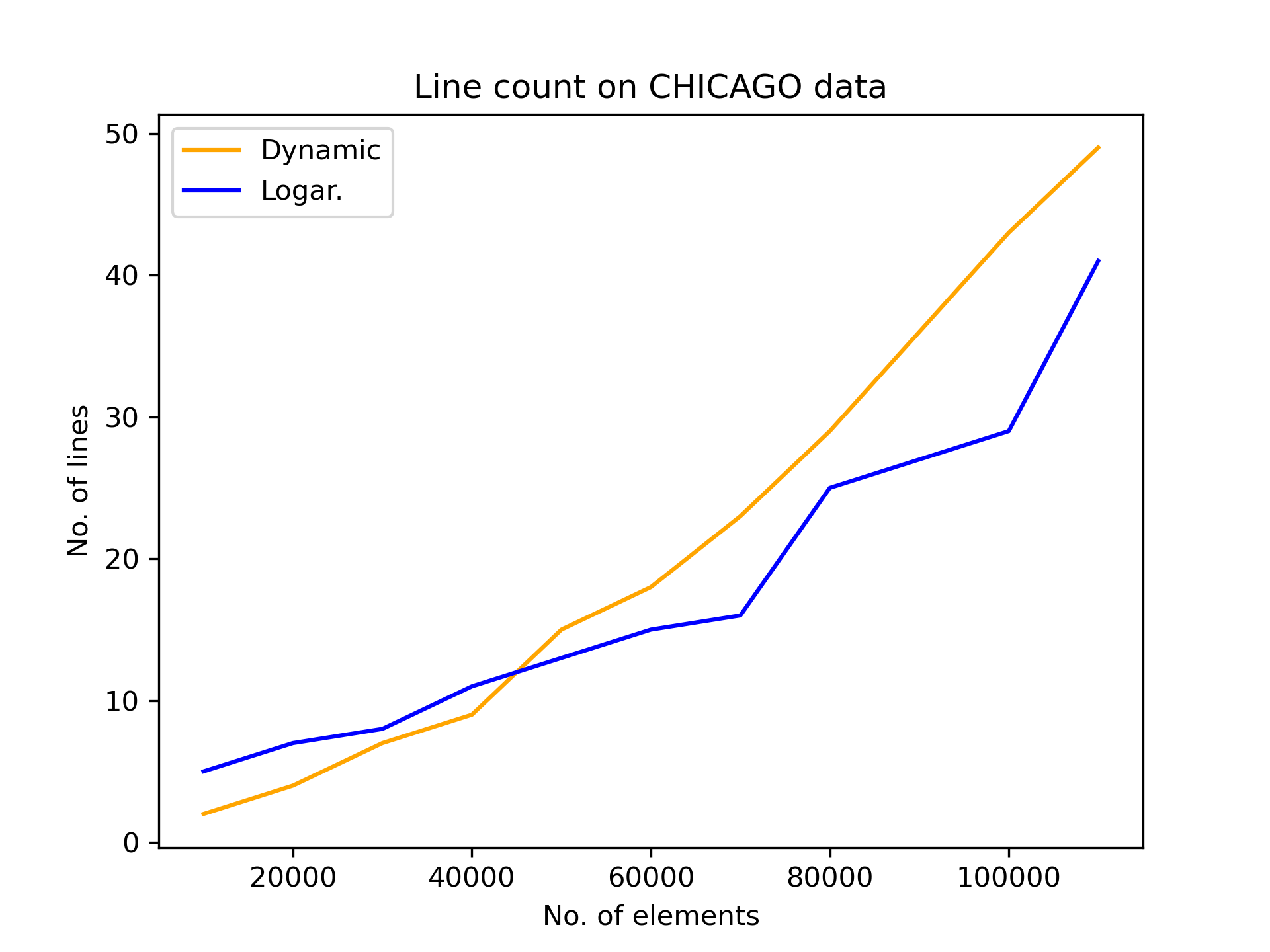}
    \includegraphics[width=0.49\linewidth]{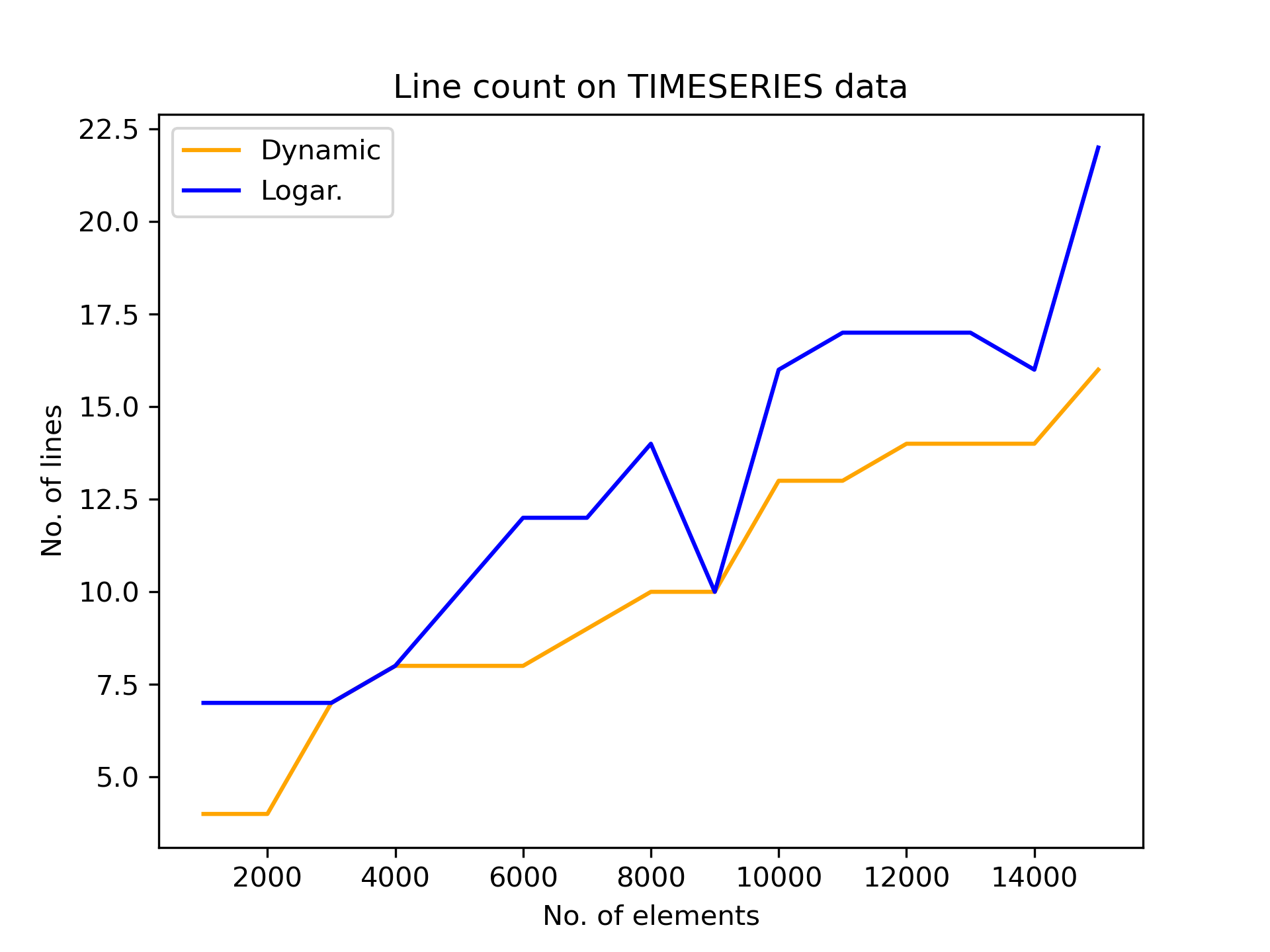}
    \caption{The number of lines stored in the learned indices, insertion only.}
    \label{fig:linecounts}
\end{figure}

\subsection{Maintaining an indexing data structure}

Here we show the full results from varying query ratio and prior deletion ratio on all of the datasets described above.

\subparagraph{The learned index.}
When measuring the running times of maintaining the learned index, the key observation is that the smaller structured datasets \textsc{Drifter}, \textsc{Berlin}, and \textsc{Chicago} have much closer performance than the larger \textsc{Longitude} data. As is the case also on the unstructured data, the performance penalties incurred by the poor memory access patterns of the convex hull data structures becomes increasingly prohibitive once data surpasses a certain size. 

\begin{figure}[h]
    \centering
    \includegraphics[width=0.32\linewidth]{Figures/Results/Maintenance/italy_maintenance.png}
    \includegraphics[width=0.32\linewidth]{Figures/Results/Maintenance/ran_maintenance.png}
    \includegraphics[width=0.32\linewidth]{Figures/Results/Maintenance/drifter_maintenance.png}
    \includegraphics[width=0.24\linewidth]{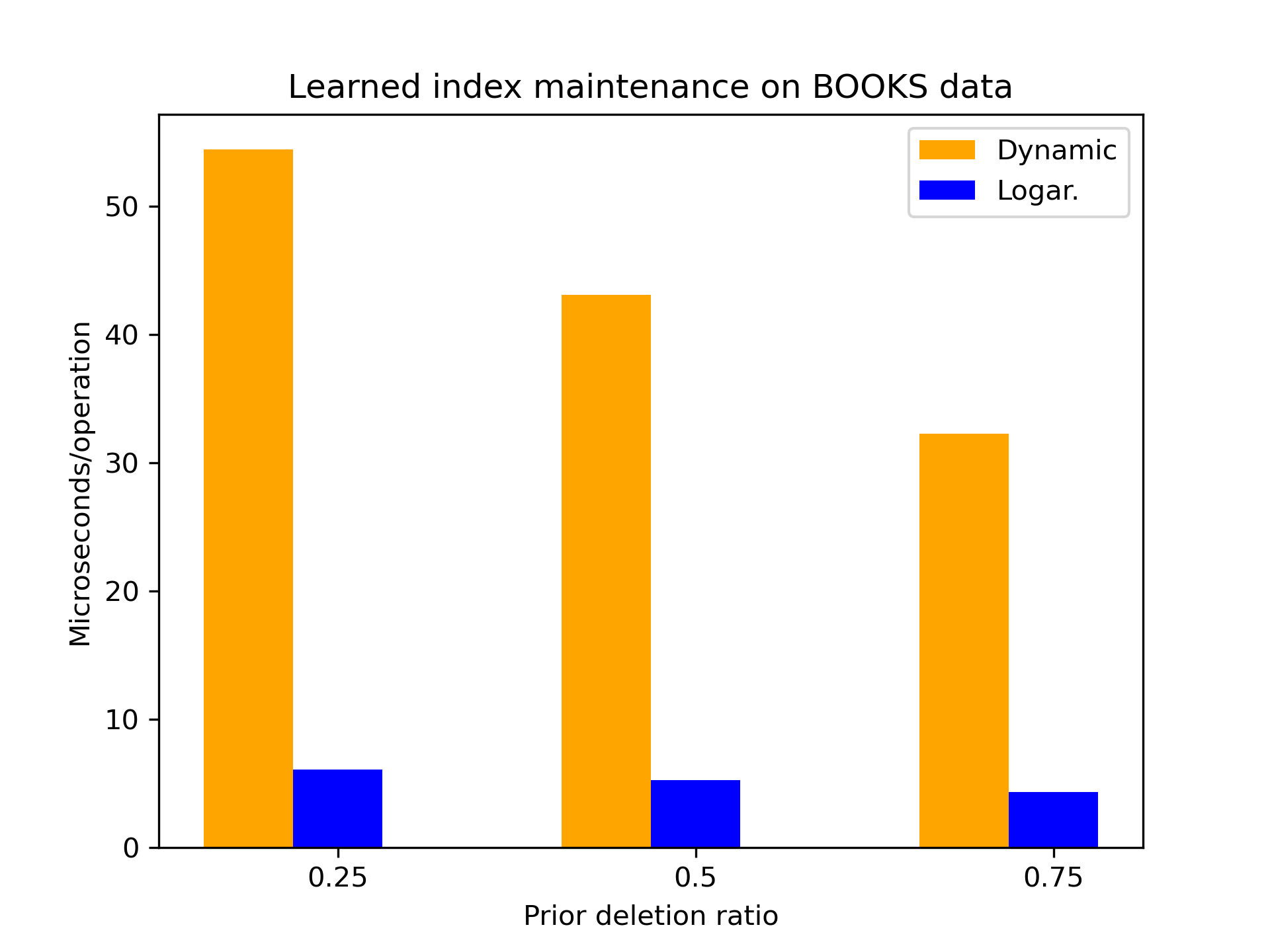}
    \includegraphics[width=0.24\linewidth]{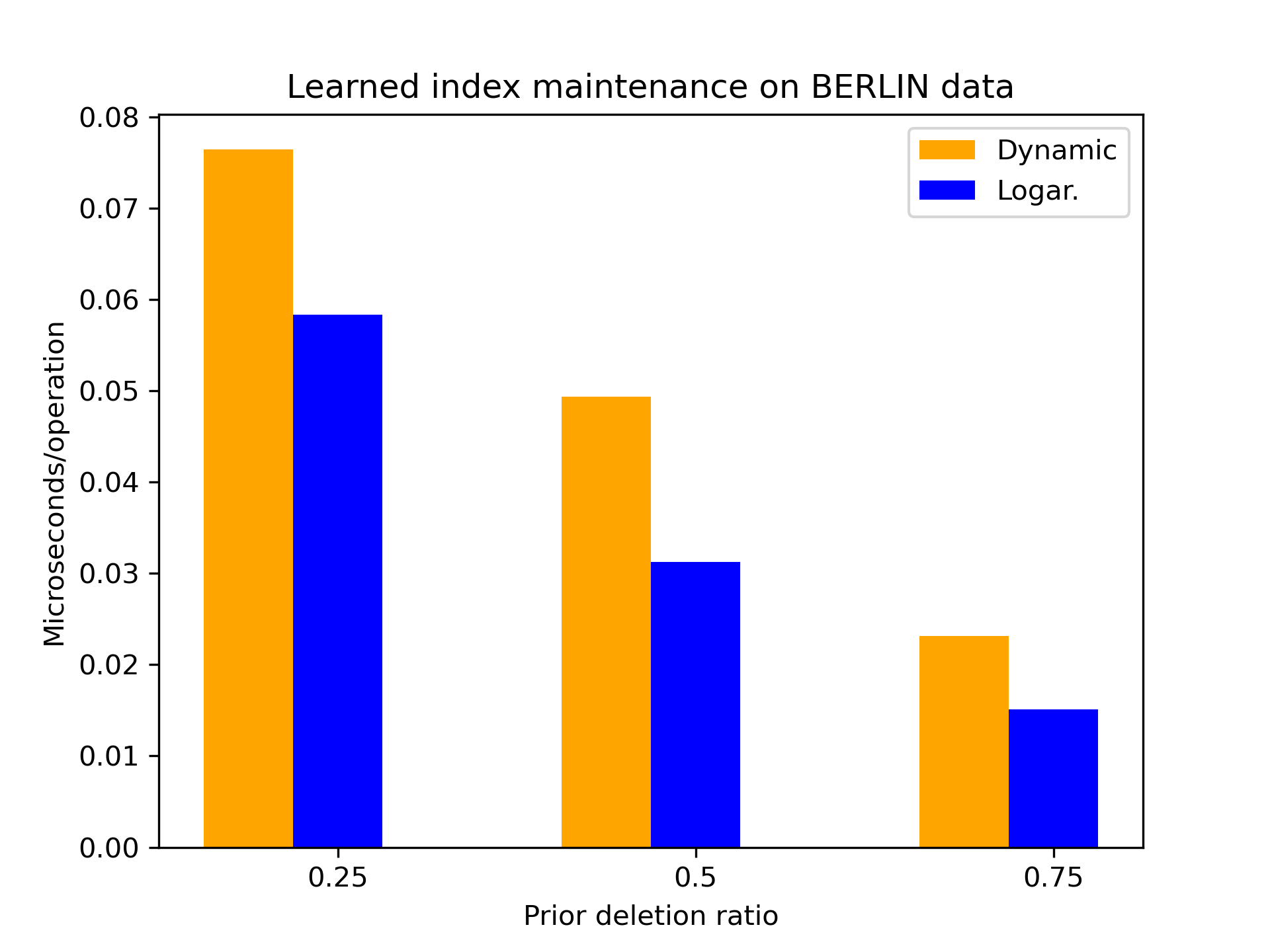}
    \includegraphics[width=0.24\linewidth]{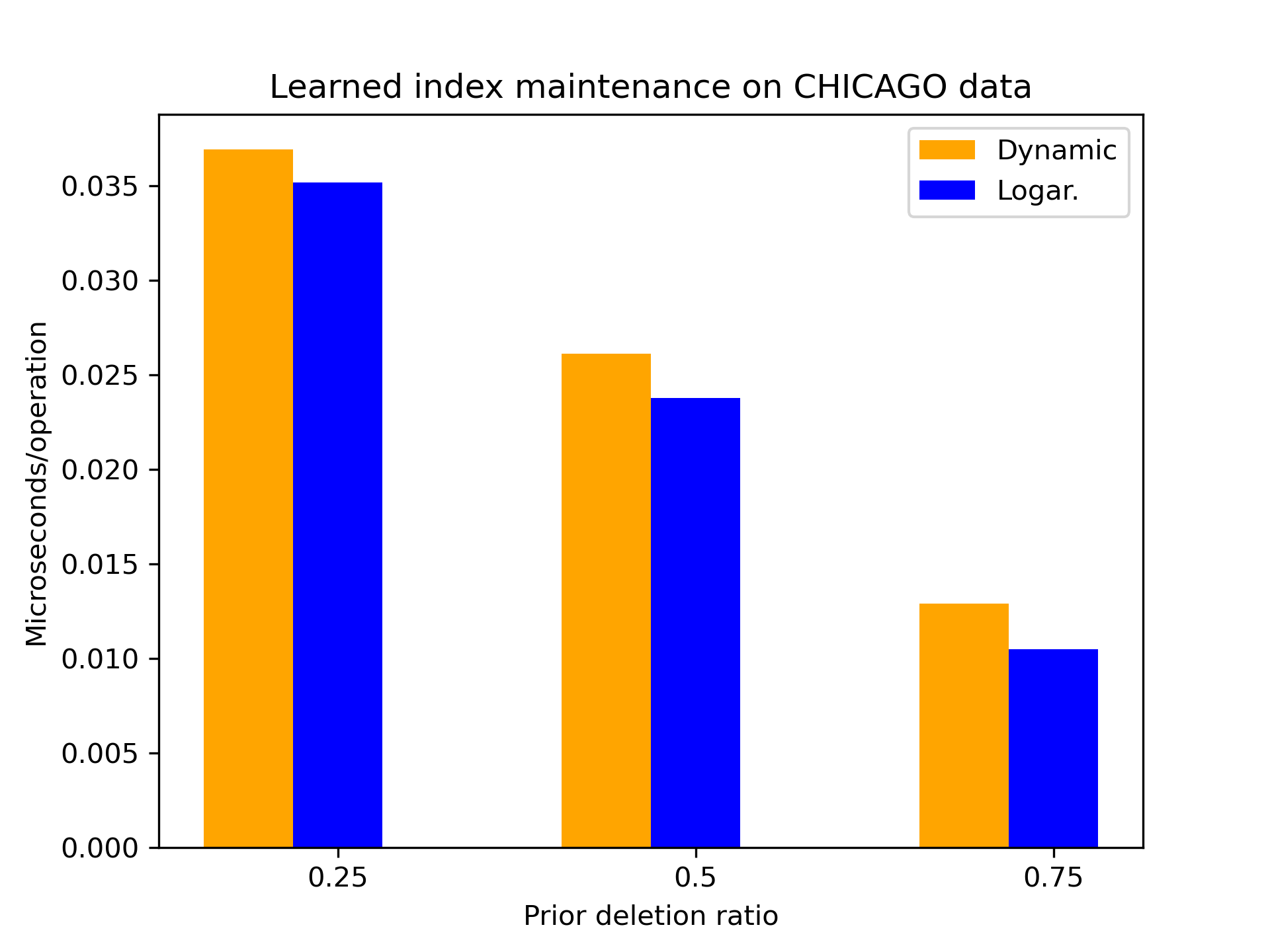}
    \includegraphics[width=0.24\linewidth]{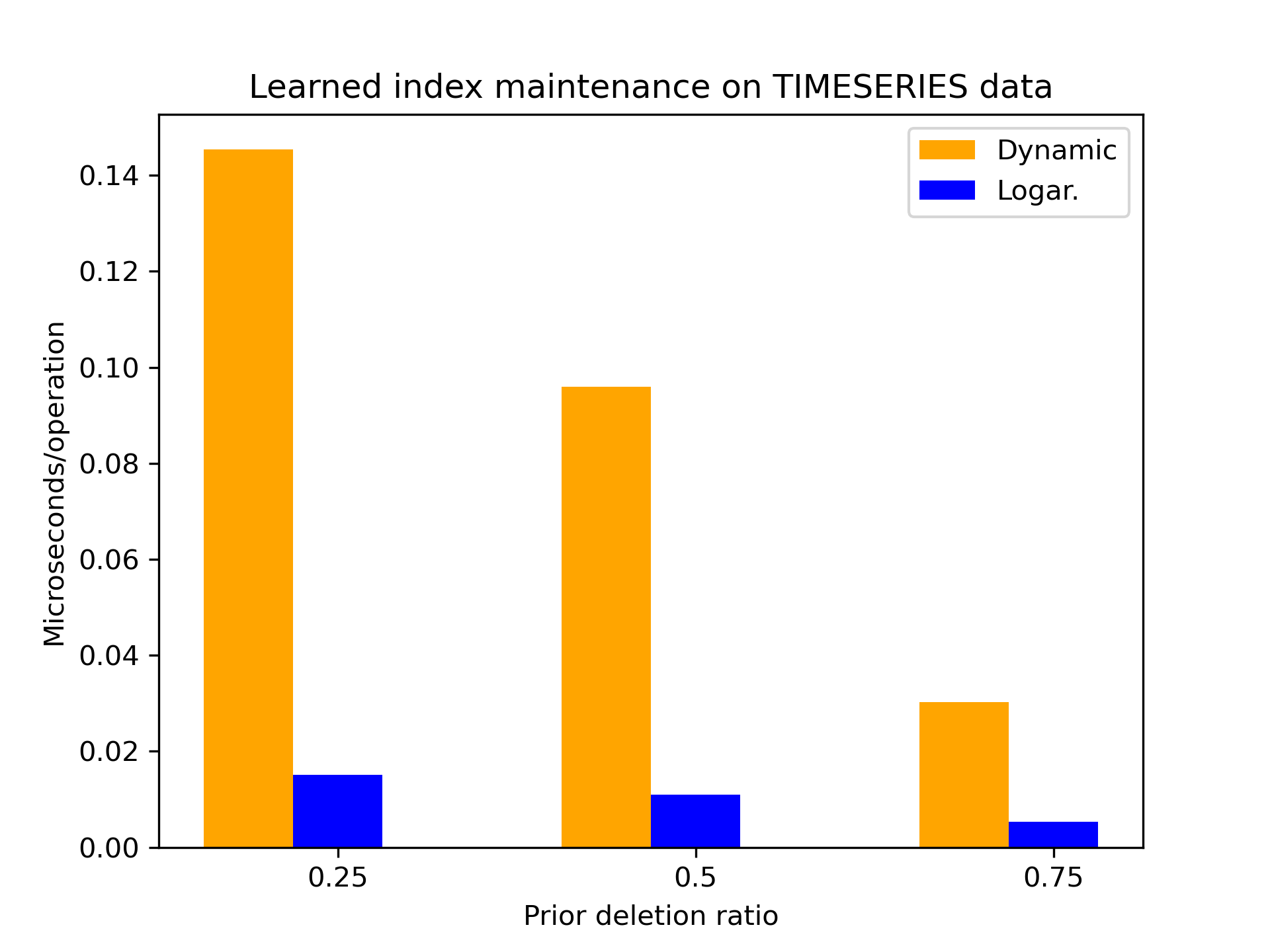}
    \caption{Update times for maintaining the learned index dynamically. }
    \label{fig:app_maintenance}
\end{figure}

\subparagraph{The indexing data structure.}
For the larger datasets, \textsc{Longitude}, \textsc{Unif}, and \textsc{Book}, we see some interesting trends. \textsc{Longitude} is the only structured large set, where we see an increase of performance from the dynamic PGM as query ratio increases, whereas it decreases for the logarithmic PGM. In the unstructured cases, both structures occur more time spent for larger query ratios. For the smaller datasets, performance is almost identical. The overhead associated with maintaining the learned index becomes much smaller, since it can now be manipulated with less bad effects on cache. Still, memory access is the deciding factor, and for many cases the logarithmic PGM comes out on top.

As we increase the number of previous deletions, the gap in performance between indexing structures lessens. This somewhat ties into the previous notes on memory, but we also see that for the larger query ratios performance evens out, with the dynamic PGM only a sliver from the logarithmic PGM on structured data.

The adversarial scenario is taking the above to its extreme, and the dynamic then outperforms the logarithmic on all data.

\begin{figure}
    \centering
    \includegraphics[width=0.32\linewidth]{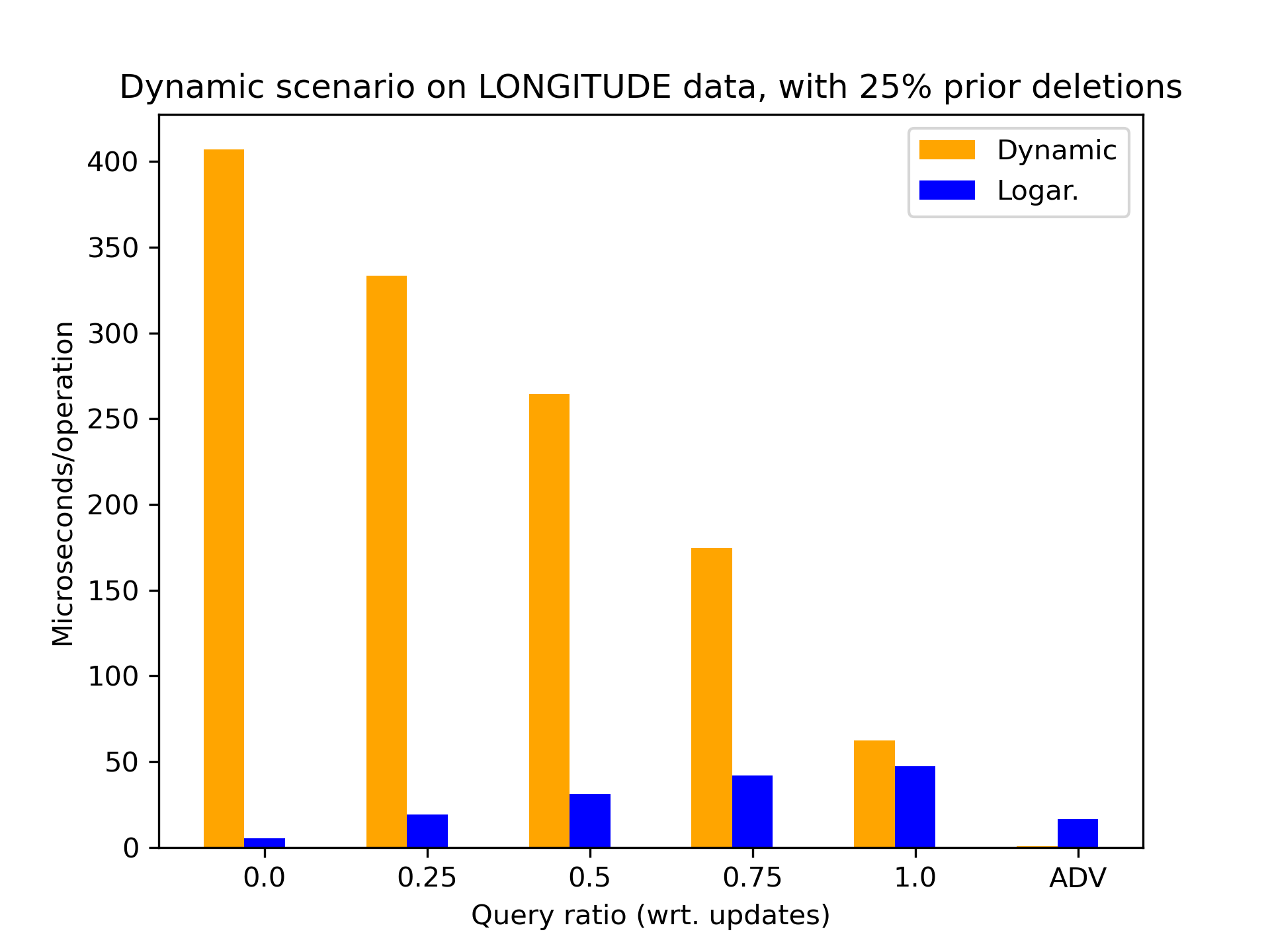}
    \includegraphics[width=0.32\linewidth]{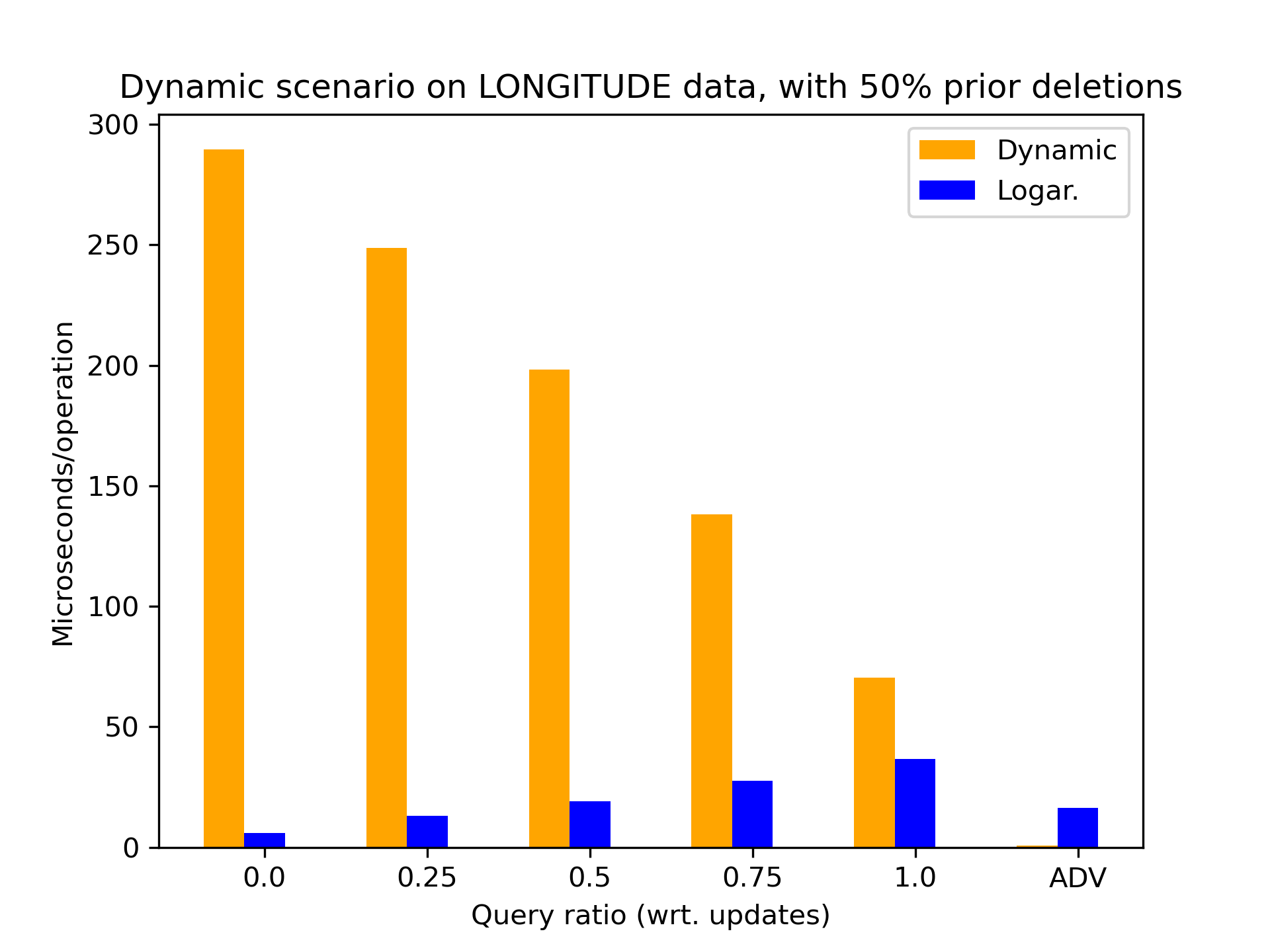}
    \includegraphics[width=0.32\linewidth]{Figures/Results/Updates/italy_updates0.75.png}
    \caption{Time per operation in dynamic scenario with varying query ratio and prior deletions on \textsc{Longitude} data}
    \label{fig:updates_long}
\end{figure}

\begin{figure}
    \centering
    \includegraphics[width=0.32\linewidth]{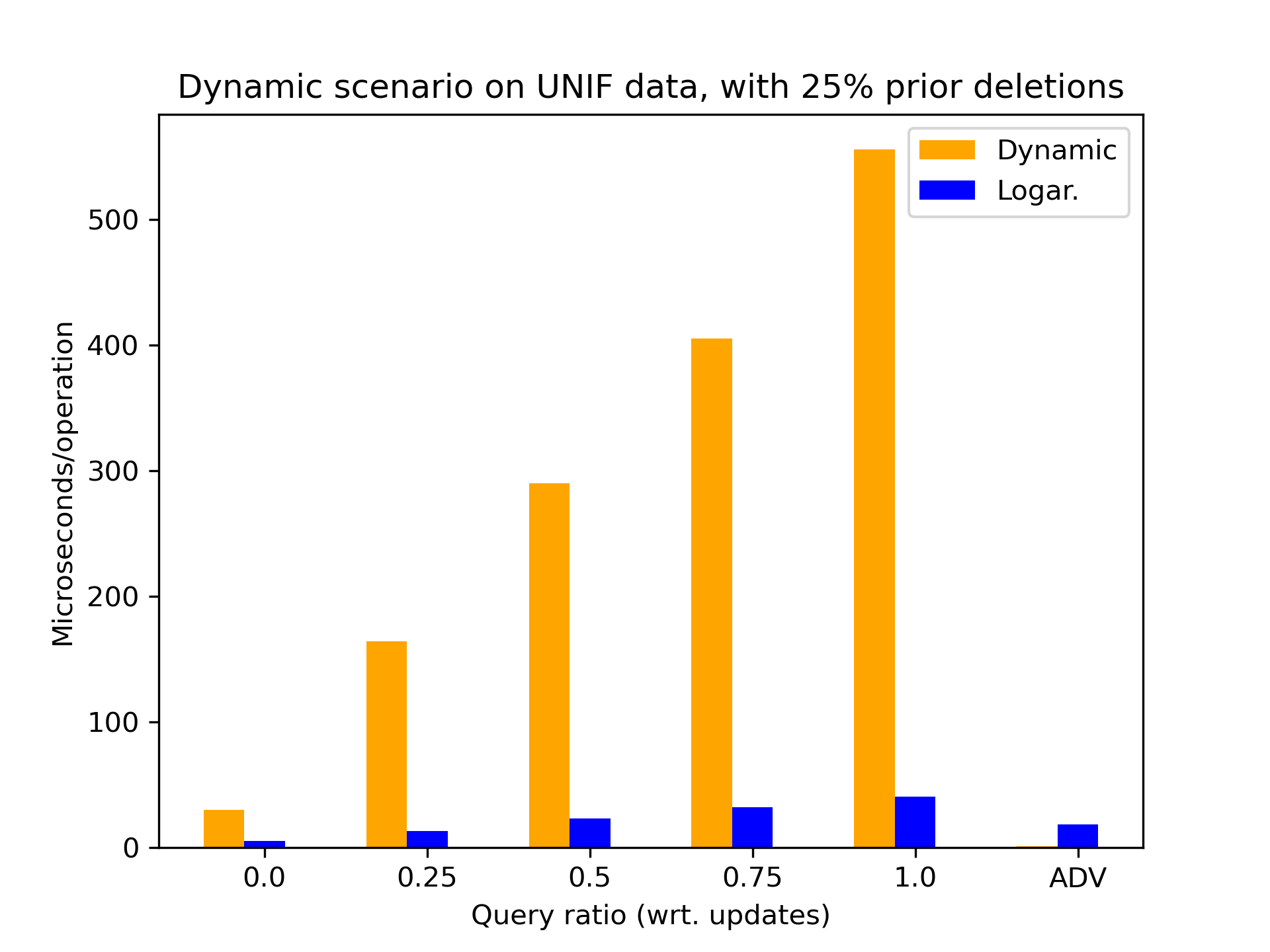}
    \includegraphics[width=0.32\linewidth]{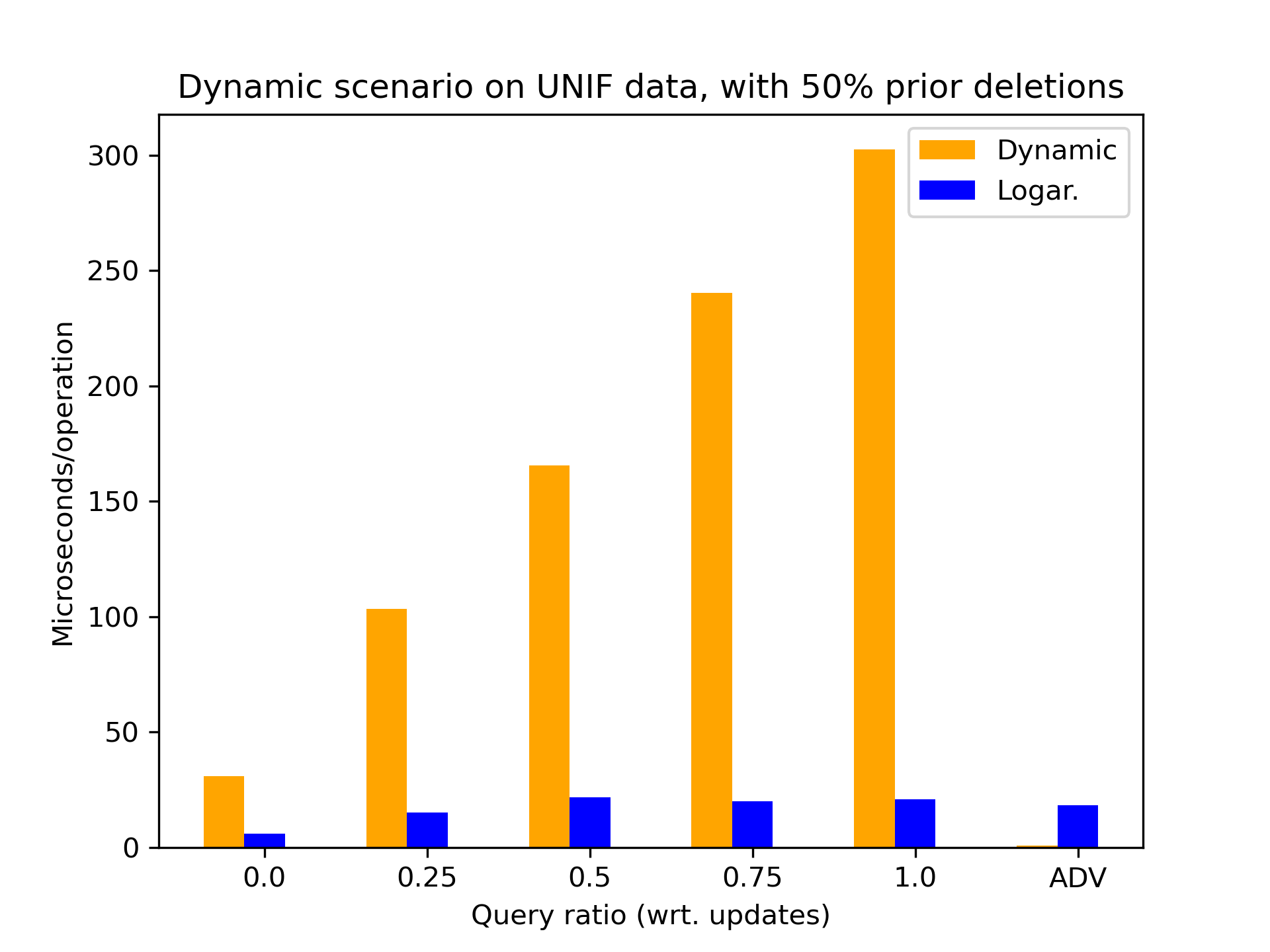}
    \includegraphics[width=0.32\linewidth]{Figures/Results/Updates/ran_updates0.75.png}
    \caption{Time per operation in dynamic scenario with varying query ratio and prior deletions on \textsc{Unif} data}
    \label{fig:updates_unif}
\end{figure}

\begin{figure}
    \centering
    \includegraphics[width=0.32\linewidth]{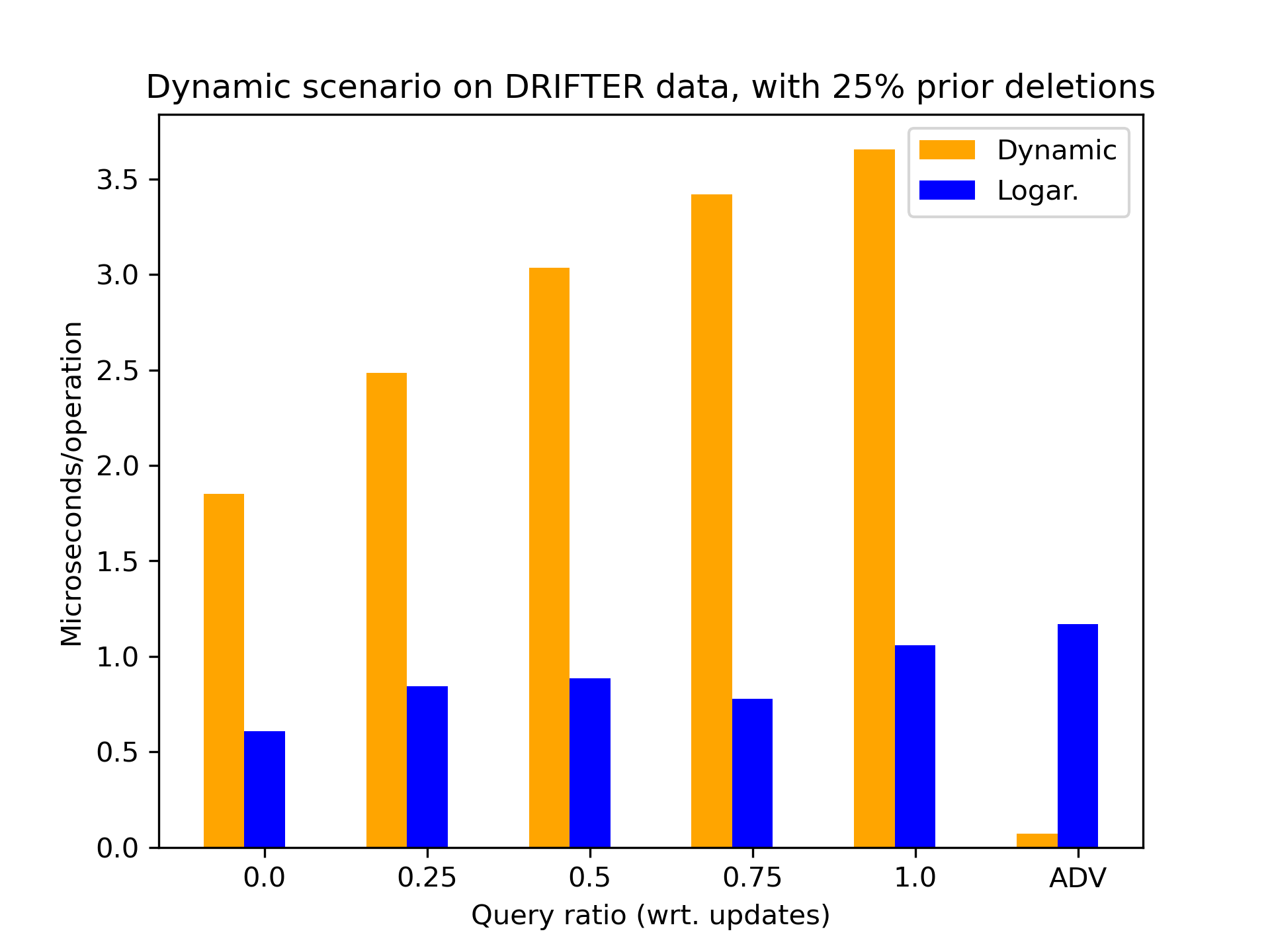}
    \includegraphics[width=0.32\linewidth]{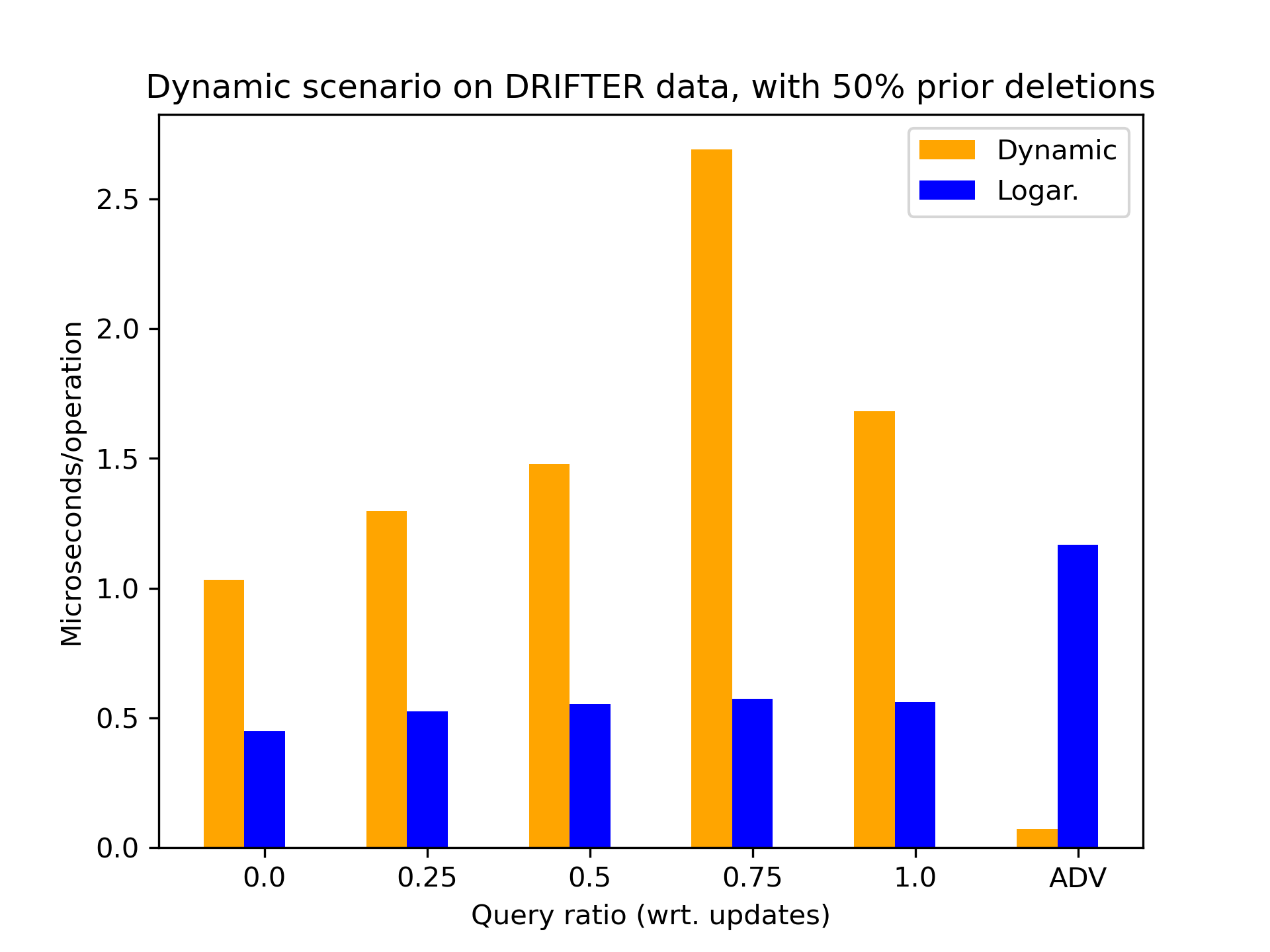}
    \includegraphics[width=0.32\linewidth]{Figures/Results/Updates/drifter_updates0.75.png}
    \caption{Time per operation in dynamic scenario with varying query ratio and prior deletions on \textsc{Drifter} data}
    \label{fig:updates_drifter}
\end{figure}

\begin{figure}
    \centering
    \includegraphics[width=0.32\linewidth]{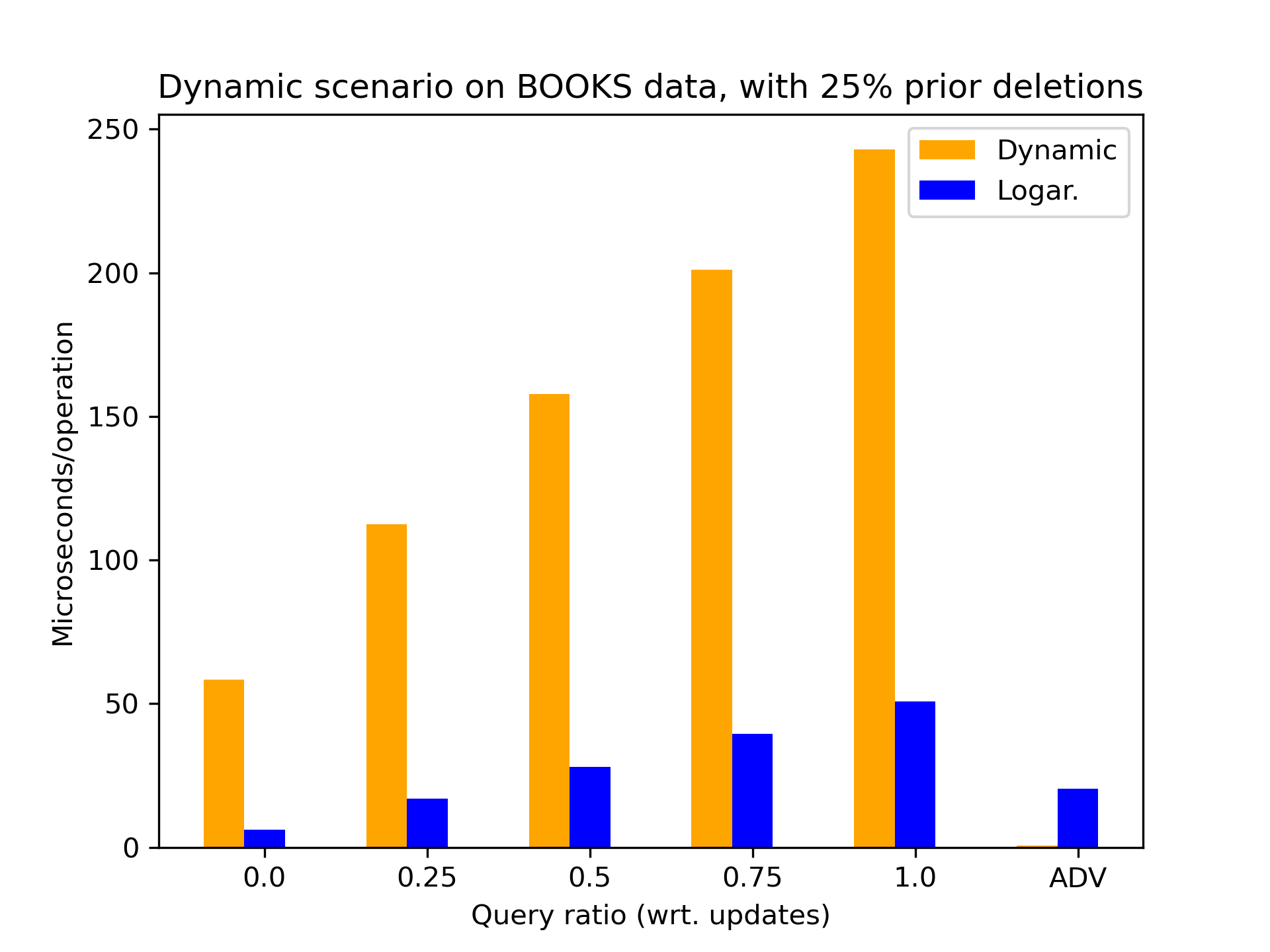}
    \includegraphics[width=0.32\linewidth]{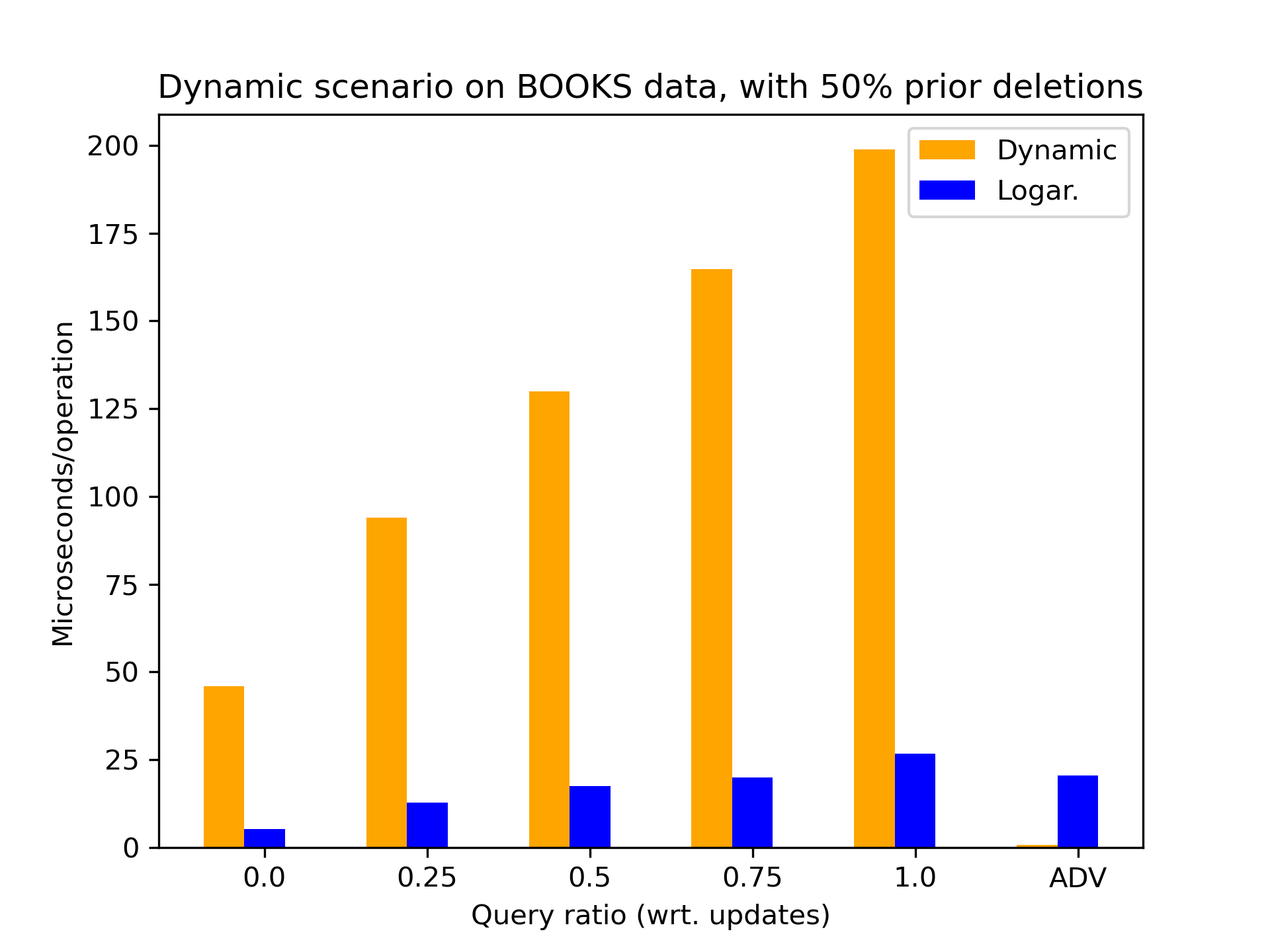}
    \includegraphics[width=0.32\linewidth]{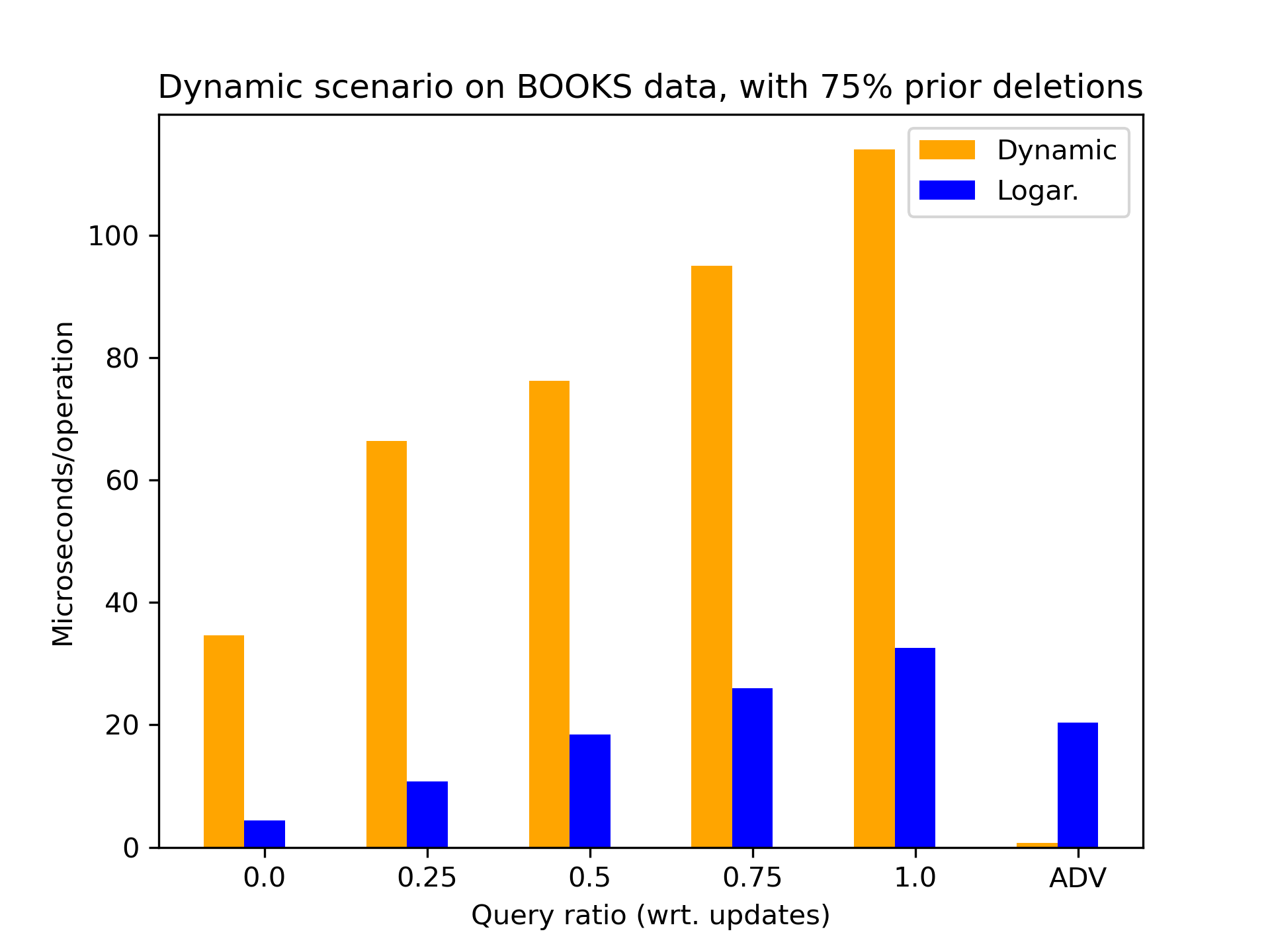}
    \caption{Time per operation in dynamic scenario with varying query ratio and prior deletions on \textsc{Book} data}
    \label{fig:updates_book}
\end{figure}

\begin{figure}
    \centering
    \includegraphics[width=0.32\linewidth]{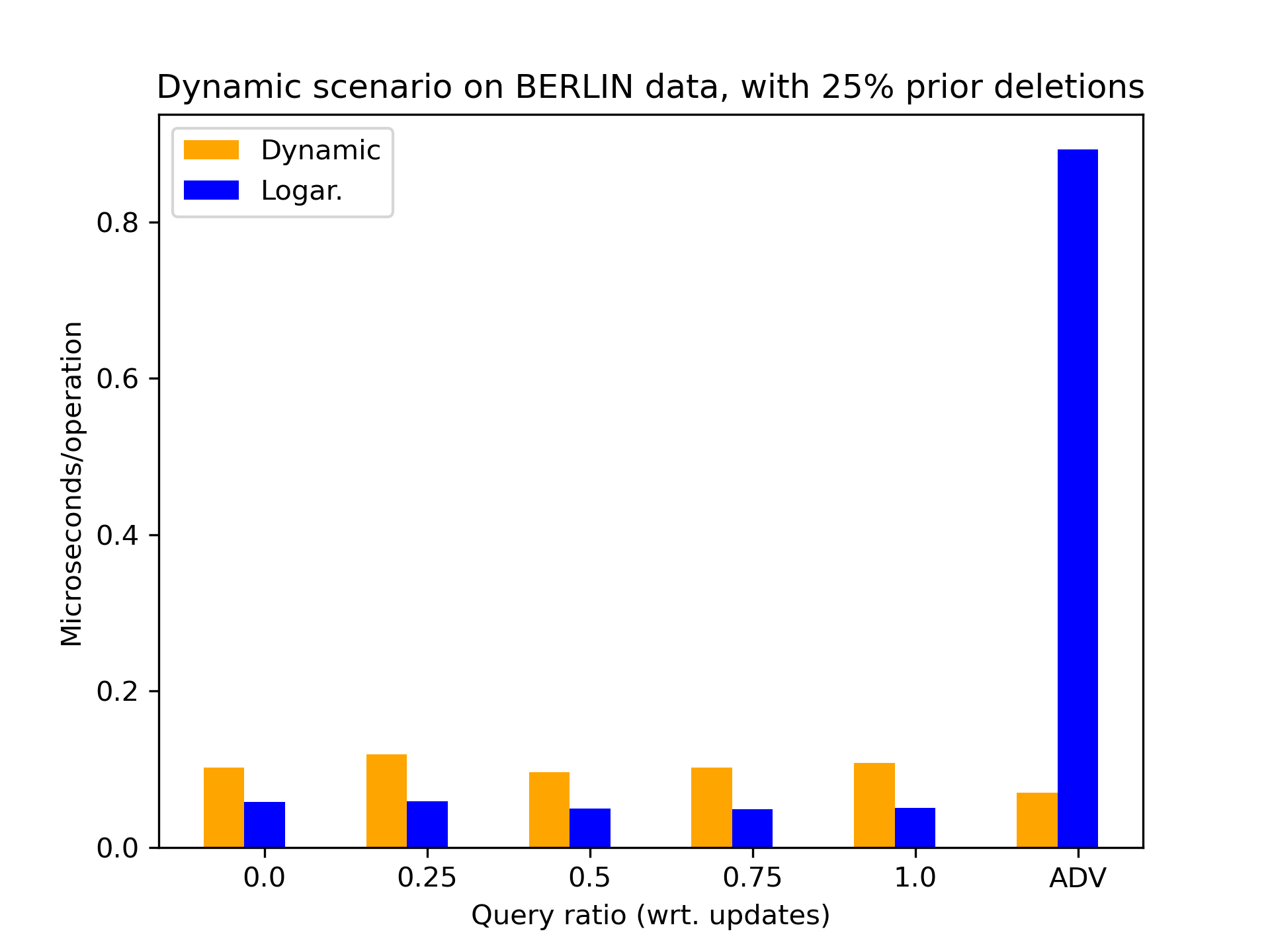}
    \includegraphics[width=0.32\linewidth]{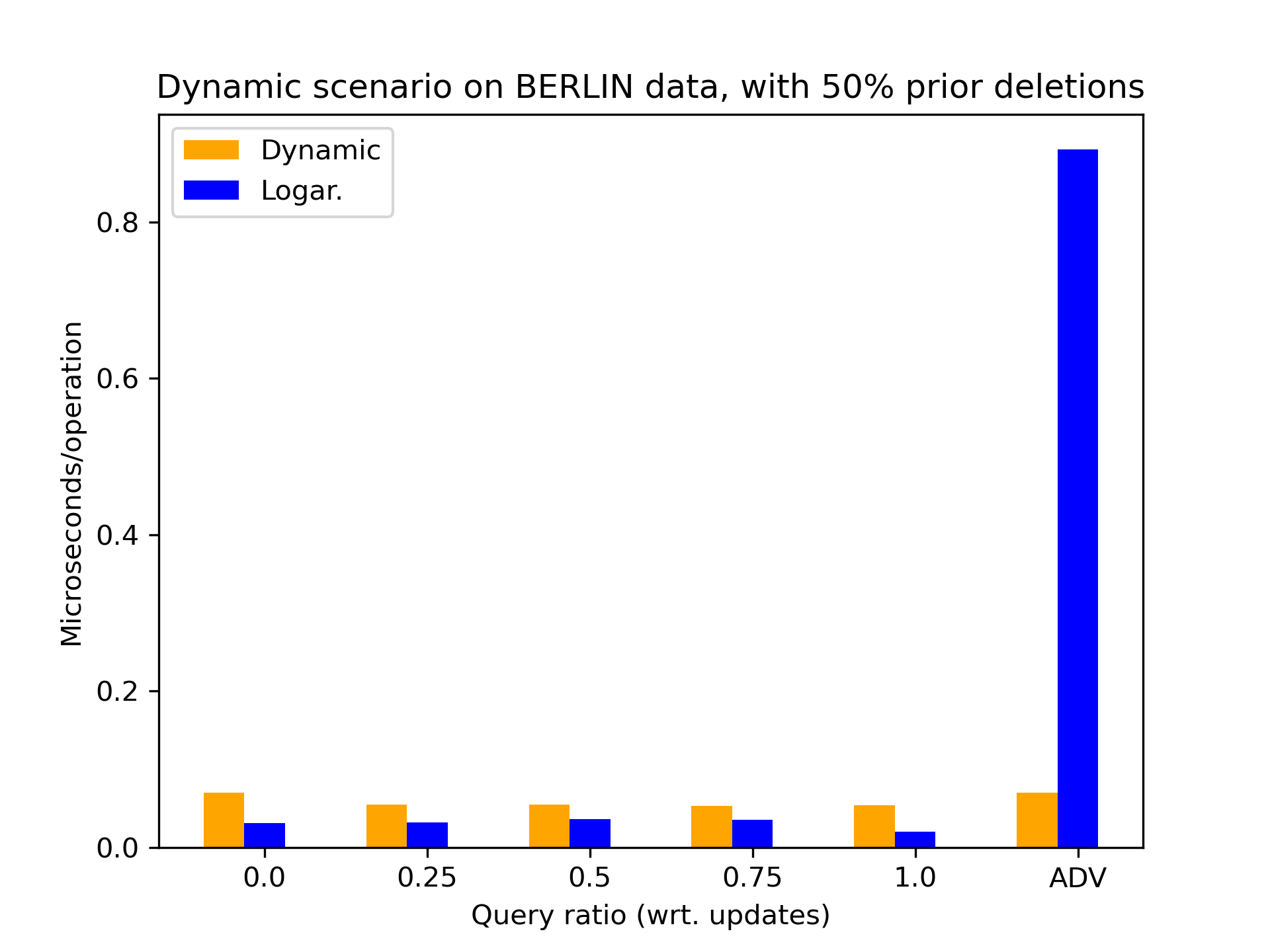}
    \includegraphics[width=0.32\linewidth]{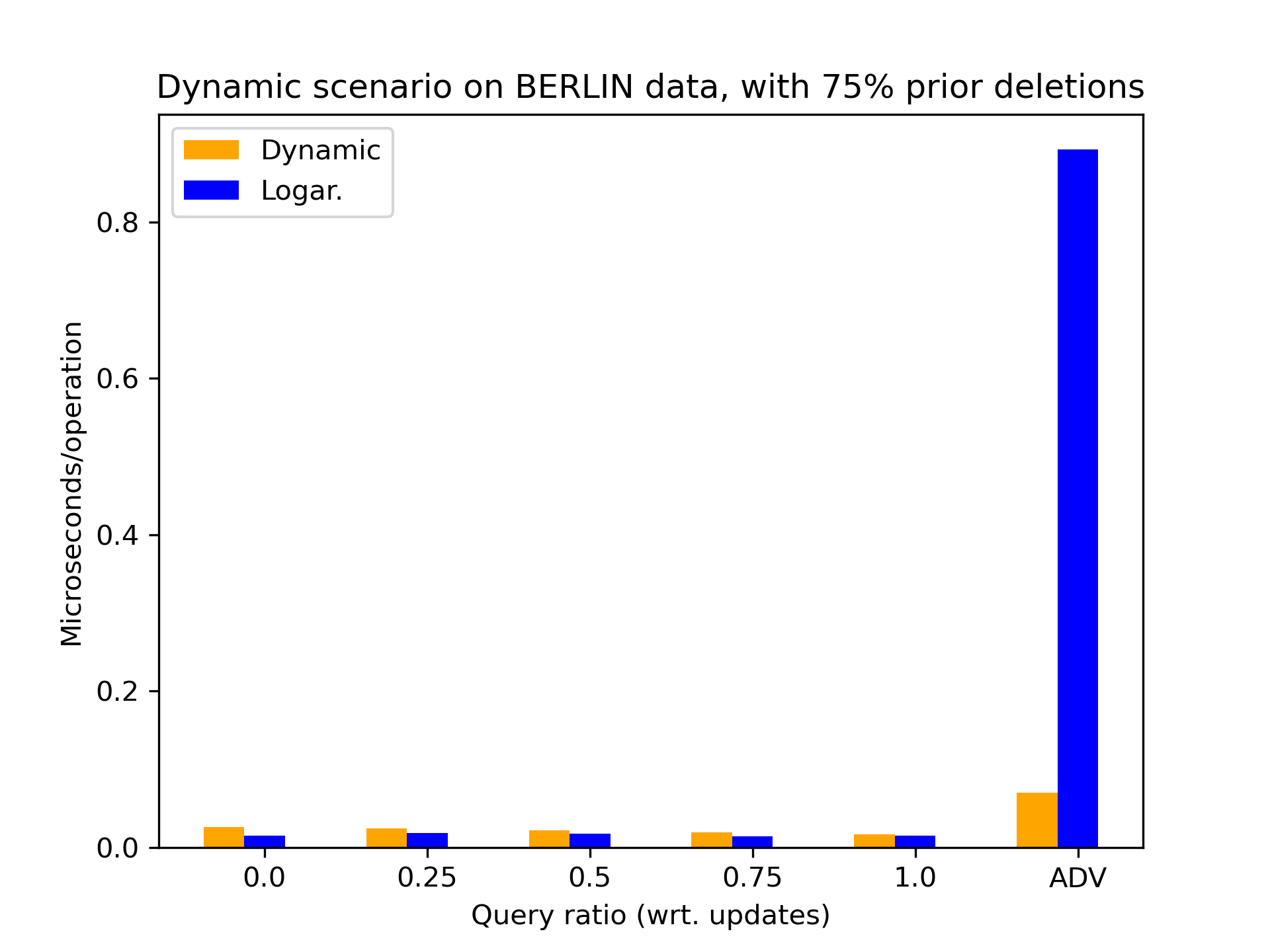}
    \caption{Time per operation in dynamic scenario with varying query ratio and prior deletions on \textsc{Berlin} data}
    \label{fig:updates_berlin}
\end{figure}
\begin{figure}
    \centering
    \includegraphics[width=0.32\linewidth]{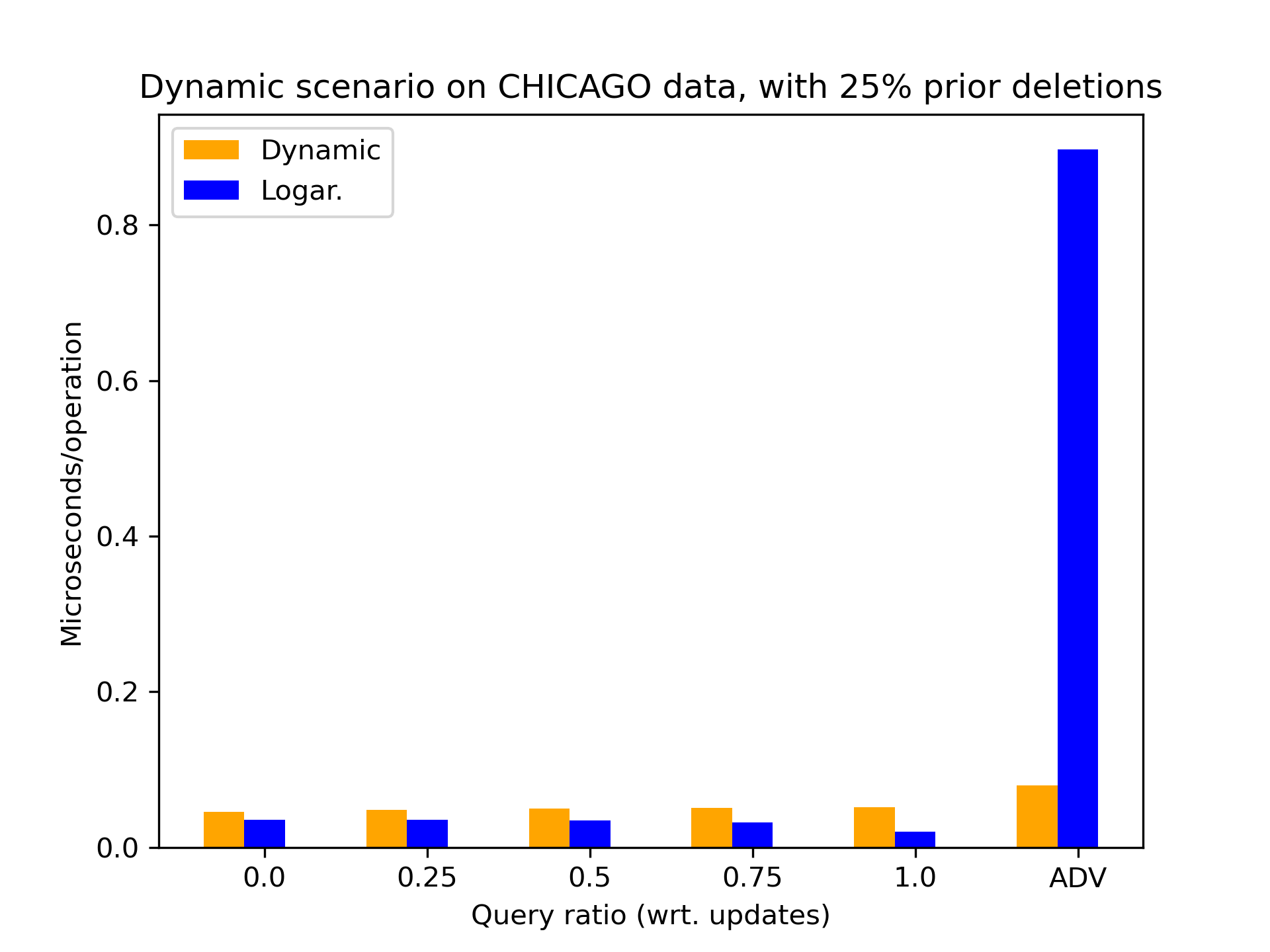}
    \includegraphics[width=0.32\linewidth]{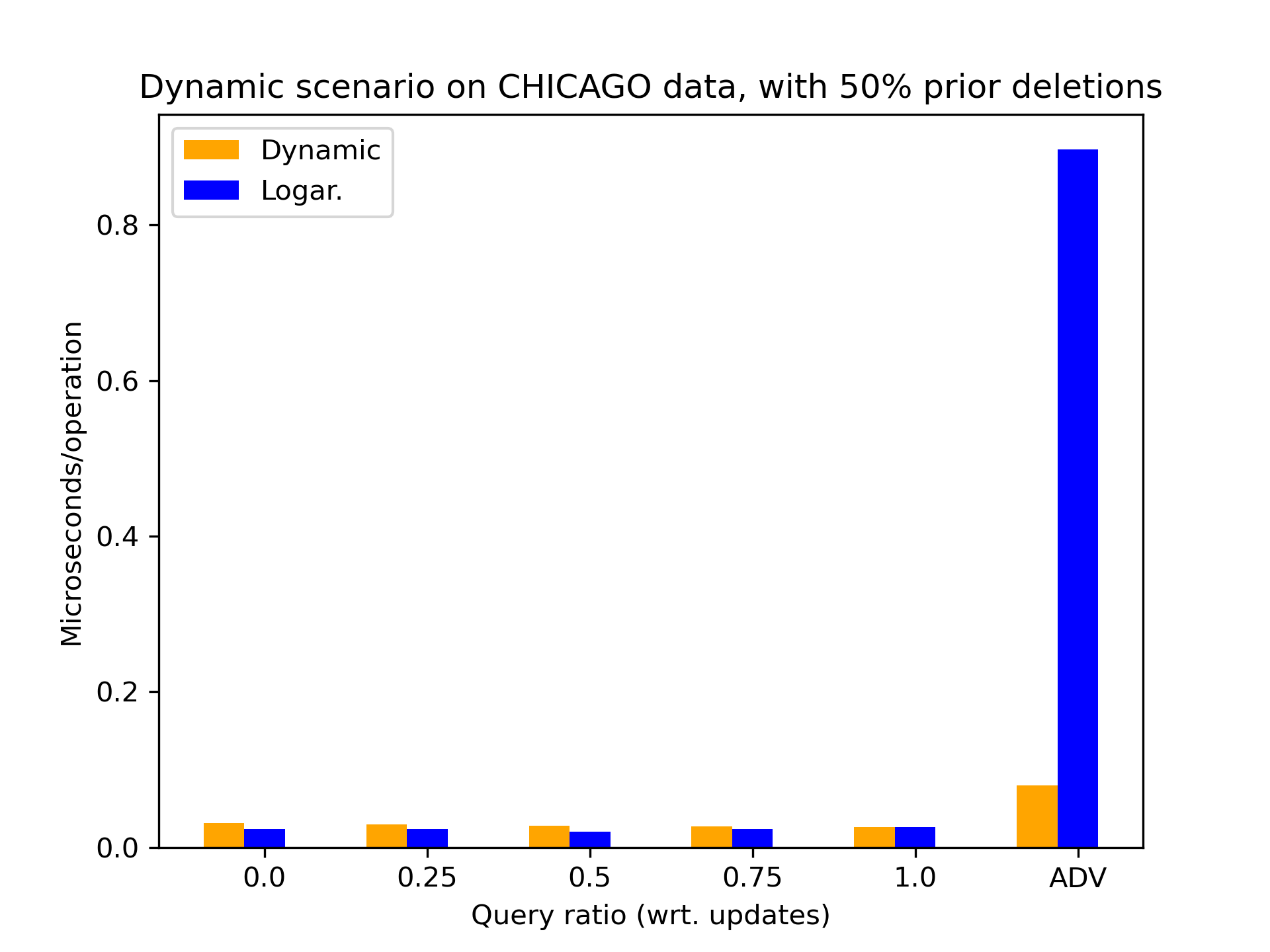}
    \includegraphics[width=0.32\linewidth]{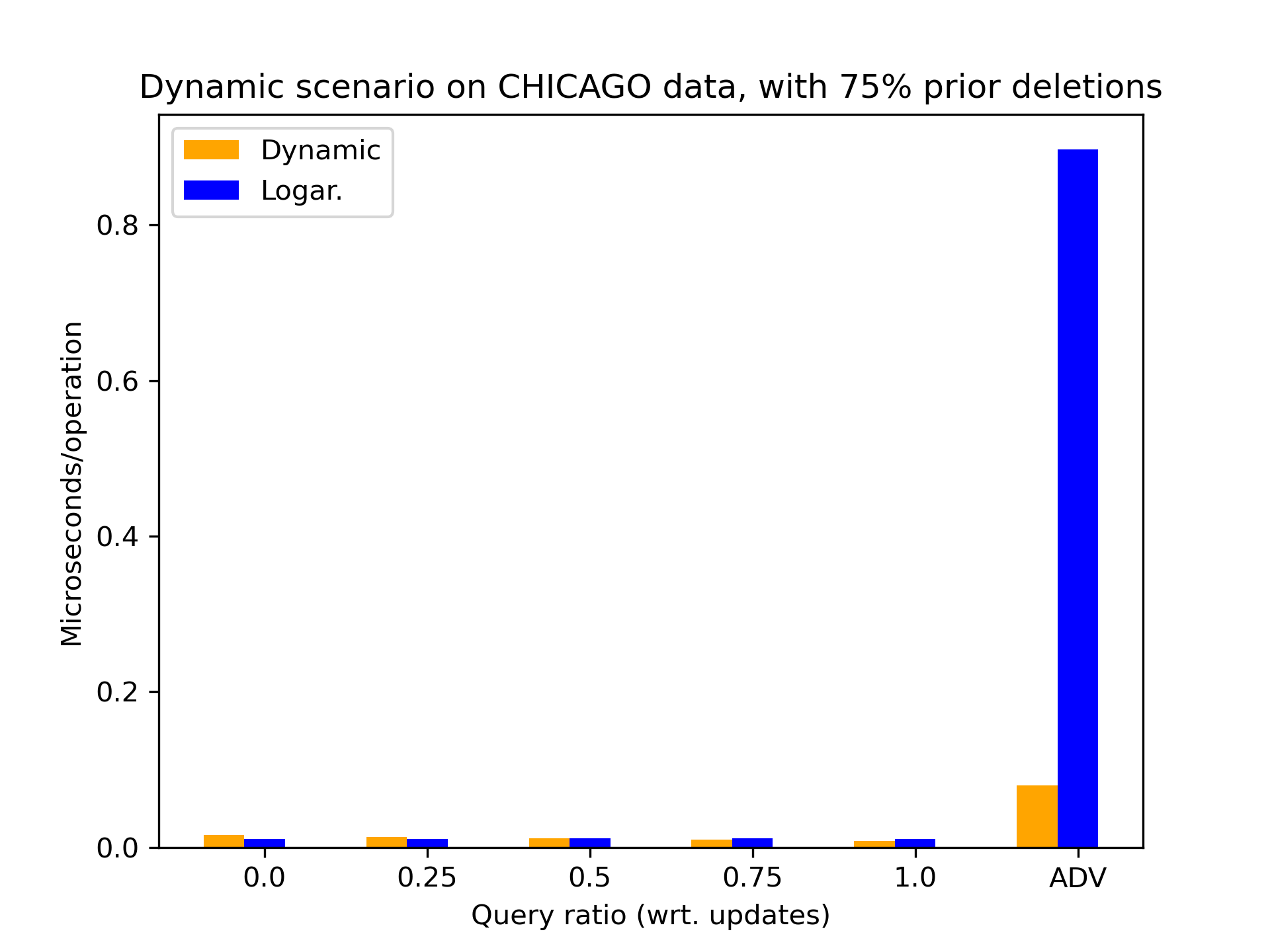}
    \caption{Time per operation in dynamic scenario with varying query ratio and prior deletions on \textsc{Chicago} data}
    \label{fig:updates_chicago}
\end{figure}

\begin{figure}
    \centering
    \includegraphics[width=0.32\linewidth]{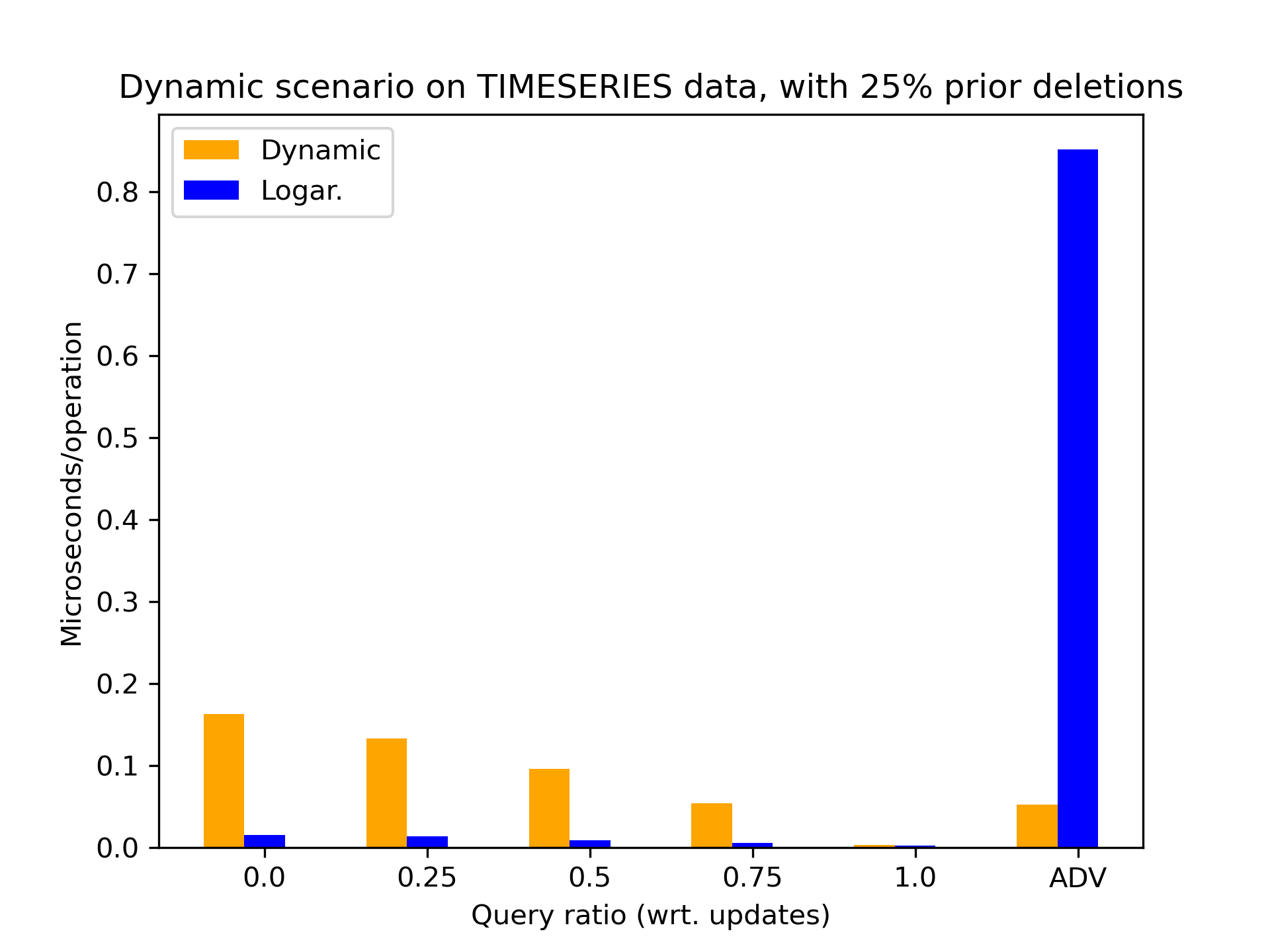}
    \includegraphics[width=0.32\linewidth]{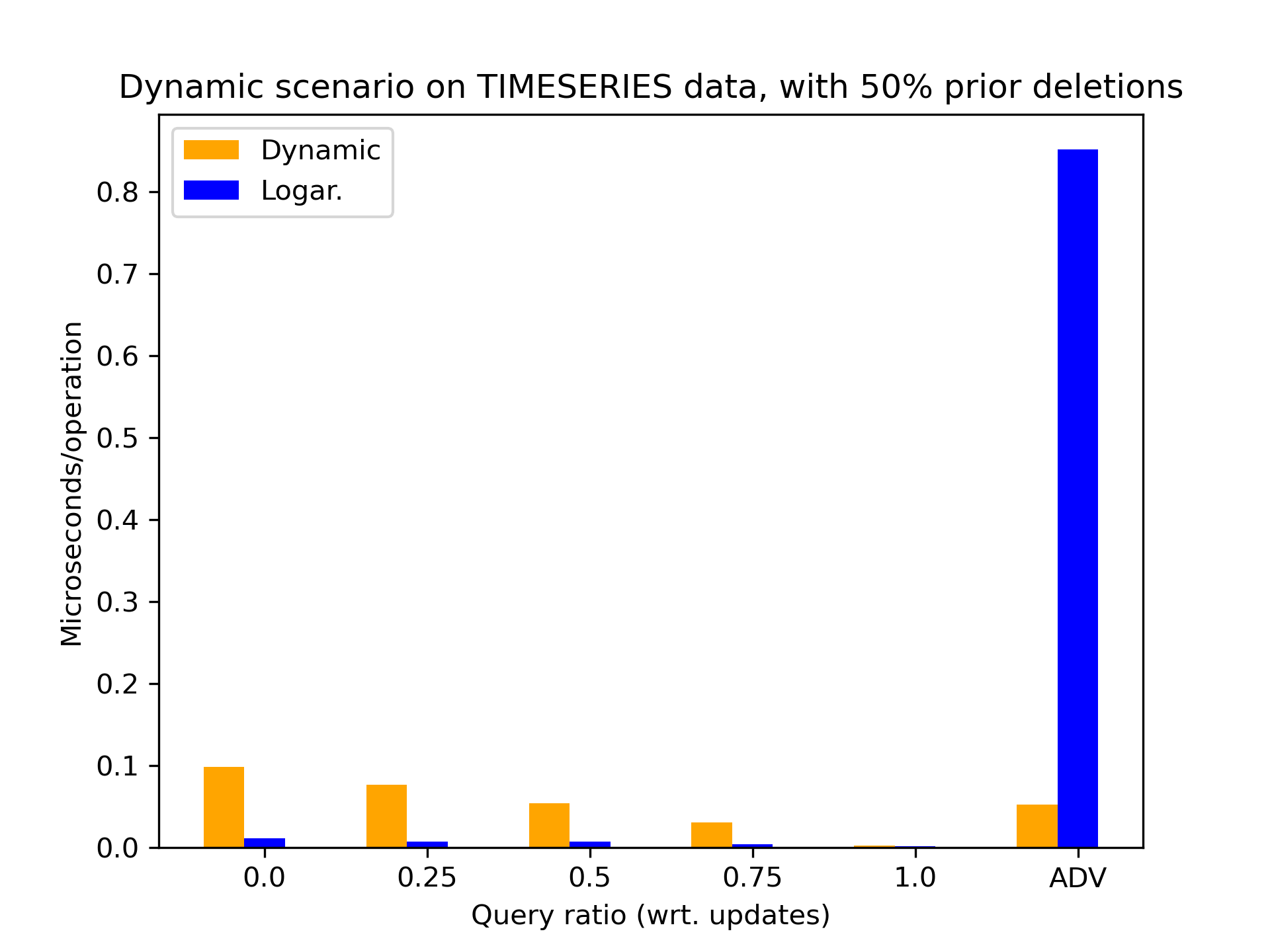}
    \includegraphics[width=0.32\linewidth]{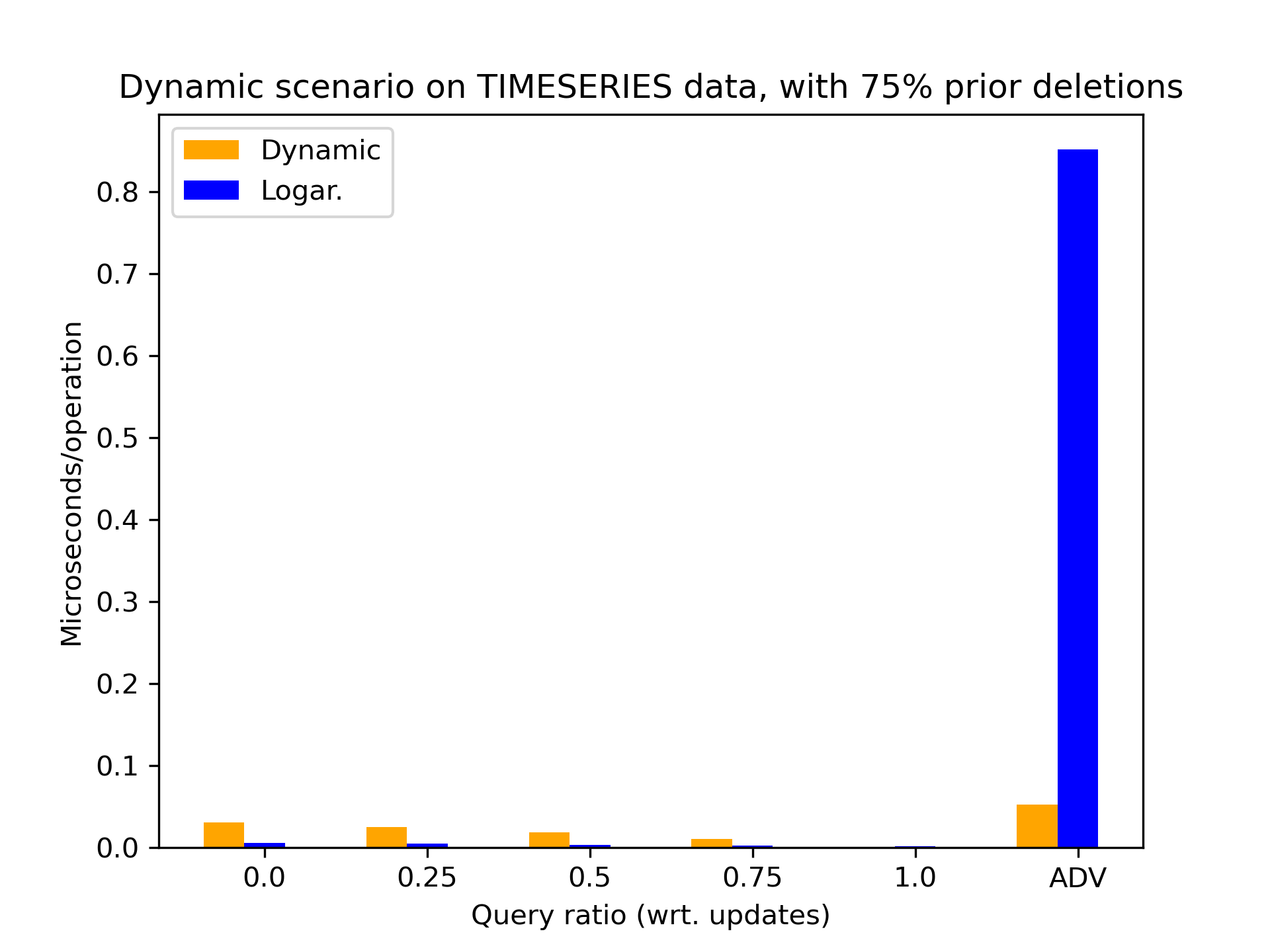}
    \caption{Time per operation in dynamic scenario with varying query ratio and prior deletions on \textsc{Timeseries} data}
    \label{fig:updates_m4}
\end{figure}
\end{document}